\newtheorem{thm}{Theorem}[section]
\newtheorem{conj}[thm]{Conjecture}
\newtheorem{prp}[thm]{Proposition}
\newtheorem{lem}[thm]{Lemma}
\newtheorem{dfn}[thm]{Definition}
\newtheorem{cor}[thm]{Corollary}
\newtheorem{eg}[thm]{Example}
\newtheorem{rk}[thm]{Remark}
\newcommand{\ndnt}{\hspace{7mm}}
\newcommand{\hs}{\hspace{2mm}}
\newcommand{\s}{\vspace{3mm}}
\newcommand{\as}{\vspace{5mm}}
\newcommand{\C}{\mathbb{C}}
\newcommand{\N}{\mathbb{N}}
\newcommand{\R}{\mathbb{R}}
\newcommand{\Z}{\mathbb{Z}}
\newcommand{\bfi}{\textbf{i}}
\newcommand{\bfk}{\textbf{k}}
\newcommand{\bfp}{\textbf{p}}
\newcommand{\bfr}{\textbf{r}}
\newcommand{\bfq}{\textbf{q}}
\numberwithin{equation}{section} 
\begin{document}




{\Large
\begin{center}
{\sc partition zeta functions,\\
multifractal spectra, and\\
tapestries of complex dimensions}
\end{center}
\par}

\s

\begin{center}
Kate~E.~Ellis, Michel~L.~Lapidus,\\ Michael~C.~Mackenzie, and John~A.~Rock
\end{center}

\s
\begin{abstract}
For a Borel measure and a sequence of partitions on the unit interval, we define a multifractal spectrum based on coarse H\"older regularity. Specifically, the coarse H\"older regularity values attained by a given measure and with respect to a sequence of partitions generate a sequence of lengths (or rather, scales) which in turn define certain Dirichlet series, called the {\it partition zeta functions}. The abscissae of convergence of these functions define a multifractal spectrum whose concave envelope is the (geometric) Hausdorff multifractal spectrum which follows from a certain type of Moran construction. We discuss at some length the important special case of self-similar measures associated with weighted iterated function systems and, in particular, certain multinomial measures. Moreover, our multifractal spectrum is shown to extend to a {\it tapestry of complex dimensions} for a specific case of atomic measures.
\end{abstract}


\s

\textsc{Kate E. Ellis}\\
{\tiny \textsc{Department of Mathematics, California State
University, Stanislaus,\\ Turlock, CA} 95382 \textsc{USA} \par}

\textit{E-mail address:} \textbf{kellis1@csustan.edu}

\s

\textsc{Michel L. Lapidus}\\
{\tiny \textsc{Department of Mathematics, University of
California,\\
Riverside, CA} 92521-0135 \textsc{USA} \par}

\textit{E-mail address:} \textbf{lapidus@math.ucr.edu}

\s

\textsc{Michael C. Mackenzie}\\
{\tiny \textsc{Department of Mathematics, California State
University, Stanislaus,\\ Turlock, CA} 95382 \textsc{USA} \par}

\textit{E-mail address:} \textbf{michael.mackenzie@uconn.edu}

\s

\textsc{John A. Rock}\\
{\tiny \textsc{Department of Mathematics, California State
University, Stanislaus,\\ Turlock, CA} 95382 \textsc{USA} \par}

\textit{E-mail address:} \textbf{jrock@csustan.edu}

\s

\begin{itemize}
\item Subject class [2010]: Primary: 11M41, 28A12, 28A80. Secondary: 28A75, 28A78, 28C15
\item Keywords: Fractal string, geometric zeta function, Minkowski dimension, complex dimensions, oscillations, multifractal measure, multifractal spectrum, partition zeta function, tapestry of complex dimensions, regularity values, Cantor set, Hausdorff dimension, Moran construction, Moran set, Besicovitch subset, weighted iterated function systems, self-similar sets, self-similar and multinomial measures.
\item Acknowledgements: The work of the second author (M.~L.~Lapidus) was partially supported by the National Science Foundation under the research grant DMS--0707524.
\end{itemize}

\pagebreak

\tableofcontents

\section{Introduction}\label{intro}

Multifractal analysis is the study of physical, mathematical, dynamical, probabilistic, and statistical concepts in which a whole range of fractals may arise from a single object. Such phenomena are often modeled by measures that have highly irregular concentrations of mass. These richly structured measures are called {\it multifractal measures}, or simply {\it multifractals}, and arise from situations such as, but certainly not limited to, rainfall distribution, turbulence, distribution of galaxies, spatial distribution of earthquakes, internet traffic modeling, and modeling of financial time series. See, for example, \cite{EvMan,Falc,Fed,BM,Man,PF,PeitJS,Sch}.

\ndnt One setting for multifractal analysis that very much pertains to this paper is provided by number theory, specifically the study of $N$-ary (or base-$N$) expansions of real numbers, where $N$ is an integer greater than 1. The set of numbers in the unit interval with $N$-ary expansions containing the digits $0,1,\ldots,N-1$ in proportions given by a probability vector with $N$ components generate a fractal set. For a fixed $N$, the collection of the various fractal sets constructed in this manner provides a multifractal decomposition of the unit interval. A tool used to study the structure of these fractal sets is the Hausdorff dimension and this tool plays an important role in our approach to multifractal analysis. In general, the collection of Hausdorff dimension values determines a {\it multifractal spectrum} which describes the multifractal decomposition of a set (or rather, of a mass distribution). See, for instance, \cite{Bes1,BMP,CM,DekLi,Egg,EvMan,Falc,Fed,GrMauWi,LOW,Ol2,Ol3,Ol4}.

\ndnt Our primary objective here is the determination of multifractal spectra as the abscissae of convergence values for specific collections of Dirichlet series. The abscissa of convergence of a Dirichlet series is analogous to the radius of convergence of a power series and plays an important role in this work and in the theory of complex dimensions of fractal strings. Our technique is motivated by the determination of the Minkowski dimension of fractal strings as abscissae of convergence of geometric zeta functions (which are Dirichlet series themselves, see \cite{LapvF1,LapvF4,LapvF6} and \S\ref{IFS}). This determination allows for the definition of the generalization of Minkowski dimension called {\it complex dimensions} which are used, among other things, in expressions for counting functions and volume formulas in the study of the oscillatory phenomena of fractal strings.

\ndnt In our setting, we take a measure supported on a subset of the unit interval and determine its one-parameter family of {\it partition zeta functions}, indexed by a countably infinite collection of coarse H\"older regularity values (which we call {\it regularity values}), and their abscissae of convergence. Regularity values are the exponents $t$ for which a measure behaves locally like $r^t$, for small $r$, where $r$ is a positive real number that determines scale. In particular, we show that these abscissae of convergence recover classical forms of the geometric and symbolic Hausdorff multifractal spectra in certain cases and Hausdorff dimensions of Besicovitch subsets of self-similar fractals in others.

\ndnt This work, along with \cite{LLVR,LapRo1,Rock} and \cite[\S 13.3]{LapvF6} (which is an exposition of some of the work in those references), marks the beginning of a new theory of complex dimensions for multifractals. In particular, this work greatly expands upon the results presented in \cite{LapRo1} where the partition zeta functions and abscissa of convergence function for a generalized binomial measure supported on the Cantor set are developed and analyzed. Moreover, this paper provides significant strides toward the long-term goal stated (in a different but analogous manner) in \cite{LapRo1} of developing a theory of oscillatory phenomena which are intrinsic to multifractal geometries. This theory would parallel that developed for fractal strings in \cite{LapvF1,LapvF4,LapvF6} but would involve a whole family of partition zeta functions indexed by regularity and their complex dimensions.

\ndnt Other works which examine (from a different perspective) multifractal measures similar to those examined in this paper are \cite{AP,Bes1,Bes2,BMP,CM,DekLi,Egg,EvMan,Falc,Fed,FengLau,LapvF5,Lau,LOW,MinYa,Ol2,Ol3,Ol4}. A variety of other techniques in multifractal analysis can be found in
\cite{EM,Ellis,GrMauWi,Ja1,Ja2,Ja3,JaMey,LapvF5,LVM,LVT,LVV,BM,Man,PF,PeitJS}.
For the theory of complex fractal dimensions, one should consult the works
of M.~L.~Lapidus and M.~van~Frankenhuijsen
\cite{LapvF7,LapvF1,LapvF3,LapvF4,LapvF5,LapvF6} and their extensions with
B.~M.~Hambly, H.~Herichi, J.~L\'evy~V\'ehel, H.~Lu, E.~P.~J.~Pearse, J.~A.~Rock, and S.~Winter, accordingly, in \cite{HL,HerLap1,HerLap2,Lap4,LLVR,LapLu1,LapLu2,LapLu3,LapLu4,LapPe1,LapPe3,LapPe2,LapPeWin,LapRo1,Rock}.

\ndnt The remainder of this paper is organized as follows:

\ndnt \S\ref{IFS} provides a review of the relevant aspects of the theory of fractal strings and complex dimensions. In particular, we recall a connection between the counting function of the lengths of a fractal string and its complex dimensions from \cite{LapvF1,LapvF4}, which motivates the definition of a suitable counting function in our multifractal setting given in \S\ref{Definitions}.

\ndnt \S\ref{IMA} provides a review of the relevant aspects of multifractal analysis, in particular weighted iterated function systems, coarse H\"older regularity $\alpha$, geometric and symbolic Hausdorff multifractal spectra, and Besicovitch subsets of Moran fractals. These notions will play an important role in our approach, particularly in \S\ref{pzfssm}.

\ndnt \S\ref{Definitions} contains the definitions of our main objects of study: {\it $\alpha$-lengths}, {\it partition zeta function}, {\it abscissa of convergence function}, {\it complex dimensions with parameter} $\alpha$,  {\it tapestry of complex of dimensions}, and {\it counting function of the $\alpha$-lengths}.

\ndnt \S\ref{pzfssm} develops our main results regarding the partition zeta functions and abscissa of convergence functions of certain self-similar measures defined by weighted iterated function systems. In particular, connections with some of the well-known results found in \cite{Bes1,CM,Egg,Fed} and a recovery of the multifractal spectrum of the binomial measure on the unit interval (as described for example in \cite{EvMan,Falc,Fed}) are presented.

\ndnt \S\ref{pzfam} develops the partition zeta functions, abscissa of convergence functions, tapestries of complex dimensions, and counting functions for a certain collection of multifractal atomic measures. In particular, exact explicit formulas for the counting functions of the associated $\alpha$-lengths are given in terms of the underlying complex dimensions with regularity $\alpha$. These examples are among the first steps toward a new theory of complex dimensions and oscillatory phenomena for multifractals.

\ndnt \S\ref{conclusion} closes the paper with a discussion of related works in progress and ideas for further development of multifractal analysis via zeta functions.  

\section{Fractal Strings}\label{IFS}

\begin{figure}
\epsfysize=6.5cm\epsfbox{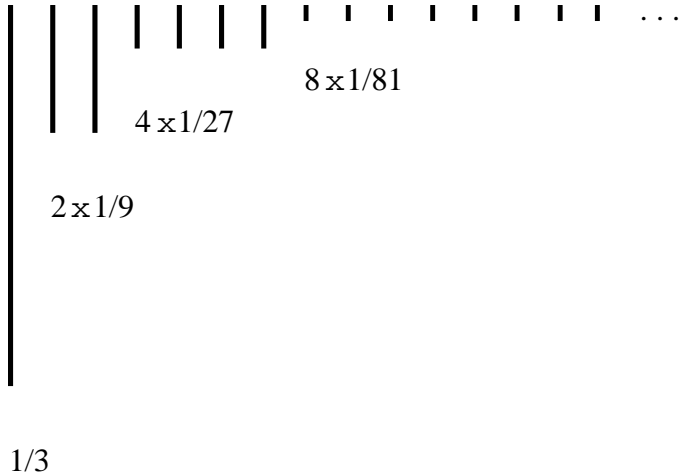}
    \caption{\textit{The lengths of the Cantor string.}}
\end{figure}

A brief review of fractal strings, geometric zeta functions and complex dimensions
(all of which are defined below) is in order. Results on fractal strings can be
found in \cite{HL,HeLap,Lap1,Lap2,Lap3,LLVR,LapLu1,LapLu2,LapLu3,LapLu4,LapPo1} and
results on geometric zeta functions and complex dimensions can be
found in \cite{HL,HerLap1,HerLap2,Lap4,LLVR,LapLu1,LapLu2,LapLu3,LapLu4,LapPe1,LapPe3,LapPe2,LapPeWin,LapRo1,LapvF7,LapvF1,LapvF3,LapvF4,LapvF5,LapvF6,Rock}.

\ndnt The primary references for the theory of complex dimensions of fractal strings are the monographs \cite{LapvF1} and \cite{LapvF4}. A significantly expanded second edition of \cite{LapvF4} is forthcoming in \cite{LapvF6}, but for the convenience of the reader, we will mostly refer to \cite{LapvF4} throughout the paper (except when required otherwise).

\subsection{Fractal Strings and Minkowski Dimension}\label{FSMD}

The primary example of a fractal string used throughout this work is the well-known Cantor string $\Omega_{CS}$ (the complement of the classical ternary Cantor set in the unit interval $[0,1]$). The first several lengths of the Cantor string appear in Fig.~1 and the Cantor string is discussed in much more detail in \S\ref{CS} below. 

\ndnt Fractal strings are defined as follows:

\begin{dfn}\label{dfn:fs}
A \underline{fractal string} $\Omega$ is a bounded open subset of
the real line.
\end{dfn}

\ndnt As in \cite{LLVR,LapvF1,LapvF4}, we distinguish a fractal string $\Omega$ from
its sequence of lengths $\mathcal{L}$ (with multiplicities). The sequence \( \mathcal{L}=\{\ell_i\}_{i=1}^{\infty}\) is the nonincreasing sequence of lengths of the disjoint open intervals \((a_i,b_i),\) where \(\Omega = \cup_{i=1}^{\infty}(a_i,b_i).\) More specifically, in this paper we follow the convention established in \cite{LapvF1,LapvF4} and generally do not consider the case where $\mathcal{L}$ comprises a finite collection of lengths. This is done in order to avoid discussion of trivial counterexamples to established results. We will, however, sometimes indicate what happens in the trivial case when $\mathcal{L}$ consists of finitely many lengths. Thus, throughout this paper, $\mathcal{L}$ typically comprises an infinite collection of lengths. We view $\mathcal{L}$  either as a decreasing sequence of positive \textit{distinct} lengths denoted $\{l_n\}_{n=1}^{\infty}$ along with their multiplicities $\{m_n\}_{n=1}^{\infty}$, or as a nonincreasing sequence of (possibly equal) positive lengths denoted $\{\ell_i\}_{i=1}^{\infty}$ and repeated according to their multiplicities.

\ndnt A generalization of Minkowski dimension called \textit{complex
dimension} (defined in \S\ref{CDCF} below) is used to study the properties of certain fractal
subsets of $\mathbb{R}$. For instance, the boundary of a fractal
string $\Omega$, denoted $\partial\Omega$, is often fractal and can be studied using
complex dimensions. Throughout this text, a fractal string
$\Omega$ is taken to be an open subset of the unit interval $[0,1]$
with $\mathcal{L}$ as its associated sequence of lengths.

\ndnt The volume of the (inner) tubular neighborhood of radius
$\varepsilon$ of the boundary $\partial\Omega$ of a fractal string $\Omega$ is
\[
V(\varepsilon)=|\{x \in \Omega \mid
dist(x,\partial\Omega)<\varepsilon\}|,
\]
where $|\cdot|$ denotes the Lebesgue measure (length). The
Minkowski dimension of $\partial\Omega$, or simply of
$\mathcal{L}$, is\footnote{This fractal dimension is also often called ``box dimension'' in the applied literature.} 
\[
\dim_M(\partial\Omega)=D:=\inf \{\alpha \geq 0 \mid
\limsup_{\varepsilon \rightarrow
0^{+}}V(\varepsilon)\varepsilon^{\alpha-1} <\infty \}.
\] Note that one may refer directly to the Minkowski dimension of the sequence of lengths $\mathcal{L}$ because $V(\varepsilon)$ can be shown to only depend on $\mathcal{L}$ (see \cite{LapPo1,LapvF4}).

\ndnt In \cite{Lap1}, it is shown that if $F = \partial\Omega$ is the boundary of a bounded open set
$\Omega$, then \(d-1 \leq \dim_H(F) \leq \dim_M(F) \leq d\) where
$d$ is the Euclidean dimension of the ambient space, $\dim_H(F)$ is
the Hausdorff dimension of $F$ and $\dim_M(F)$ is the Minkowski
dimension of $F$. We consider the case $d=1$ in this paper, thus
\[
0 \leq
\dim_H(F) \leq \dim_M(F) \leq 1.
\]
If $F$ is self-similar and further satisfies the Open Set Condition (defined in \S\ref{IMA}), it is well known that $\dim_H(F)=\dim_M(F)$. (See, e.g., \cite{Mor,Hut} and \cite[Ch. 9]{Falc}.)

\subsection{Complex Dimensions and Counting Functions}\label{CDCF}

The following equalities describe a relationship between the
Minkowski dimension of a fractal string $\Omega$ (taken to be the
Minkowski dimension of $\partial\Omega$) and the sum of each of its
lengths with exponent $\gamma \in \mathbb{R}$. This relationship with suitably defined Dirichlet series (later called {\it geometric zeta functions} in \cite{LapvF7,LapvF1,LapvF4}) was first
observed in \cite{Lap2} using a key result of A.~S.~Besicovitch and S.~J.~Taylor
\cite{BesTa}, and a direct proof can be found in \cite[pp.~17--18]{LapvF4}
(and can also be found in \cite{LapvF6}). We have\footnote{Strictly speaking, we must assume that $\mathcal{L}$ consists of infinitely many nonzero lengths; otherwise, $\dim_M(\partial\Omega)=\max\{0,D_{\mathcal{L}}\}$.}

\begin{equation}\label{eq:minkdimabsconv}
\dim_M(\partial\Omega)=D=D_{\mathcal{L}}:=\inf \left\{\gamma \in
\mathbb{R} \mid \sum_{i=1}^{\infty}\ell_{i}^{\gamma} <\infty
\right\}.
\end{equation}

Here, $D_{\mathcal{L}}$ can be considered to be the abscissa of
convergence of the Dirichlet series
\(\sum_{i=1}^{\infty}\ell_{i}^{s}\), where $s \in \mathbb{C}$. This
Dirichlet series is the \textit{geometric zeta function} of
$\mathcal{L}$; it is the function that has been generalized in
\cite{LLVR,LapRo1,Rock} using notions from multifractal analysis
and will in part motivate our proposed approach to multifractal zeta
functions.

\begin{dfn}\label{def:gzf} The \underline{geometric zeta function}
of a fractal string $\Omega$ with lengths $\mathcal{L}$ is
\begin{equation}\label{eq:gzf}
\zeta_{\mathcal{L}}(s):=\sum_{i=1}^{\infty}\ell_{i}^{s}=
\sum_{n=1}^{\infty}m_{n}l_{n}^{s}, 
\end{equation}
where $\textnormal{Re}(s)>D_{\mathcal{L}}$.
\end{dfn}
To consider lengths $\ell_i=0$, the convention $0^s=0$ for all $s
\in \mathbb{C}$ is used.

\ndnt One can extend the notion of the Minkowksi dimension of a fractal string
$\Omega$ to complex values by considering the poles of a meromorphic extension of
$\zeta_{\mathcal{L}}$. In general, $\zeta_{\mathcal{L}}$ may not
have a meromorphic extension to all of $\C$, yet one may consider suitable closed
regions $W \subset \C$ where $\zeta_{\mathcal{L}}$ has a meromorphic extension, and
collect the corresponding poles in these regions.

\ndnt Assume that $\zeta_{\mathcal{L}}$ has a meromorphic extension to a connected
open neighborhood of $W$ and there is no pole of
$\zeta_{\mathcal{L}}$ on $\partial W$. By a slight abuse of notation, $\zeta_{\mathcal{L}}$ denotes the geometric zeta function and its (necessarily unique) meromorphic extension to $W$.

\begin{rk}\label{rk:window}
\textnormal{More specifically, in \cite{LapvF4,LapvF6}, the `window' $W$ is chosen to be the closed subset of $\C$ that is to the right of the `screen' $S=\partial W$, defined as the graph (with the $x$ and $y$ axes interchanged) of a bounded and Lipschitz real-valued function on $(-\infty,D_{\mathcal{L}}]$; see \cite[\S 5.3]{LapvF4}.}
\end{rk}

\begin{dfn}\label{def:cd}
The set of \textnormal{(}visible\textnormal{)} \underline{complex dimensions} of a fractal string
$\Omega$ with lengths $\mathcal{L}$ is
\begin{equation}\label{eq:cd}
\mathcal{D}_{\mathcal{L}}(W):=\{\omega \in W \mid
\zeta_{\mathcal{L}} \textnormal{ has a pole at } \omega\}. 
\end{equation}
Furthermore, if $W=\C$, then $\mathcal{D}_{\mathcal{L}}:=\mathcal{D}_{\mathcal{L}}(\C)$ is simply called the set of complex dimensions of $\mathcal{L}$.
\end{dfn}

\ndnt The following proposition, which is a special case of \cite[Thm. 5.10]{LapvF4} (also found in \cite{LapvF6}), uses the complex dimensions $\mathcal{D}_{\mathcal{L}}(W)$ of a fractal string in a formula for the {\it geometric counting function} of $\mathcal{L}$, denoted $N_{\mathcal{L}}(x)$ and defined by
\[
N_{\mathcal{L}}(x) := \#\{ i \geq 1 \mid \ell_i^{-1} \leq x\} = \sum_{n \geq 1 \mid l_n^{-1} \leq x} m_n,
\]
where, as above, $\{\ell_i\}_{i=1}^{\infty}$ denotes the nonincreasing sequence of lengths of $\mathcal{L}$ repeated according to their multiplicities, whereas $\{l_n\}_{n=1}^{\infty}$ denotes the decreasing sequence of distinct lengths of $\mathcal{L}$ with associated multiplicities given by $\{m_n\}_{n=1}^{\infty}$.

\begin{prp}\label{prp:countingfunction}
Let $\Omega$ be a fractal string with lengths $\mathcal{L}$ such that $\mathcal{D}_{\mathcal{L}}(W)$ consists entirely of simple poles. Then, under certain mild growth conditions on $\zeta_{\mathcal{L}}$ \textnormal{(}namely, if $\zeta_{\mathcal{L}}$ is languid of a suitable order, in the sense of \textnormal{\cite{LapvF4,LapvF6}}\textnormal{)}, we have
\begin{equation}\label{eq:gcf}
N_{\mathcal{L}}(x) = \sum_{\omega \in \mathcal{D}_{\mathcal{L}}(W)}
\frac{x^{\omega}}{\omega}\textnormal{res}(\zeta_{\mathcal{L}}(s);\omega) + \{\zeta_{\mathcal{L}}(0)\} + R(x), 
\end{equation}
where $R(x)$ is an error term of small order and the term in braces is included only if $0 \in W\backslash \mathcal{D}_{\mathcal{L}}(W)$.
\end{prp}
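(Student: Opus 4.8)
The plan is to derive the explicit formula \eqref{eq:gcf} from a Mellin-type integral representation of $N_{\mathcal{L}}$ followed by a contour shift that sweeps the poles of the integrand across the window $W$, exactly as in the derivation of the explicit formulas of \cite{LapvF4,LapvF6}. First I would start from the truncated Perron (Mellin inversion) formula
\[
\frac{1}{2\pi i}\int_{c-i\infty}^{c+i\infty}\frac{y^{s}}{s}\,ds = \begin{cases} 1, & y>1,\\ 0, & 0<y<1,\end{cases}
\]
valid for any $c>0$, with value $\tfrac12$ at $y=1$. Applying this with $y=x\ell_i$ and summing over $i$, for a fixed $c>D_{\mathcal{L}}$ the series $\sum_i \ell_i^{c}$ converges by \eqref{eq:minkdimabsconv}, so one may interchange the sum and the integral to obtain the representation
\[
N_{\mathcal{L}}(x) = \frac{1}{2\pi i}\int_{c-i\infty}^{c+i\infty}\frac{x^{s}}{s}\,\zeta_{\mathcal{L}}(s)\,ds,
\]
at least for $x$ not of the form $\ell_i^{-1}$ (at the jump points the symmetric integral records the average, which does not affect the statement). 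The languidity hypothesis on $\zeta_{\mathcal{L}}$ is precisely what guarantees convergence of the vertical integral and legitimacy of the interchange.

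Next I would replace the vertical line $\{\mathrm{Re}(s)=c\}$ by the screen $S=\partial W$. Since $\zeta_{\mathcal{L}}$ extends meromorphically to a neighborhood of $W$ and has no pole on $S$, the difference between the two contour integrals equals $2\pi i$ times the sum of the residues of $\frac{x^{s}}{s}\zeta_{\mathcal{L}}(s)$ at its poles inside $W$. The languidity of $\zeta_{\mathcal{L}}$ (polynomial growth along horizontal lines and along $S$) ensures that the connecting horizontal segments contribute nothing in the limit and that the integral along $S$, which becomes the remainder $R(x)$, is of the asserted small order.

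Finally I would collect residues. The integrand has poles precisely at the complex dimensions $\omega\in\mathcal{D}_{\mathcal{L}}(W)$ coming from $\zeta_{\mathcal{L}}$, together with a possible pole at $s=0$ coming from the factor $1/s$. Under the hypothesis that every $\omega\in\mathcal{D}_{\mathcal{L}}(W)$ is a simple pole, the residue at $\omega$ is
\[
\textnormal{res}\!\left(\frac{x^{s}}{s}\zeta_{\mathcal{L}}(s);\omega\right) = \frac{x^{\omega}}{\omega}\,\textnormal{res}(\zeta_{\mathcal{L}}(s);\omega),
\]
which produces the main sum; had higher-order poles been allowed, one would instead obtain terms involving powers of $\log x$, so the simple-pole assumption is exactly what yields the clean form. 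If $0\in W$ and $0\notin\mathcal{D}_{\mathcal{L}}(W)$, then $1/s$ contributes the simple pole at the origin with residue $\zeta_{\mathcal{L}}(0)$, accounting for the bracketed term; if $0\notin W$ this term is absent, and if $0$ is itself a complex dimension it is already subsumed in the main sum.

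The main obstacle is the second step: making the contour shift rigorous and extracting the correct order for $R(x)$. This is the technical heart of the explicit formulas in \cite{LapvF4,LapvF6}, where the languidity conditions are calibrated precisely so that the horizontal contributions vanish and the screen integral is controlled. Because the geometric counting function $N_{\mathcal{L}}$ is of ``level zero'' (unsmoothed), passing from the absolutely convergent regime to a genuine pointwise formula with an error term of small order requires some care, e.g. a languidity hypothesis of suitable order or a mild averaging argument. Since the present statement is explicitly a special case of \cite[Thm.~5.10]{LapvF4}, I would at this point invoke that theorem, having verified that the simple-pole and languidity hypotheses are exactly what reduce its general residue expansion to the displayed form \eqref{eq:gcf}.
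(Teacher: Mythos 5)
Your proposal is correct and is essentially the paper's own approach: the paper gives no independent proof of this proposition, presenting it simply as a special case of \cite[Thm.~5.10]{LapvF4}, and your Perron-formula/contour-shift/residue-collection sketch is an accurate reconstruction of how that cited theorem is proved, ending with the same appeal to it. Two harmless slips worth noting: in the terminology of \cite{LapvF4} the counting function $N_{\mathcal{L}}$ sits at level $k=1$ (not ``level zero''), and if $0$ were itself a simple complex dimension then $x^{s}\zeta_{\mathcal{L}}(s)/s$ would have a double pole at the origin producing a $\log x$ term rather than being ``subsumed in the main sum''---a case the proposition's displayed formula implicitly excludes in any event.
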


\begin{rk}\label{rk:simple}
\textnormal{It is not necessary for the poles to be simple, but then the explicit formula for $N_{\mathcal{L}}$ is slightly more complicated to state; see \cite[\S 6.2.1]{LapvF4} for details.}
\end{rk}

\begin{rk}\label{rk:stronglylanguid}
\textnormal{If a fractal string $\Omega$ is {\it strongly languid}, then according to Theorem 5.14 of \cite{LapvF4}, Eq.~\eqref{eq:gcf} holds with no error term (i.e., $R(x) \equiv 0$). Examples of strongly languid fractal strings are self-similar strings (see \cite[Chs. 2 \& 3]{LapvF4}). In particular, the results for the counting functions to be presented in \S\ref{pzfam} follow from Theorem 5.14 of \cite{LapvF4}.}
\end{rk}

\ndnt Before continuing to the next section on multifractal analysis, consider the following results on classic examples of fractal strings---the Cantor string and the Fibonacci string. The Cantor string plays an important role throughout this paper and the Fibonacci string is recovered in Example \ref{eg:scottroby} below.

\subsection{The Cantor String and the Fibonacci String}\label{CS}

The {\it Cantor string} is defined as the open set $\Omega_{CS}$ that is
composed of all the deleted middle-third open intervals in the usual
construction of the classical ternary Cantor set. Hence, its
boundary $\partial\Omega_{CS}$ is simply the ternary Cantor set itself.
An approximation of the Cantor string appears in Fig.~2.

\begin{figure}
\epsfysize=1.25cm\epsfbox{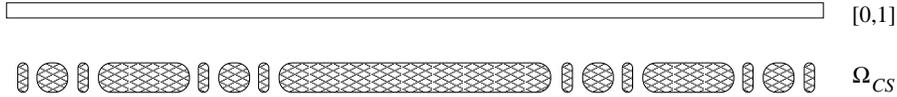}
    \caption{\textit{Approximation of the Cantor string $\Omega_{CS}$.}}
\end{figure}

\ndnt The distinct lengths of the Cantor string are given by $l_n = 3^{-n}$ with multiplicity $m_n = 2^{n-1}$ for every positive integer $n$; see Fig.~1. Hence, for \(\textnormal{Re}(s) > \log_3{2},\)
\begin{equation}\label{eq:gzfcs}
\zeta_{\mathcal{L}}(s)= \zeta_{CS}(s) =
\sum_{n=1}^{\infty}2^{n-1}3^{-ns} = \frac{3^{-s}}{1-2 \cdot 3^{-s}}. 
\end{equation}
Upon meromorphic continuation, we see that the last equality above
holds for all $s \in \mathbb{C}$; hence, for $j:=\sqrt{-1}$,
\begin{equation}\label{eq:cdcs}
\mathcal{D}_{\mathcal{L}} = \mathcal{D}_{CS} = \left\{ \log_3{2} +
\frac{2j z\pi}{\log3}  \mid  z \in \mathbb{Z} \right\} 
\end{equation}
and these poles are simple.

\ndnt In order to illustrate Proposition \ref{prp:countingfunction} above,
we give an exact formula for the counting function of the Cantor string, $N_{CS}$,
in terms of the complex dimensions $\mathcal{D}_{CS}$ (see \cite[Eq.~(1.31),~p.22]{LapvF4}). For all $x>1$, we have
\begin{equation*}\label{eq:gcfcs}
N_{CS}(x) = \frac{1}{2\log3} \sum_{z=-\infty}^{\infty}
\frac{x^{D+j zp}}{D+j zp}-1, 
\end{equation*}
where $D= \log_{3}2$ is the Minkowski dimension of the Cantor string (technically, of the Cantor set) and $p= 2\pi /\log3$ is its oscillatory period. This formula for $N_{CS}(x)$ is a special case of the explicit formula for the geometric counting function of general fractal strings provided by Theorems 5.10 and 5.14 in \cite{LapvF4}. Note that in light of this formula, $N_{CS}+1$ can be written as the product of $x^D$ and a multiplicatively periodic (or `log-periodic') function of $x$.

\ndnt Another example of a fractal string which is relevant to this paper is the Fibonacci string. The geometric zeta function of the Fibonacci string $\zeta_{\textnormal{Fib}}(s)$ is recovered as a special case in Example \ref{eg:scottroby}. (See \S 2.3.2 of \cite{LapvF4} for the development of the Fibonacci string.) The sequence of lengths for the Fibonacci string is
\begin{equation*}\label{eq:lfib}
\mathcal{L}_{\textnormal{Fib}}= \left\{2^{-n} \mid 2^{-n} \hs \textnormal{has multiplicity } F_{n+1}, n \in \N   \right\}, 
\end{equation*}
where $\N$ is the set of nonnegative integers and $F_n$ denotes the $n$th Fibonacci number. (Recall that $F_n$ is defined by the recurrence relation: $F_{n+1}=F_n+F_{n-1}$, and $F_0=0, F_1=1$.)

\ndnt Furthermore, the geometric zeta function $\zeta_{\textnormal{Fib}}$ of the Fibonacci string $\mathcal{L}_{\textnormal{Fib}}$ is given by
\begin{equation}\label{eq:gzffib}
\zeta_{\textnormal{Fib}}(s)=\sum_{n=0}^{\infty}F_{n+1}2^{-ns}=\frac{1}{1-2^{-s}-4^{-s}}. 
\end{equation}
The complex dimensions $\mathcal{D}_{\textnormal{Fib}}$ can be found by solving the quadratic equation 
\begin{equation*}\label{eq:solvefib}
2^{-2\omega}+2^{-\omega}=1, \ndnt \omega \in \C. 
\end{equation*}
Thus,
\begin{equation}\label{eq:cdfib}
\mathcal{D}_{\textnormal{Fib}}=\left\{D+jzp \mid z \in \Z \right\} \cup \left\{-D+j(z+1/2)p \mid z \in \Z \right\}, 
\end{equation}
where $\phi=(1+\sqrt{5})/2$ is the Golden Ratio, $D=\log_2{\phi}$, and $p = 2\pi/\log{2}$. In \S 2.3.2 of \cite{LapvF4}, these complex dimensions are used to determine a volume formula for the tubular neighborhood of the Fibonacci string.

\s

\ndnt We close \S\ref{CS} by noting that both the Cantor string and the Fibonacci string are {\it self-similar} fractal strings, in the sense of \cite[Chs.~2 \& 3]{LapvF4}.

\ndnt In the following section, we introduce some of the relevant theory and results on multifractal analysis currently available in the literature.

\section{Multifractal Analysis}\label{IMA}

Approaches to multifractal analysis which use multifractal measures closely related to those considered in this paper, and recalled below accordingly, can be found in \cite{AP,BarDem,Bes1,Bes2,BMP,CM,DekLi,Egg,EvMan,Falc,FengLau,LapvF5,Lau,LOW,MinYa,Ol2,Ol3,Ol4}. A variety of other techniques in multifractal analysis (incorporating wavelets, for instance) can be found in \cite{EM,Ellis,Ja1,Ja2,Ja3,JaMey,LapvF5,LVM,LVT,LVV,BM,Man,PF,PeitJS}.\footnote{The perspective adopted in all of these references is quite different, however, from the one adopted here, which consists in working with suitably defined partition (or multifractal) zeta functions.}  A common setting for multifractal analysis is that provided by self-similar measures defined by a probability vector and an Iterated Function System (IFS, or ``map specified Moran construction'' as in \cite{CM}) which satisfies the Open Set Condition (OSC, see \cite{Mor,Hut} and \cite{Falc}). We construct such measures in this section by following the development found in \cite[\S 1]{Ol4} and investigate the multifractal structure of these and other types of measures throughout this paper.

\subsection{Iterated Function Systems and Self-Similar Measures}\label{IFSSSM}

Multifractal analysis of measures is the study of the fractal structure of the sets $E_t$ of points $x\in E$ for which the measure $\mu(B(x,r))$ of the closed ball $B(x,r)$ with center $x$ and radius $r$ satisfies
\[
\lim_{r \rightarrow 0^+}\frac{\log{\mu(B(x,r))}}{\log{r}} = t,
\]
where $t\geq0$ is {\it local H\"older regularity} and $E$ is the support of $\mu$. That is, from this traditional perspective, multifractal analysis is the study of the fractal geometry of the sets $E_t$ where a Borel measure $\mu$ behaves locally like $r^t$.

\ndnt The setting for multifractal analysis provided by a self-similar measure uniquely defined by an IFS which satisfies the OSC and a probability vector is developed as follows. For positive integers $N$ and $d$ and each $i \in \{1,\dots,N\}$, let $S_i:\R^d \rightarrow \R^d$ be a contracting similarity with scaling ratio (or Lipschitz constant) $r_i \in (0,1)$. Let $\bfr = (r_1,\dots,r_N)$, and let $\bfp = (p_1,\ldots,p_N)$ be a probability vector. The collection of contracting similarities $\{S_i\}_{i=1}^{N}$ is said to satisfy the OSC if there exists a nonempty, bounded, and open set $V \subset \R^d$ such that $S_i(V) \subset V$ and $S_i(V) \cap S_k(V) = \emptyset$ for all $i \neq k$ with $i,k\in \{1,\ldots,N\}$. We note that in this paper, as with many others on multifractal analysis, the collection of functions $\{S_i\}_{i=1}^N$ is assumed to satisfy the OSC. However, \cite{Ol4} and \cite{STZ}, for instance, do not require the OSC to be satisfied.

\ndnt The multifractal measures for our setting are constructed as follows. Define the set $E$ and the self-similar measure $\mu$ (supported on $E$) to be the unique nonempty compact subset of $\R^d$ and the unique Borel probability measure which satisfy, respectively, $E = \bigcup_{i=1}^{N} S_i(E)$ and $\mu = \sum_{i=1}^{N}  p_i \mu \circ S_i^{-1}$ (see \cite{Hut}). In particular, the measures considered in Proposition \ref{prp:geomsym} and \S\ref{pzfssm} below are defined in this manner. The Hausdorff dimension of the the support $E$ is given by the solution of the Moran equation (see \cite{Mor})
\begin{equation}
\sum_{i=1}^N r_i^s = 1 \ndnt (s>0). \label{eq:moran}
\end{equation}

\subsection{Multifractal Spectra}\label{MS}

The multifractal spectra of Definitions \ref{dfn:ghms} and \ref{dfn:shms} along with Proposition \ref{prp:geomsym} below are presented as found in \cite{Ol4}, as well as the corresponding references therein. See especially the work of R.~Cawley and R.~D.~Mauldin in \cite{CM}.

\begin{dfn}\label{dfn:ghms}
The \underline{geometric Hausdorff multifractal spectrum} $f_g$ of a Borel measure $\mu$ \textnormal{(}on a Borel measurable subset of $\R^d$\textnormal{)} supported on $E$ is given by
\begin{equation}
f_g(t):=\dim_H(E_t), \label{eq:ghms}
\end{equation}
where $t\geq0$, $\dim_H$ is the Hausdorff dimension, and
\begin{equation}
E_t:=\left\{x \in E \mid \lim_{r \rightarrow 0^+}\frac{\log{\mu(B(x,r))}}{\log{r}} =t\right\}. \label{eq:ghmssubset}
\end{equation}
\end{dfn}

\ndnt The geometric Hausdorff multifractal spectrum $f_g$ is difficult to compute for general self-similar measures. Thus, the symbolic multifractal spectrum $f_s$ defined in terms of symbolic dynamics are often considered instead. This symbolic multifractal spectrum $f_s$, defined below, also serves as an analog to the approach to multifractal analysis developed in this paper.

\ndnt For a nonnegative integer $n$ and an integer $N \geq 2$, let
\[
\Lambda^* := \left\{\bfi = i_1\ldots i_n \mid k \in \N^*, i_k \in \left\{1,\ldots,N\right\}\right\}
\]
and
\[
\Lambda^{\N} := \left\{\bfi = i_1i_2\ldots \mid k \in \N^*, i_k \in \left\{1,\ldots,N\right\}\right\},
\]
where $\N = \{0,1,2,\ldots\}$. For $\bfi \in \Lambda^{\N}$, let $\bfi|n = i_1\ldots i_n$ be the truncation of $\bfi$ at the $n$th term. For $\bfi=i_1\ldots i_n \in \Lambda^*$, we let $S_{\bfi} := S_{i_1} \circ \ldots \circ S_{i_n}$ and $E_{\bfi} := S_{\bfi}(E)$. Likewise, let $p_{\bfi} := p_{i_1} \cdots p_{i_n}$ and $r_{\bfi} := r_{i_1} \cdots r_{i_n}$. Finally, let
$\pi : \Lambda^{\N} \rightarrow \R^d$ be defined by $\{\pi(\bfi)\} := \cap_{n=0}^{\infty} E_{\bfi|n}$.

\begin{dfn}\label{dfn:shms}
The \underline{symbolic Hausdorff multifractal spectrum} $f_s$ of a self-similar measure $\mu$ constructed as above is given by
\begin{equation}
f_s(t):=\dim_H\left(\pi\left\{\textnormal{\bfi} \in \Lambda^{\N}\mid \lim_{n \rightarrow \infty}\frac{\log{p_{\textnormal{\bfi}|n}}}{\log{r_{\textnormal{\bfi}|n}}} =t\right\}\right) \label{eq:shms}
\end{equation}
for $t\geq0$. \textnormal{(}Here and henceforth, given $A \subset \R^d$, $\dim_H(A)$ denotes the Hausdorff dimension of $A$.\textnormal{)}
\end{dfn}

\ndnt The function $f_s(t)$ is usually easier to analyze than $f_g(t)$. Define $b: \R \rightarrow \R$ by
\begin{equation}\label{eq:defb}
\sum_{i=1}^N p_i^q r_i^{b(q)} =1,
\end{equation}
and let $b^*: \R \rightarrow \R \cup \{-\infty\}$ be the Legendre transform of $b$. That is, for $t \in \R$,
\begin{equation}
b^*(t) := \inf_{q \in \R} (tq+b(q)).
\end{equation}

\ndnt When the OSC is satisfied, we have the following proposition (see, for instance, \cite{AP,CM,Ol4}).

\begin{prp}\label{prp:geomsym}
Let $\mu$ be the unique self-similar measure on $\R^d$ defined, as above, by the IFS associated with $\{S_i\}_{i=1}^{N}$ which satisfies the Open Set Condition and weighted by the probability vector $\textnormal{\bfp}$. Then
\begin{equation}\label{eq:ghmsshms}
f_g(-b'(q)) = f_s(-b'(q)) = b^*(-b'(q)), 
\end{equation}
where $q \in \R$ and $b'$ is the derivative of $b$ \textnormal{(}assumed to exist here\textnormal{)}. Moreover, for the support $E$ of the measure $\mu$, we have
\begin{equation}\label{eq:dimsupp}
\dim_H(E) = b(0) = b^*(-b'(0)). 
\end{equation}
\end{prp}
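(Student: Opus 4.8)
The plan is to develop the thermodynamic (multifractal) formalism and to identify the Legendre transform $b^*$ as the common value of $f_g$ and $f_s$ along the curve $t = -b'(q)$; throughout, fix $q \in \R$ and set $t := -b'(q)$. First I would record the analytic properties of $b$. Applying the implicit function theorem to the defining relation $\sum_{i=1}^N p_i^q r_i^{b(q)} = 1$ shows that $b$ is real-analytic, and differentiating this relation gives
\begin{equation*}
b'(q) = -\frac{\sum_i p_i^q r_i^{b(q)} \log p_i}{\sum_i p_i^q r_i^{b(q)} \log r_i},
\end{equation*}
while a second differentiation yields $b'' \geq 0$, so $b$ is convex. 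Convexity forces the infimum defining $b^*(t)$ to be attained precisely where $t = -b'(q)$, giving the Legendre identity $b^*(t) = qt + b(q)$.

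The key device is the auxiliary probability vector $\mathbf{q} = (q_1,\ldots,q_N)$ with $q_i := p_i^q r_i^{b(q)}$ (these sum to $1$ by the definition of $b$) and the self-similar measure $\nu_q$ obtained from the same IFS weighted by $\mathbf{q}$. Pulling $\nu_q$ back to the Bernoulli measure on $\Lambda^{\N}$ with weights $\mathbf{q}$ and applying the strong law of large numbers to the i.i.d.\ sequence of digit selections, I would show that for $\nu_q$-almost every $\bfi$,
\begin{equation*}
\lim_{n\to\infty} \frac{\log p_{\bfi|n}}{\log r_{\bfi|n}} = \frac{\sum_i q_i \log p_i}{\sum_i q_i \log r_i} = t, \qquad \lim_{n\to\infty} \frac{\log q_{\bfi|n}}{\log r_{\bfi|n}} = \frac{\sum_i q_i \log q_i}{\sum_i q_i \log r_i} = b^*(t),
\end{equation*}
the last equality following from $\log q_i = q\log p_i + b(q)\log r_i$ together with the formula for $b'(q)$ above. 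Thus $\nu_q$ is carried, up to a null set, by the symbolic level set defining $f_s(t)$, and it has uniform local dimension $b^*(t)$ there.

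Next I would transfer these symbolic statements to genuine geometric ones, which is where the Open Set Condition does the work. Under the OSC the coding map $\pi$ is essentially injective, each cylinder $E_{\bfi|n} = S_{\bfi|n}(E)$ has diameter comparable to $r_{\bfi|n}$, and the bounded-overlap property lets one sandwich $\mu(B(x,r))$ between constant multiples of $p_{\bfi|n}$ (and $\nu_q(B(x,r))$ between multiples of $q_{\bfi|n}$) for the generation $n$ with $r_{\bfi|n} \asymp r$. Consequently the symbolic local dimensions become genuine local dimensions, yielding $f_g(t) = f_s(t)$, and the mass distribution principle (Billingsley's lemma) applied to $\nu_q$ gives the lower bound $\dim_H(E_t) \geq b^*(t)$.

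I expect the main obstacle to be the matching upper bound $\dim_H(E_t) \leq b^*(t)$. The plan is to cover $E_t$ by the cylinders of the stopping generation at which the diameter first drops below $\delta$, and to bound the Hausdorff sum $\sum (\operatorname{diam} E_{\bfi})^{b^*(t)}$ using the inequality $qs + b(q) \geq b^*(s)$ valid for all $s$, together with a Chernoff-type estimate on the number of cylinders carrying the prescribed regularity $p_{\bfi|n} \approx r_{\bfi|n}^{\,t}$; controlling this sum uniformly in $\delta$ and confirming that the Legendre transform furnishes the sharp exponent is the delicate step, where the OSC and the sub-multiplicativity of the defining sums are essential. Finally, the assertions about the support follow by specialization to $q = 0$: the defining relation becomes $\sum_i r_i^{b(0)} = 1$, which is exactly the Moran equation \eqref{eq:moran}, so $b(0) = \dim_H(E)$, while the Legendre identity at $q = 0$ gives $b^*(-b'(0)) = b(0)$, establishing \eqref{eq:dimsupp}.
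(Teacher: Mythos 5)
You should first be aware that the paper contains no proof of Proposition \ref{prp:geomsym}: it is stated ``as found in \cite{Ol4}'' and the reader is referred to \cite{AP,CM,Ol4} --- especially Cawley and Mauldin \cite{CM} --- for proofs. So the comparison here is not with an argument in the paper but with those references, and your outline is in fact a faithful reconstruction of the standard Cawley--Mauldin/Olsen argument: analyticity and convexity of $b$, the Legendre identity $b^*(-b'(q)) = -qb'(q)+b(q)$, the auxiliary self-similar measure $\nu_q$ with weights $q_i = p_i^q r_i^{b(q)}$, the strong law of large numbers showing that $\nu_q$-almost every coding lies in the symbolic level set for $t=-b'(q)$ and that $\nu_q$ has local dimension $b^*(t)$ there, Billingsley's lemma for the lower bound, a covering/large-deviation estimate for the upper bound, and the specialization $q=0$ to the Moran equation \eqref{eq:moran} for \eqref{eq:dimsupp}. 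The computations you record all check out; in particular $\log q_i = q\log p_i + b(q)\log r_i$ does yield the two almost-sure limits you state, and the identification of the second limit with $b^*(t)$ is correct.

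Two caveats on completeness. First, the step asserting that the OSC ``lets one sandwich $\mu(B(x,r))$ between constant multiples of $p_{\bfi|n}$'' is overstated: under the OSC alone, distinct cylinders may touch (e.g.\ the binomial measure $\beta_0$ on $[0,1]$ in Corollary \ref{cor:recoverbinom}), and at a touching point a ball of radius $r$ can capture mass from a neighboring cylinder far exceeding $p_{\bfi|n}$, so the two-sided pointwise estimate fails in general. What the cited proofs actually use is the free one-sided bound $\mu(B(x,r)) \geq c\, p_{\bfi|n}$ (the ball contains the cylinder) together with an upper estimate valid only $\nu_q$-almost everywhere, obtained by a density or counting argument; handling this correctly is precisely why \cite{AP} and \cite{Ol4} are needed beyond the disjoint-pieces setting of \cite{CM}. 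Second, you leave the upper bound $\dim_H(E_t)\leq b^*(t)$ as a plan rather than a proof; that is indeed the delicate half, and as written your proposal is an accurate outline of the known proof rather than a self-contained argument. Since the paper itself simply cites the result, this level of detail is arguably in the right spirit, but those two points are where a referee would press.
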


\ndnt Other well-known properties of the function $f=f_g=f_s$ are described in the following section.

\subsection{Properties of the Multifractal Spectrum}\label{Props}

\begin{figure}
\epsfysize=6.5cm\epsfbox{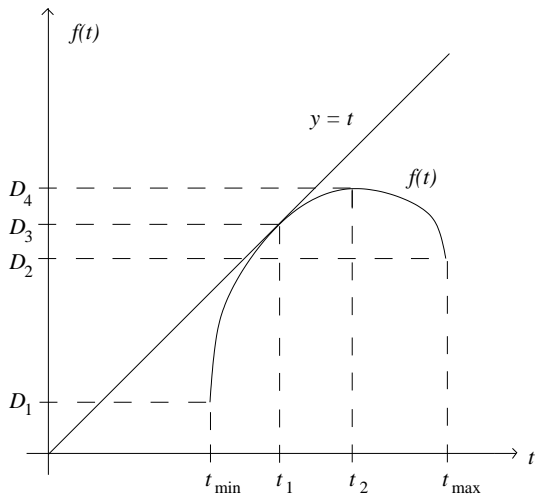}\label{cmmultispec}
    \caption{\textit{The graph of the multifractal spectrum $f(t)=f_g(t)=f_s(t)$ of a self-similar measure $\mu$ supported on a set defined by an IFS, where $t$ is local H\"older regularity. The behavior of $f$ is discussed in \S\ref{Props} and, for a specific case, is recovered in the context of the abscissa of convergence function and coarse H\"older regularity in \S\ref{RRCR}.}}
\end{figure}

A full development of the properties of the multifractal spectrum $f=f_g=f_s$  of a self-similar measure $\mu$ described in this section, some of which are displayed in Fig.~3, can be found in \cite[\S 1]{CM}, specifically Fig.~1.3 therein.

\ndnt A self-similar measure $\mu$ uniquely defined by $\{S_i\}_{i=1}^{N}$ and $\textnormal{\bfp}$ has maximum and minimum regularity values $t_{\min}$ and $t_{\max}$ which define the domain of $f$. These values are given by
\[
t_{\min} = \min_i\left\{\log_{r_i}p_i \mid i \in \{1,\ldots,N\} \right\}
\]
and
\[
t_{\max} = \max_i\left\{\log_{r_i}p_i \mid i \in \{1,\ldots,N\} \right\};
\]
hence the domain of $f$ (i.e., the values of $t$ for which $E_t$ is nonempty) is $[t_{\min},t_{\max}]$. See Fig.~3.

\ndnt Regardless of the values of $D_1=f(t_{\min})$ and $D_2=f(t_{\max})$, the slopes of $f$ at the points $(t_{\min},D_1)$ and $(t_{\max},D_2)$ are infinite. The value $D_3 = t_1 = f(t_1)$ is the {\it information dimension} of $\mu$. The value $D_4 = f(t_2) = \max\{f(t) \mid t \in [t_{\min},t_{\max}]\}$ is the Hausdorff dimension of $\mu$. Thus, by Proposition \ref{prp:geomsym} we have
\[
D_4 = f(t_2) = \max\{f(t) \mid t \in [t_{\min},t_{\max}]\} = \dim_H(E) = b^*(-b'(0)).
\]
See Fig.~3.

\ndnt If $p_i = r_i^D$ for all $i \in \{1,\dots,N\}$, then $D=D_4=f(D)$ and the domain of $f$ is the singleton $\{D\}$. Excluding this case, $f$ is concave and satisfies the following inequalities:
\[
f(t_{\min}) = D_1 < f(t_1) = D_3 < f(t_2) = D_4
\]
and
\[
f(t_{\max}) = D_2 < D_4.
\]

\subsection{Besicovitch Subsets of Moran Fractals}\label{BSMF}

Another setting of multifractal analysis that is important for our purposes is the one provided by the {\it Besicovitch subsets} of self-similar Moran fractals. (See \cite[\S 1]{MinYa}, for instance.)

\ndnt For an IFS which satisfies the OSC on the unit interval $[0,1]$ with scaling ratios $\bfr=(r_1,\ldots,r_N)$ and a probability vector $\bfq=(q_1,\ldots,q_N)$ (that is, $\sum_{i=1}^{N}q_i=1$ and $q_i\geq 0$ for $i \in \{1,\ldots,N\}$), the Besicovitch subset $E(\bfq)$ of the self-similar set $E$ (defined by the IFS) is defined by the coding mapping $\tau$ from $\{1,\ldots,N\}$ to $E$ as follows:

\begin{equation}\label{eq:bessubset}
E(\bfq) := \left\{ \tau(x) \in E \mid \lim_{n\rightarrow \infty}\frac{1}{n} \sum_{k=1}^n \chi_i(x_k) = q_i, 
x \in \{1,\ldots,N\}^{\mathbb{N}^*} \right\}, 
\end{equation}
where $i \in \{1,\ldots,N\}$, $\chi_i$ is the characteristic function of the singleton $\{i\}$, $x=(x_k)_{k=1}^{\infty}$ with $x_k\in\{1,\ldots,N\}$, and $\mathbb{N}^*$ is the set of positive integers.

\ndnt The well-known result in Proposition \ref{prp:hausbes} below follows from the results of A.~S.~Besicovitch in \cite{Bes2} and ties the results of Theorem \ref{thm:distinctreg} to existing theory. (See, also, \cite{CM,LOW,MinYa}.)

\begin{prp}\label{prp:hausbes}
For an IFS which satisfies the OSC on the unit interval $[0,1]$ with scaling ratios $\textnormal{\bfr}$ and a probability vector $\textnormal{\bfq}$, we have
\begin{equation}\label{eq:besprop}
\dim_H(E(\textnormal{\bfq})) = \frac{\sum_{i=1}^N q_i \log{q_i}}{\sum_{i=1}^N q_i \log{r_i}}.
\end{equation}
\end{prp}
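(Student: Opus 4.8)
The plan is to recognize Proposition~\ref{prp:hausbes} as the Besicovitch--Eggleston theorem transported from base-$N$ digit frequencies to the symbolic coding $\pi$ of the self-similar set $E$, and to prove it by establishing the two matching bounds $\dim_H(E(\bfq)) \le D$ and $\dim_H(E(\bfq)) \ge D$, where
\[
D := \frac{\sum_{i=1}^N q_i \log q_i}{\sum_{i=1}^N q_i \log r_i}
\]
(with the convention $0\log 0 = 0$, so that both sums are nonpositive and $D \ge 0$). The central tool for both bounds is the Bernoulli measure governed by $\bfq$, and the role of the OSC is to guarantee that the symbolic cylinders $E_{\bfi}$ behave like Euclidean balls: on $[0,1]$ the depth-$n$ cylinders are essentially disjoint intervals of diameter comparable to $r_{\bfi}$, so local dimensions computed along cylinders coincide with those computed along genuine balls.

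For the lower bound, I would let $\nu$ be the product (Bernoulli) measure on $\Lambda^{\N} = \{1,\ldots,N\}^{\mathbb{N}^*}$ assigning mass $q_i$ to each occurrence of the symbol $i$, and set $\mu_{\bfq} := \nu \circ \pi^{-1}$, the pushforward to $E$. By the strong law of large numbers applied to the i.i.d.\ coordinate indicators $\chi_i(x_k)$, for $\nu$-almost every $x$ one has $\tfrac1n\sum_{k=1}^n \chi_i(x_k) \to q_i$ for every $i$; comparing with the definition \eqref{eq:bessubset} of $E(\bfq)$ gives $\mu_{\bfq}(E(\bfq)) = 1$. For such a typical point $x = \pi(\bfi)$, the depth-$n$ cylinder satisfies $\mu_{\bfq}(E_{\bfi|n}) = \prod_{k=1}^n q_{i_k}$ and $\mathrm{diam}(E_{\bfi|n}) \asymp r_{\bfi|n} = \prod_{k=1}^n r_{i_k}$, so taking logarithms, dividing by $n$, and using the frequency hypothesis yields
\[
\lim_{n\to\infty} \frac{\log \mu_{\bfq}(E_{\bfi|n})}{\log \mathrm{diam}(E_{\bfi|n})} = \frac{\sum_i q_i \log q_i}{\sum_i q_i \log r_i} = D .
\]
Converting this cylinder local dimension into the Euclidean local dimension via the OSC comparison of balls and cylinders, and then applying the mass distribution principle (Billingsley's lemma) to $E(\bfq)$, which carries full $\mu_{\bfq}$-measure, gives $\dim_H(E(\bfq)) \ge D$.

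For the upper bound, fix $\varepsilon > 0$ and, for each large $n$, let $W_n(\varepsilon) \subset \Lambda^*$ be the set of length-$n$ words whose symbol frequencies lie within $\varepsilon$ of $\bfq$. Once $n$ is large, every point of $E(\bfq)$ lies in $E_{\bfi|n}$ for some $\bfi|n \in W_n(\varepsilon)$, so these cylinders cover $E(\bfq)$. A multinomial (Stirling) estimate bounds $\#W_n(\varepsilon)$ by $\exp\!\big(n(H(\bfq) + \eta(\varepsilon))\big)$ with $H(\bfq) = -\sum_i q_i \log q_i$ and $\eta(\varepsilon) \to 0$ as $\varepsilon \to 0$, while each such cylinder has diameter at most $\exp\!\big(n(\sum_i q_i \log r_i + \eta'(\varepsilon))\big)$. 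Hence $\sum (\mathrm{diam})^s$ over this cover is dominated by $\exp\!\big(n[\,H(\bfq) + s\sum_i q_i\log r_i + o(1)\,]\big)$, which tends to $0$ for every $s > D$; letting $n \to \infty$ and then $\varepsilon \to 0$ yields $\mathcal{H}^s(E(\bfq)) = 0$ for all $s > D$, i.e.\ $\dim_H(E(\bfq)) \le D$.

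The main obstacle is that, since the ratios $r_i$ differ, the depth-$n$ cylinders do not share a common scale, so the covering sum above is not literally indexed by sets of a single size $\delta$. I expect to resolve this with a stopping-time cover: replace each branch by the first cylinder $E_{\bfi|n}$ with $r_{\bfi|n} \le \delta < r_{\bfi|(n-1)}$, producing cylinders of comparable diameter $\asymp \delta$, and count only the near-typical stopped words. Equivalently, one can route the variable-scale bookkeeping entirely through Billingsley's lemma, which reduces both bounds to the single computation that the cylinder local dimension equals $D$ on $E(\bfq)$. Handling this uniform-scale reduction, in tandem with the OSC-based comparison between balls and cylinders on the line, is where the real effort lies; the entropy and Lyapunov-exponent asymptotics are routine once the frequencies are pinned down.
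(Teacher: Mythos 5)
The paper itself offers no proof of Proposition~\ref{prp:hausbes}: it is quoted as a well-known result following from Besicovitch \cite{Bes2} (see also \cite{CM,LOW,MinYa}), so there is no internal argument to compare yours against. What you propose is the standard proof of the Besicovitch--Eggleston theorem transported to self-similar sets under the OSC, and as a sketch it is essentially correct: the Bernoulli measure $\mu_{\bfq}=\nu\circ\pi^{-1}$ together with the strong law of large numbers gives $\mu_{\bfq}(E(\bfq))=1$ and the cylinder local dimension $D$ at typical points; Billingsley's lemma, combined with the OSC-based bounded-overlap comparison of balls with stopped cylinders, gives the lower bound; and the entropy count $\#W_n(\varepsilon)\leq\exp\left(n(H(\bfq)+\eta(\varepsilon))\right)$ gives the upper bound. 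Two points of bookkeeping deserve attention. First, your claim that ``once $n$ is large, every point of $E(\bfq)$ lies in $E_{\bfi|n}$ for some $\bfi|n\in W_n(\varepsilon)$'' is not uniform in the point; the standard repair is to write $E(\bfq)=\bigcup_{m}A_m$ with $A_m:=\{x \mid \textnormal{the depth-}n\textnormal{ frequencies of }x\textnormal{ are within }\varepsilon\textnormal{ of }\bfq\textnormal{ for all }n\geq m\}$, run the covering estimate on each $A_m$ (any single $n\geq m$ furnishes a cover), and invoke countable stability of Hausdorff dimension. Second, the variable-scale issue you flag is actually harmless in the upper bound, since bounding $\mathcal{H}^s$ from above only requires covers whose mesh tends to zero, not covers at a single scale; the stopping-time construction is genuinely needed only in the lower bound, where it underlies the comparison between $\mu_{\bfq}(B(x,r))$ and the measures of the boundedly many stopped cylinders of diameter comparable to $r$ that meet $B(x,r)$ (this is precisely where the OSC enters, and one must also dispose of the measure-zero set of points with multiple codings). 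With these routine repairs your argument is complete and is, in substance, the proof found in the literature the paper cites.
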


\begin{rk}\label{rk:besegg}
\textnormal{In 1934, A.~S.~Besicovitch studied the unique nonterminating binary expansion of $x \in [0,1]$ (i.e., the case where $N=2$ and $x_k \in \{0,1\}$) in \cite{Bes1} and proved that}
\begin{equation*}\label{eq:besbin}
\dim_H(E(\textnormal{\bfq})) = \frac{-q_1\log{q_1} - q_2\log{q_2}}{\log{2}}.
\end{equation*}
\textnormal{In 1949, H.~G.~Eggleston generalized this result to $N$-ary expansions in \cite{Egg} and found that}
\begin{equation*}\label{eq:egg}
\dim_H(E(\textnormal{\bfq})) = \frac{-\sum_{i=1}^N q_i\log{q_i}}{\log{N}}.
\end{equation*}
\textnormal{(Also see \cite[Ch.~6]{Fed}.) In \S\ref{pzfssm}, we recover these results when $q_i \in \mathbb{Q}\cap[0,1]$ for $i \in \{1,\dots,N\}$.}
\end{rk}

\subsection{Coarse H\"older Regularity}\label{CHR}

As a break from the mold defined by the results in this section up to this point, the multifractal structure investigated in upcoming sections of this paper is based on the notion of {\it coarse H\"older regularity} as defined in \cite{LVT,LLVR,LapRo1,LapvF6,Rock}, for example, and simply called {\it regularity} in this paper. Regularity is key to the development of the partition zeta functions defined in the next section.

\begin{dfn}\label{def:reg}
For a given Borel measure $\mu$ with range in $[0,\infty]$ and an interval $U \subset [0,1]$ with positive Lebesgue measure \textnormal{(}denoted $|U|$\textnormal{)}, the \underline{regularity} \(A(U)\) of $U$ is
\begin{equation}\label{eq:reg}
A(U):=\frac{\log \mu(U)}{\log{|U|}}.
\end{equation}
\end{dfn}
Equivalently, $A(U)$ is the exponent $\alpha$ that satisfies
\[
|U|^{\alpha}=\mu(U).
\]
Note that regularity can be considered for any interval, whether
open, closed, or neither. To construct the partition zeta functions,
intervals are gathered according to their regularity.


\ndnt In general, regularity values $\alpha$ in the extended
real numbers $[-\infty,\infty]$ may be considered. For infinite regularity values, we take
\[
\alpha=\infty=A(U) \Leftrightarrow \mu(U) =0 \textnormal{ and } | U| > 0,
\]
and
\[
\alpha=-\infty=A(U) \Leftrightarrow \mu(U) = \infty  \textnormal{ and } | U | > 0.
\]
However, in \S\ref{pzfssm}, we only consider finite regularity values and in \S\ref{pzfam}, we consider $\alpha=\infty$ only briefly.

\ndnt Fixing the regularity $\alpha$ when taking a measure $\mu$ and a sequence of partitions $\mathfrak{P}$ into
consideration allows one to define the partition zeta functions, which is done in the next section.

\section{Definitions}\label{Definitions}

In this section, we define the main objects of study, in particular, the {\it partition zeta functions}. We consider self-similar measures which are supported on a subset of the unit interval $[0,1]$ and define the partition zeta functions in a manner which is similar to the way the geometric zeta functions are defined. (Compare Definition \ref{def:gzf}.) These definitions are similar in that both use a sequence of lengths to provide terms for certain Dirichlet series. However, unlike the geometric zeta functions, the lengths which define a partition zeta function are defined by the scales which stem from a sequence $\mathfrak{P}$ of partitions $\mathcal{P}_n$ (for $n \in \N^*$) of the unit interval and a fixed regularity value $\alpha$ (as defined in \S\ref{CHR}). To elaborate, we first define the appropriate sequence of lengths. Here and henceforth, we assume each partition $\mathcal{P}_n$ comprises a finite number of disjoint intervals with positive length.

\begin{dfn}\label{dfn:alphalengths}
For a Borel measure $\mu$ on the interval $[0,1]$ and a sequence of partitions $\mathfrak{P} = \{\mathcal{P}_n\}_{n=1}^{\infty}$ of $[0,1]$ with mesh tending to zero, the sequence of \underline{$\alpha$-lengths} $\mathcal{L}_{\mathfrak{P}}^{\mu}(\alpha)$ corresponding to regularity $\alpha \in [-\infty,\infty]$ is given by
\begin{equation}\label{eq:alphal}
\mathcal{L}_{\mathfrak{P}}^{\mu}(\alpha) :=
\left\{ \ell \mid \ell = |P^i_n| \textnormal{ and } A(P^i_n) = \alpha, \textnormal{ where } n \in \N^*, P^i_n \in \mathcal{P}_n \right\}.
\end{equation}
\end{dfn}

The $\alpha$-lengths are essentially the distinct scales, along with their multiplicities, of the intervals at level $n$ (for all $n \in \N^*$) in the partition $\mathcal{P}_n \in \mathfrak{P}$ which have regularity $\alpha$.  In turn, the $\alpha$-lengths are used to define the partition zeta function as follows:

\begin{dfn}\label{def:pzf}
For a Borel measure $\mu$ on the interval $[0,1]$ and a sequence of partitions $\mathfrak{P} = \{\mathcal{P}_n\}_{n=1}^{\infty}$ with mesh tending to zero, the \underline{partition zeta function} $\zeta^{\mu}_{\mathfrak{P}}(\alpha,s)$ corresponding to regularity $\alpha \in [-\infty,\infty]$ is given by
\begin{equation}\label{eq:pzfl}
\zeta^{\mu}_{\mathfrak{P}}(\alpha,s):=\zeta_{\mathcal{L}}(s),
\end{equation}
where $\mathcal{L}:=\mathcal{L}_{\mathfrak{P}}^{\mu}(\alpha)$ and $\textnormal{Re}(s)$ is large enough. That is,
\begin{equation}\label{eq:pzf}
\zeta^{\mu}_{\mathfrak{P}}(\alpha,s)
:=\sum_{\ell \in \mathcal{L}^{\mu}_{\mathfrak{P}}(\alpha)} \ell^s
=\sum_{n=1}^{\infty}\sum_{A(P^i_n)=\alpha}|P^{i}_{n}|^s,
\end{equation}
where the inner sum is taken over the intervals $P^i_n$ with regularity $A(P^i_n)=\alpha$ in the
partition $\mathcal{P}_n$ for each $n \in \N^*$, and $\textnormal{Re}(s)>D_{\mathcal{L}}$, with $\mathcal{L}:=\mathcal{L}_{\mathfrak{P}}^{\mu}(\alpha)$ as in Definition \ref{dfn:alphalengths} above.
\end{dfn}

\ndnt If there is no interval $P^i_n$ such that $A(P^i_n)=\alpha_0$ for some regularity value $\alpha_0$, then we set $\zeta^{\mu}_{\mathfrak{P}}(\alpha_0,\cdot)$ identically equal to zero and we refer to such regularity values as {\it trivial regularity values}. By extension, we will call {\it trivial} the regularity values $\alpha$ for which $\zeta^{\mu}_{\mathfrak{P}}(\alpha,\cdot)$ is an entire function. This is the case, for example, if $\mathcal{L}_{\mathfrak{P}}^{\mu}(\alpha)$ consists of at most finitely many nonzero $\alpha$-lengths and is true iff we are in that situation provided $\zeta^{\mu}_{\mathfrak{P}}(\alpha,s)$ is assumed to have a meromorphic continuation to an open neigborhood of $[\textnormal{Re}(s)\geq 0]$; see Remark \ref{rk:finitelengths} below.

\begin{rk}\label{rk:countablereg}
\textnormal{Unlike the case of the multifractal zeta functions from \cite{Gibson,LLVR,LapvF6,Rock}, there is no particular partition zeta function which directly corresponds to the lengths of the fractal string of the complement in $[0,1]$ of the support of a given measure. Still, there is a connection to the Hausdorff (and Minkowski) dimension of the support, as will be seen in \S\ref{pzfssm}. Also, for a given measure $\mu$, there are at most countably many nontrivial regularity values attained with respect to a given sequence of partitions $\mathfrak{P}$. Hence, for a given measure and sequence of partitions, there are at most countably many partition zeta functions that are not entire (possibly in the broader context of Remark \ref{rk:developtheory} below).}
\end{rk}

\ndnt In order to connect our methods to those outlined in \S\ref{IMA}, with motivation provided by Definition \ref{def:gzf} and Proposition \ref{prp:geomsym}, we consider the function $f_{\mathfrak{P}}^{\mu}$ on $[-\infty,\infty]$ which maps the regularity values $\alpha$ attained by a measure $\mu$ with respect to a sequence of partitions $\mathfrak{P}$ to the abscissa of convergence of the corresponding partition zeta function $\zeta^{\mu}_{\mathfrak{P}}(\alpha,\cdot).$

\begin{dfn}\label{dfn:abscon}
Given a Borel measure $\mu$ on $[0,1]$ and a sequence of partitions $\mathfrak{P}$ with mesh tending to zero, the \underline{abscissa of convergence function} $f_{\mathfrak{P}}^{\mu}(\alpha)$ is given by
\begin{equation}\label{eq:acf}
f_{\mathfrak{P}}^{\mu}(\alpha) := \inf \left\{\gamma \in \mathbb{R} \mid
\zeta^{\mu}_{\mathfrak{P}}(\alpha,\gamma) <\infty \right\},
\end{equation}
for $\alpha \in [-\infty,\infty]$. For a trivial regularity value $\alpha_0$, we set $f_{\mathfrak{P}}^{\mu}(\alpha_0)=0$. 
\end{dfn}

That is, more precisely, in general $f_{\mathfrak{P}}^{\mu}(\alpha)$ is defined as the maximum of $0$ and the abscissa of convergence \textnormal{(}given by Eq.~\eqref{eq:acf}\textnormal{)} of the Dirichlet series defining $\zeta^{\mu}_{\mathfrak{P}}(\alpha,\cdot)$; so that $f_{\mathfrak{P}}^{\mu}(\alpha) \geq 0$ for all nontrivial regularity values $\alpha \in \R$ and when $\zeta^{\mu}_{\mathfrak{P}}(\alpha_0,\cdot)$ is entire for a given value of $\alpha_0$ \textnormal{(}i.e., when $\alpha_0$ is trivial\textnormal{)}, $f_{\mathfrak{P}}^{\mu}(\alpha_0)=\max\{0,-\infty\}=0$. \textnormal{(}See Remark \ref{rk:finitelengths}.\textnormal{)}  Accordingly, for a nontrivial regularity value $\alpha$, $\{s \in \C \mid \textnormal{Re}(s) > f_{\mathfrak{P}}^{\mu}(\alpha)\}$ is the largest open right half-plane on which the Dirichlet series in Eq.~\eqref{eq:pzf} is absolutely convergent.

\begin{rk}\label{rk:finitelengths}
\textnormal{Note that if $\mathcal{L}_{\mathfrak{P}}^{\mu}(\alpha)$ consists of infinitely many nonzero $\alpha$-lengths and $\zeta^{\mu}_{\mathfrak{P}}(\alpha,s)$ admits a meromorphic continuation to an open neigborhood of $[\textnormal{Re}(s)\geq 0]$, then by Eq.~\eqref{eq:pzf}, $\zeta^{\mu}_{\mathfrak{P}}(\alpha,0)=\sum_{\ell\in\mathcal{L}_{\mathfrak{P}}^{\mu}(\alpha)}1=\infty$. Hence, by definition, the abscissa of convergence of $\zeta^{\mu}_{\mathfrak{P}}(\alpha,\cdot)$ is necessarily nonnegative in this case; in particular, the partition zeta function $\zeta^{\mu}_{\mathfrak{P}}(\alpha,\cdot)$ cannot be entire. Conversely, if $\mathcal{L}_{\mathfrak{P}}^{\mu}(\alpha)$ consists of finitely many nonzero $\alpha$-lengths, then $\zeta^{\mu}_{\mathfrak{P}}(\alpha,\cdot)$ is clearly entire and hence, its abscissa of convergence is $-\infty$; from which it follows that $f_{\mathfrak{P}}^{\mu}(\alpha)=\max\{0,-\infty\}=0$. This justifies, in particular, our terminology for trivial/nontrivial regularity values and the more precise definition of $f_{\mathfrak{P}}^{\mu}(\alpha)$ given in the text immediately following Definition \ref{dfn:abscon}. That is, regularity $\alpha$ is {\it trivial} iff $\mathcal{L}_{\mathfrak{P}}^{\mu}(\alpha)$ consists of finitely many nonzero lengths, or more generally, if $\zeta^{\mu}_{\mathfrak{P}}(\alpha,\cdot)$ is entire. Otherwise, $\alpha$ is {\it nontrivial}.}
\end{rk}

\ndnt When the partition zeta function $\zeta^{\mu}_{\mathfrak{P}}(\alpha,\cdot)$ has a meromorphic continuation to a window $W_{\alpha} \subset \C$, we have the following definitions. Note that with a mild abuse of notation, $\zeta^{\mu}_{\mathfrak{P}}(\alpha,s)$ denotes the partition zeta function as well as its meromorphic continuation to $W_{\alpha}$.

\begin{dfn}\label{dfn:poles} For a Borel measure $\mu$ on $[0,1]$, sequence
$\mathfrak{P}$ of partitions of $[0,1]$ with mesh tending to zero, and regularity $\alpha \in [-\infty,\infty]$, the set of\\
\underline{complex dimensions with parameter $\alpha$}, denoted by
$\mathcal{D}^{\mu}_{\mathfrak{P}}(\alpha, W_{\alpha})$, is given by
\begin{equation}\label{eq:cdalpha}
\mathcal{D}^{\mu}_{\mathfrak{P}}(\alpha, W_{\alpha}) := \{\omega \in W_{\alpha} \mid \zeta^{\mu}_{\mathcal{N}}(\alpha,s) \textnormal{ has a pole at } \omega \},
\end{equation}
for an appropriate window $W_{\alpha}$.
\end{dfn}

Gathering the sets $\mathcal{D}^{\mu}_{\mathfrak{P}}(\alpha, W_{\alpha})$ by all finite regularity values $\alpha$ attained by $\mu$ with respect to $\mathfrak{P}$ yields the following collection.
 
\begin{dfn}\label{dfn:tapestry}
For a Borel measure $\mu$ on $[0,1]$ and sequence $\mathfrak{P}$ of partitions of $[0,1]$ with mesh tending to zero,
the \underline{tapestry of complex dimensions} $\mathcal{T}_{\mathfrak{P}}^{\mu}$ with respect to the windows $W_{\alpha}$ is given by
\begin{equation}\label{eq:tap}
\mathcal{T}_{\mathfrak{P}}^{\mu} := \left\{ (\alpha,\omega) \mid \alpha \in (-\infty,\infty), \omega \in \mathcal{D}^{\mu}_{\mathfrak{P}}(\alpha, W_{\alpha}) \right\}.
\end{equation}
\end{dfn}

\ndnt Note that by definition, we have
\[
\mathcal{T}_{\mathfrak{P}}^{\mu} \subset \R \times W_{\alpha} \subset \R \times \C.
\]

\ndnt In light of Definition \ref{dfn:poles} and Proposition \ref{prp:countingfunction}, we define as follows the {\it counting function of the $\alpha$-lengths} of a measure $\mu$ with respect to a sequence of partitions $\mathfrak{P}$ and the attained regularity values $\alpha$.

\begin{dfn}
For $\mathcal{L} := \mathcal{L}_{\mathfrak{P}}^{\mu}(\alpha)$, the \underline{counting function of the $\alpha$-lengths} of $\mu$ with respect to $\mathfrak{P}$ is
\begin{equation}\label{eq:calphal}
N_{\mathfrak{P}}^{\mu}(\alpha, x) := N_{\mathcal{L}}(x),
\end{equation}
for $x>0$.
\end{dfn}

\begin{rk}
\textnormal{Note that $N_{\mathfrak{P}}^{\mu}(\alpha, x)$ does not correspond to a fractal string per se, just a sequence of lengths (or rather, scales). In this setting, the corresponding explicit formulas for general fractal strings are discussed in \cite[Ch. 5]{LapvF4} (also \cite{LapvF6}), and as mentioned in \S\ref{CS}, can immediately be used to describe the multiscale behavior of $\mu$ with respect to $\mathfrak{P}$ and regularity $\alpha$.}
\end{rk}

\ndnt The following section investigates the properties of the partition zeta functions and the abscissa of convergence functions for self-similar measures and their natural sequences of partitions.

\section{Partition Zeta Functions of\\ Self-Similar Measures}\label{pzfssm}

We now develop and investigate the partition zeta functions of a self-similar measure $\mu$ on $[0,1]$ and its natural sequence of partitions of $[0,1]$, denoted $\mathfrak{P}$, where both $\mu$ and $\mathfrak{P}$ are defined by an IFS and a probability vector as in \S\ref{IFSSSM}. The scaling ratios given by $\bfr$, along with the probability vector $\bfp$, completely determine the regularity values $\alpha$ which the measure $\mu$ attains with respect to $\mathfrak{P}$.

\subsection{Self-Similar Measures and\\ Natural Sequences of Partitions}\label{SSMNSP}

Consider an IFS with contracting similarities $\{S_i\}_{i=1}^{N}$ which satisfy the OSC and have scaling ratios $\bfr=(r_1,\ldots,r_N)$ such that $\sum_{i=1}^{N}r_i\leq1$ and $r_i>0$ for each $i \in \{1,\ldots,N\}$. (Here and thereafter, we have $N\geq 2$.) Without loss of generality, we may assume that $0 \in S_1([0,1])$ and $1 \in S_N([0,1])$. Furthermore, let $\bfp = (p_1,\ldots,p_N)$ be a probability vector; that is, $\sum_{i=1}^{N}p_i=1$ and, without loss of generality in the setting of this section, $p_i>0$ for all $i \in \{1,\ldots,N\}$.

\ndnt At each stage $K \in \N^* = \{1,2,\ldots\}$ of the recursive construction of the IFS, there are $N^K$ distinct closed intervals with positive mass. These intervals, along with the distinct intervals which fill in the gaps between them (if any), constitute the partitions $\mathcal{P}_K$ for each $K \in \N^*$. In turn, $\{\mathcal{P}_K\}_{K=1}^{\infty}$ constitutes the {\it natural sequence of partitions} $\mathfrak{P}$. Each of the intervals in $\mathcal{P}_K$ with positive mass have their length and mass completely described by an ordered $N$-tuple of nonnegative integers $\bfk = (k_1,\ldots,k_N)$ such that $\sum_{i=1}^{N}k_i=K$. That is, an interval $P \in \mathcal{P}_K$ with positive mass has length $|P|$ of the form $r_1^{k_1}\cdots r_N^{k_N}$ and mass $\mu(P)$ of the form $p_1^{k_1}\cdots p_N^{k_N}$. See Fig.~4, where we use the notation introduced below.

\ndnt Ultimately, $\bfk = (k_1,\ldots,k_N)$ and $\bfr = (r_1,\ldots,r_N)$ define the regularity value $\alpha(\bfk)$ as below and, in turn, the Besicovitch subset $E(\bfk/K)$, defined by Eq.~\eqref{eq:bessubset}, of the Moran (self-similar) fractal $E = \textnormal{supp}(\mu)$. See \cite{CM,LOW,MinYa,Ol4} for the construction of Moran fractals and related results in more general settings.

\begin{rk} \textnormal{Intervals without mass have regularity $\alpha=\infty$, but we do not investigate this case (except in \S\ref{pzfam}, and there only briefly) since the resulting partition zeta functions are divergent everywhere. This is in stark contrast to the results obtained under the context of the multifractal zeta functions as described in \cite{Gibson,LLVR,LapRo1,LapvF6,Rock}, where regularity $\alpha=\infty$ precisely recovers the geometric zeta function and $\alpha=-\infty$ yields the Hausdorff dimension of the boundary of a certain type of fractal string.}
\end{rk}

\ndnt The breakdown of mass and length as above provides a complete description of the regularity values attained by a self-similar measure $\mu$ with respect to the natural sequence of partitions $\mathfrak{P}$. Specifically, $\bfk = (k_1,...,k_N)$ corresponds to the regularity value $\alpha=\alpha(\bfk)$ given by
\[
\alpha(\bfk)=
\frac{\log{(p_1^{k_1}\cdots p_N^{k_N})}}{\log{(r_1^{k_1}\cdots r_N^{k_N})}}=
\frac{\sum_{i=1}^N k_i \log{p_i}}{\sum_{i=1}^N k_i \log{r_i}},
\]
where the convention $0\log0=0$ is used. Note the similarity to the ratio of logarithms used in Eq.~\eqref{eq:shms}. Before stating and deriving our main results, we consider the partition zeta functions and abscissa of convergence function of a  well-known binomial measure.

\begin{figure}
\epsfysize=6.2cm\epsfbox{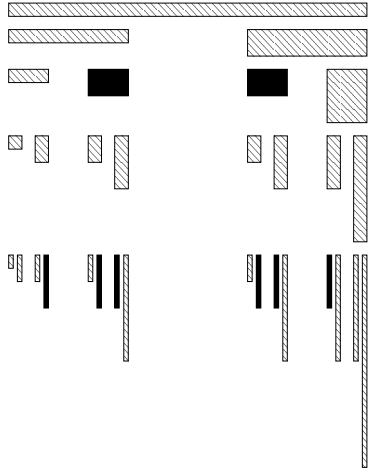}\label{binomcs}
    \caption{\textit{The solid black blocks correspond to intervals which have regularity $\alpha(1,2)$ stemming from the measure $\beta$ \textnormal{(}see \S\ref{MMCS}\textnormal{)} and its natural sequence of partitions.}}
\end{figure}

\subsection{A Multifractal Measure on the Cantor Set}\label{MMCS}

Consider a binomial measure $\beta$ supported on the classical ternary Cantor set defined by the weighted IFS given by
$\bfr=(1/3,1/3)$ with probabilities $\bfp=(p_1,p_2)$ such that $0 < p_1 < p_2<1$. At every stage $K \in \N^*$ of
the IFS, there are $2^K$ intervals with positive mass. The number of intervals at stage $K$ with regularity $\alpha(k_2,K):=\alpha(\bfk)$ is given by the binomial coefficient
\[
\binom{K}{k_1} = \binom{K}{k_2} = \frac{K!}{k_1!k_2!},
\]
where the first equality holds since $k_1=K-k_2$. See Fig.~4 for a depiction of the first five stages of the weighted IFS with $N=2$, $\bfr=(1/3,1/3)$, and $\bfp=(1/3,2/3)$, resulting in the self-similar measure $\beta$ and the natural sequence of partitions $\mathfrak{P}$.

\ndnt The partition zeta function $\zeta_{\mathfrak{P}}^{\beta}(\alpha,s)$ with regularity $\alpha = \alpha(k_2,K)$ is given by
\[
\zeta_{\mathfrak{P}}^{\beta}(\alpha,s) = \sum_{n=1}^{\infty}\binom{nK}{nk_2}3^{-nKs}.
\]
The abscissa of convergence function $f=f_{\mathfrak{P}}^{\beta}(\alpha)$ is given by
\begin{align*}
f_{\mathfrak{P}}^{\beta}(\alpha) =& -x\log_3x - (1-x)\log_3(1-x)\\
=&-\left(\frac{1-\alpha}{\log_3{2}}\right)\log_3{\left(\frac{1-\alpha}{\log_3{2}}\right)}\\
 &- \left(1-\frac{1-\alpha}{\log_3{2}}\right)\log_3{\left(1-\frac{1-\alpha}{\log_3{2}}\right)},
\end{align*}
where $x = k_2/K = \frac{1- \alpha}{\log_3{2}}.$ (See Fig.~5.) The development of the partition zeta functions
$\zeta_{\mathfrak{P}}^{\beta}(\alpha,s)$ and the abscissa of convergence function $f_{\mathfrak{P}}^{\beta}(\alpha)$ are provided below in more general settings. 

\begin{figure}
\epsfysize=6.2cm\epsfbox{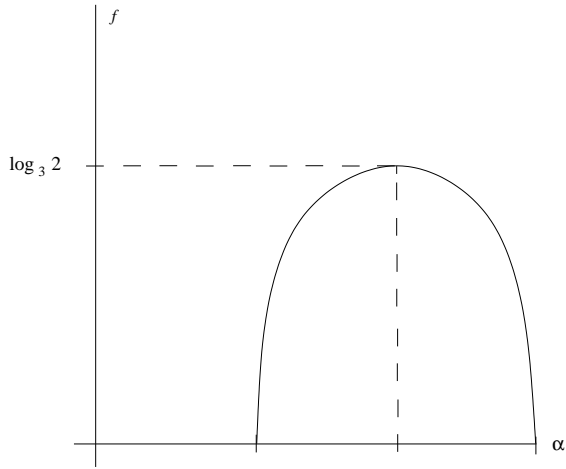}\label{betaspec}
    \caption{\textit{The graph of $f=f_{\mathfrak{P}}^{\beta}(\alpha)$ as a function of regularity $\alpha$ for the binomial measure $\beta$ supported on the Cantor set.}}
\end{figure}

\subsection{First Main Result: Distinct Regularity}\label{FMR}

For a self-similar measure defined by an IFS and a probability vector, the task of collecting the intervals with identical regularity values from every partition in $\mathfrak{P}$ is relatively simple when the conditions of Theorem \ref{thm:distinctreg} are satisfied.

\ndnt Here and henceforth, we let
\[
\binom{m}{m_1\ldots m_N} := \frac{m!}{m_1!\cdots m_N!}
\]
denote the multinomial coefficients, where $m_i \in \N$ for $i \in \{1,\ldots,N\}$ and $m := \sum_{i=1}^{N}m_i$.

\begin{thm}\label{thm:distinctreg}
Let $\mu$ be the self-similar measure and $\mathfrak{P}$ be the natural sequence of partitions defined by an IFS with scaling ratios $\textnormal{\bfr}=(r_1,\ldots,r_N)$ which satisfies the OSC and is weighted by a probability vector $\textnormal{\bfp}$ \textnormal{(}as described in \S\ref{SSMNSP}\textnormal{)}. Consider the following hypothesis:
\begin{itemize}
\item[\textnormal{(\textbf{H})}] Suppose that for all $\textnormal{\bfk} = (k_1,\dots,k_N) \in \N^N$ where  $\gcd(k_1,\dots,k_N)=1$ and $\textnormal{\bfk} \neq \textnormal{\textbf{0}}$, we have that the regularity values $\alpha(\textnormal{\bfk})$ are distinct. \textnormal{(}See Remark \ref{rk:distinctreg} below.\textnormal{)} That is, suppose \(\alpha(z_1,...,z_N) = \alpha(\textnormal{\bfk})\) if and only if there exists $m \in \N^*$ such that
    $z_i = mk_i$ for all $i \in \{1,\ldots,N\}$.
\end{itemize}
Assume that hypothesis \textnormal{(\textbf{H})} holds. Then, for any $\textnormal{\bfk} = (k_1,\dots,k_N) \in \N^N$  with \(\gcd(k_1,\dots,k_N)=1\), and letting \(K:=\sum_{i=1}^{N}k_i\), we have
\begin{equation}\label{eq:pzfdistinctreg}
\zeta^{\mu}_{\mathfrak{P}}(\alpha(\textnormal{\bfk}),s) =
\sum_{n=1}^{\infty}\binom{nK}{nk_1\cdots nk_N}(r_1^{k_1}\cdots
r_N^{k_N})^{ns}.
\end{equation}

\ndnt Moreover, the abscissa of convergence $\sigma=f_{\mathfrak{P}}^{\mu}(\alpha(\textnormal{\bfk}))$ of the partition zeta function $\zeta^{\mu}_{\mathfrak{P}}(\alpha(\textnormal{\bfk}),s)$ is given by
\begin{equation}\label{eq:msdistinctreg}
f_{\mathfrak{P}}^{\mu}(\alpha(\textnormal{\bfk})) = \frac{\sum_{i=1}^N (k_i/K)\log (k_i/K)}{\sum_{i=1}^N
(k_i/K)\log{r_i}} = \dim_H(E(\textnormal{\bfk}/K)),
\end{equation}
where $E(\textnormal{\bfk}/K)$ is the Besicovitch subset of the Moran (self-similar) fractal $E=\textnormal{supp}(\mu)$ defined by the scaling ratios $\bfr = (r_1,\ldots,r_N)$ and the probability vector $\textnormal{\bfk}/K=(k_1/K,\ldots,k_N/K)$; see Eq.~\eqref{eq:bessubset} in \S\ref{BSMF}. Equivalently, and with use of the convention $0^0=1$, the abscissa of convergence $\sigma$ is the unique real number satisfying the equation
\begin{equation}\label{eq:absconvsats}
(r_1^{k_1}\cdots r_N^{k_N})^{\sigma}\frac{K^{K}}{k_1^{k_1} \cdots k_N^{k_N}}=1;
\end{equation}
in addition, $\sigma>0$.
\end{thm}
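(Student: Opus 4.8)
The plan is to compute $\zeta^{\mu}_{\mathfrak{P}}(\alpha(\bfk),s)$ in closed form by pinning down exactly which intervals of the natural partitions carry regularity $\alpha(\bfk)$, and then to read off the abscissa of convergence from the growth rate of the resulting multinomial coefficients via Stirling's formula and the root test.

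First I would identify the contributing intervals. By the description in \S\ref{SSMNSP}, every positive-mass interval of $\mathcal{P}_{K'}$ is indexed by a tuple $\mathbf{m}=(m_1,\dots,m_N)\in\N^N$ with $\sum_i m_i=K'$, and such an interval has length $\prod_i r_i^{m_i}$, mass $\prod_i p_i^{m_i}$, regularity $\alpha(\mathbf{m})$, and multiplicity $\binom{K'}{m_1\cdots m_N}$. Since $\alpha$ is invariant under positive integer dilation of its argument (directly from its defining ratio of logarithms, $\alpha(n\mathbf{m})=\alpha(\mathbf{m})$), writing $g=\gcd(\mathbf{m})$ and $\mathbf{m}=g\mathbf{m}'$ with $\mathbf{m}'$ primitive gives $\alpha(\mathbf{m})=\alpha(\mathbf{m}')$. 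Hypothesis \textnormal{(\textbf{H})} then forces $\alpha(\mathbf{m})=\alpha(\bfk)$ to hold if and only if $\mathbf{m}'=\bfk$, i.e. if and only if $\mathbf{m}=n\bfk$ for some $n\in\N^*$. Consequently the intervals of regularity $\alpha(\bfk)$ occur only at levels $K'=nK$, each such level contributing $\binom{nK}{nk_1\cdots nk_N}$ intervals of common length $\rho^n$, where $\rho:=\prod_i r_i^{k_i}\in(0,1)$. Substituting this accounting into the defining double sum \eqref{eq:pzf} collapses it to the single series \eqref{eq:pzfdistinctreg}.

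Next I would extract the abscissa of convergence. Writing $w=\rho^s$, the series \eqref{eq:pzfdistinctreg} is the power series $\sum_{n\geq1}a_n w^n$ with positive coefficients $a_n=\binom{nK}{nk_1\cdots nk_N}=(nK)!/\prod_i(nk_i)!$, so its radius of convergence is $R=(\limsup_n a_n^{1/n})^{-1}$. Applying Stirling's formula, the leading exponential term yields $\tfrac1n\log a_n\to K\log K-\sum_i k_i\log k_i$ (the subexponential Stirling factors have $n$th root tending to $1$), whence $\limsup_n a_n^{1/n}=K^K/\prod_i k_i^{k_i}$ and $R=\prod_i k_i^{k_i}/K^K$. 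Because $0<\rho<1$ and the coefficients are positive, the Dirichlet series converges absolutely exactly when $\rho^{\operatorname{Re}(s)}<R$, so the abscissa of convergence $\sigma$ is the unique solution of $\rho^{\sigma}=R$; rearranged, this is precisely equation \eqref{eq:absconvsats}, and uniqueness follows since $\sigma\mapsto\rho^{\sigma}\,K^K/\prod_i k_i^{k_i}$ decreases strictly from $+\infty$ to $0$. Solving for $\sigma$ and dividing numerator and denominator by $K$ gives the closed form in \eqref{eq:msdistinctreg}, and comparing this with Proposition \ref{prp:hausbes} applied to the probability vector $\bfq=\bfk/K$ identifies $\sigma$ with $\dim_H(E(\bfk/K))$.

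Finally, for the positivity of $\sigma$ I would note that the numerator $\sum_i(k_i/K)\log(k_i/K)$ is the negative of the Shannon entropy of $\bfk/K$, hence nonpositive, while the denominator $\sum_i(k_i/K)\log r_i$ is strictly negative because some $k_i>0$ carries a ratio $r_i\in(0,1)$; thus $\sigma\geq0$, with $\sigma>0$ precisely when $\bfk/K$ is not concentrated in a single coordinate (the extreme primitive vectors $\bfk=e_i$ giving $\sigma=0$). The main obstacle is the selection step: one must verify that hypothesis \textnormal{(\textbf{H})}, through the reduction to a primitive vector, excludes \emph{every} stray tuple of the same regularity, so that at each level the inner sum over fixed regularity collapses to a single multinomial term. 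Once this combinatorial bookkeeping is secured, the remaining asymptotic analysis is routine.
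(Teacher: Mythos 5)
Your proposal is correct and follows essentially the same route as the paper's own proof: the same accounting of intervals (hypothesis (\textbf{H}) forces every tuple of regularity $\alpha(\bfk)$ to be a positive integer multiple $n\bfk$, so only the partitions $\mathcal{P}_{nK}$ contribute, each through a single multinomial term), the same Stirling-plus-root-test computation, and the same appeal to Proposition \ref{prp:hausbes} with $\bfq=\bfk/K$ to identify $\sigma$ with $\dim_H(E(\bfk/K))$. Your repackaging of the root test as a radius-of-convergence statement for the power series in $w=\rho^s$, with $R=\prod_i k_i^{k_i}/K^K$, is equivalent to the paper's treatment of $\kappa_{n,s}^{1/n}$, and your handling of the selection step is if anything more careful than the paper's, since you verify both directions of the equivalence $\alpha(\mathbf{m})=\alpha(\bfk)\Leftrightarrow\mathbf{m}=n\bfk$.

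The one place you diverge is the final claim that $\sigma>0$, and there your sharper bookkeeping works in your favor. The paper argues that $\varphi(0)=K^K/(k_1^{k_1}\cdots k_N^{k_N})>1$ ``since not all of the integers $k_i$ are zero,'' but this inequality fails precisely for the primitive vectors $\bfk=e_i$ (one component equal to $1$, the rest zero, which do satisfy $\gcd(k_1,\dots,k_N)=1$ and $\bfk\neq\mathbf{0}$): there $\varphi(0)=1$, the partition zeta function reduces to the geometric series $\sum_{n\geq1}r_i^{ns}$, and the abscissa of convergence is exactly $0$, in agreement with your entropy argument and with Eqs.~\eqref{eq:msdistinctreg} and \eqref{eq:absconvsats}, both of which give $\sigma=0$ in that case. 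So the theorem's unconditional assertion ``$\sigma>0$'' should carry the implicit assumption that at least two components of $\bfk$ are nonzero; your proof correctly isolates this edge case, which the paper's proof overlooks.
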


\s

\begin{proof} Every interval from $\mathfrak{P}$ is taken into
account since \(\gcd(k_1,\dots,k_N)=1\). Specifically, for the given positive integers $(z_1,...,z_N)$, there is an integer $m \in \N^*$ such that $mk_i=z_i$ for each $i$; hence the corresponding intervals
have the same regularity since $\alpha(\bfk) = \alpha(n\bfk)$ for all $n \in \N^*$. Thus, $\alpha=\alpha(\bfk)$ is attained by $\mu$ in $\mathcal{P}_{nK}$ for each $n \in \N^*$ and the corresponding intervals contribute their lengths to the same partition zeta function. The coefficients $\binom{nK}{nk_1\cdots nk_N}$ stem from the multinomial distribution of mass among the intervals in the partitions $\mathcal{P}_{nK}$.

\ndnt To determine the abscissa of convergence function $f_{\mathfrak{P}}^{\mu}(\alpha(\bfk))$, an application of Stirling's formula and the $n$-th root test allows for the
formulation of the abscissa of convergence, written $\sigma$ for notational convenience, in terms of the
$N$-tuples  $\bfk=(k_1,\dots,k_N)$ and $\bfr=(r_1,\ldots,r_N)$. Indeed, for a fixed real number $s$, Stirling's formula yields
\begin{align*}
\kappa_{n,s} &:= \binom{nK}{nk_1\cdots nk_N}(r_1^{k_1}\cdots r_N^{k_N})^{ns}\\
&= \frac{(nK)!}{(nk_1)!\cdots (nk_N)!}(r_1^{k_1}\cdots r_N^{k_N})^{ns} \\
&= \frac{(r_1^{k_1}\cdots r_N^{k_N})^{ns} K^{nK}}{k_1^{nk_1} \cdots k_N^{nk_N}} \cdot \frac{\sqrt{K}}{\sqrt{k_1} \sqrt{2\pi nk_2} \cdots \sqrt{2\pi nk_N}}(1+\varepsilon_{n}),
\end{align*}
where $\varepsilon_{n} \rightarrow 0$ as $n \rightarrow \infty$. Hence,
\[
\kappa_{n,s}^{1/n}= \frac{(r_1^{k_1}\cdots r_N^{k_N})^{s} K^{K}}{k_1^{k_1} \cdots k_N^{k_N}}(1+\delta_{n}),
\]
where $\delta_{n} \rightarrow 0$ as $n \rightarrow \infty$. Therefore, according to the $n$-th root test, the numerical series
\[
\sum_{n=1}^{\infty} \kappa_{n,s} = \sum_{n=1}^{\infty}\binom{nK}{nk_1\cdots nk_N}(r_1^{k_1}\cdots
r_N^{k_N})^{ns}
\]
converges for $s > \rho$ and diverges for $s < \rho$, where $\rho$ is the unique real number such that
\[
1=(r_1^{k_1}\cdots r_N^{k_N})^{\rho}\frac{K^{K}}{k_1^{k_1} \cdots k_N^{k_N}}.
\]
Equivalently,
\[
f_{\mathfrak{P}}^{\mu}(\alpha(\bfk))= \rho = \log_{r_1^{k_1}\cdots r_N^{k_N}}\left(\frac{k_1^{k_1} \cdots k_N^{k_N}}{K^K}\right) = \frac{\sum_{i=1}^N (k_i/K)\log (k_i/K)}{\sum_{i=1}^N  (k_i/K)\log{r_i}}.
\]
(The fact that $\rho$ is well defined and $\rho >0$ will be explained at the end of the proof.) By definition of the abscissa of convergence $\sigma$ (see Definition \ref{dfn:abscon}), it follows that
\[
f_{\mathfrak{P}}^{\mu}(\alpha(\bfk))= \sigma = \rho,
\]
which in light of the previous expression for $\rho$, establishes part of Theorem \ref{thm:distinctreg}.

\ndnt Finally, by Proposition \ref{prp:hausbes} applied to $\bfr$ and the probability vector  $\bfq = \bfk/K$, we have
\[
f_{\mathfrak{P}}^{\mu}(\alpha(\textnormal{\bfk}))= \sigma= \frac{\sum_{i=1}^N (k_i/K)\log (k_i/K)}{\sum_{i=1}^N
(k_i/K)\log{r_i}} = \dim_H(E(\textnormal{\bfk}/K)),
\]
where $E(\bfk/K)$ is the Besicovitch subset referred to in Theorem \ref{thm:distinctreg}. (See, also, \cite{CM,LOW,MinYa}.) This last equality, combined with the fact that $\sigma=\rho$, enables us to conclude the proof of the main statement of Theorem \ref{thm:distinctreg}.

\ndnt As promised, we now supplement the proof by briefly explaining why $\rho$ is well defined as the unique real solution of the equation $\varphi(s)=1$, where
\[
\varphi(s):= (r_1^{k_1} \cdots r_N^{k_N})^s\frac{K^{K}}{k_1^{k_1} \cdots k_N^{k_N}}.
\]
First, note that $\varphi$ is strictly decreasing on $(-\infty,\infty)$; indeed, $\varphi'(s)<0$ since $0<r_i<1$ for $i=1,\ldots,N$. Second, observe that $\varphi(0)=K^{K}/(k_1^{k_1} \cdots k_N^{k_N})>1$
since  $K=\sum_{i=1}^{N}k_i$ and not all of the integers $k_i$ are zero for $i=1,\ldots,N$. (As noted above, the convention $0^0=1$ is used.) Hence, $\rho$ is well defined and $\rho >0$, as desired.
\end{proof}

\begin{rk}\label{rk:distinctreg}
\textnormal{We believe that the assumption (\textbf{H}) made in Theorem \ref{thm:distinctreg} is essentially superfluous. That is, we expect that the regularity values $\alpha(\bfk)$ are either always distinct or can be broken down into distinct values with corresponding multiplicities, as is done (in a special case) in Lemma \ref{lem:regssm} below.}
\end{rk}

\begin{rk}\label{rk:betadistinctreg}
\textnormal{Theorem \ref{thm:distinctreg} applies to the results on the binomial measure $\beta$ presented in \S\ref{MMCS}. In that setting, $\alpha(k_2,K) = \log_r(p_1)+(k_2/K)\log_r\left(p_2/p_1\right)$ and these regularity values are distinct as long as $p_1<p_2$. Indeed, using the substitution $x = k_2/K$, we have that $\alpha(k_2,K) = \alpha(x) = \log_r(p_1)+x\log_r\left(p_2/p_1\right)$ is a nonconstant linear function.}
\end{rk}

\ndnt The condition of Theorem \ref{thm:distinctreg} requiring \(\alpha(z_1,...,z_N) = \alpha(\textnormal{\bfk})\) if and only if there exists $m \in \N^*$ such that $z_i = mk_i$ for all $i \in \{1,\ldots,N\}$ is not a necessary condition. Indeed, the next section shows that this condition can be replaced with rational independence of the logarithm of the probability values when there is a single distinct scaling ratio used in the IFS. Moreover, the following corollary shows that, at least in the specific case of the binomial measure $\beta_0$ on the unit interval $[0,1]$, we have $\hat{f}_{\mathfrak{P}}^{\beta_0}(t) = f_g(t) = f_s(t) = b^*(t)$,
where $\hat{f}_{\mathfrak{P}}^{\beta_0}$ is the concave envelope of $f_{\mathfrak{P}}^{\beta_0}$ on the interval $[t_{\min},t_{\max}]$. (See \cite{Bes1,Bes2,DekLi,Egg,MinYa}, as well as Figs.~3 and 5, but note that for $\beta_0$ we have $\bfr=(1/2,1/2)$ and for $\beta$ in \S\ref{MMCS} we have $\bfr=(1/3,1/3)$.)

\ndnt The following corollary also stems from the discussion that follows \cite[Theorem 4.2]{LapRo1} and the connection to the binomial measure on the unit interval (called $\beta_0$ below) discussed, for instance, in \cite{EvMan,Falc,Fed}.

\begin{cor}\label{cor:recoverbinom}
Consider the binomial measure $\beta_0$ defined by the similarities $S_1(x)=x/2$ and $S_2(x)=x/2+1/2$ with scaling ratios $\textnormal{\bfr}= (1/2,1/2)$ and the probability vector $\textnormal{\bfp} = (1/3,2/3)$. Then, for all $t \in [t_{\min},t_{\max}]$, we have
\begin{equation}\label{eq:correcoverbinom}
\hat{f}_{\mathfrak{P}}^{\beta_0}(t) = f_g(t) = f_s(t) = b^*(t),
\end{equation}
where $\hat{f}_{\mathfrak{P}}^{\beta_0}$ is the concave envelope of $f_{\mathfrak{P}}^{\beta_0}$ on the interval $[t_{\min},t_{\max}]$. \textnormal{(}Here, $[t_{\min},t_{\max}]=[\log_{2}3-1,\log_{2}3]$.\textnormal{)}
\end{cor}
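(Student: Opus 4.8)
The plan is to apply Theorem~\ref{thm:distinctreg} to the binomial measure $\beta_0$ with $\textnormal{\bfr}=(1/2,1/2)$ and $\textnormal{\bfp}=(1/3,2/3)$, compute the resulting abscissa of convergence function $f_{\mathfrak{P}}^{\beta_0}$ explicitly as a function of regularity $\alpha$, and then identify its concave envelope $\hat{f}_{\mathfrak{P}}^{\beta_0}$ with the symbolic spectrum $b^*$; the equalities $f_g=f_s=b^*$ on $[t_{\min},t_{\max}]$ are then furnished by Proposition~\ref{prp:geomsym}. First I would verify that hypothesis~(\textbf{H}) holds in this case. As noted in Remark~\ref{rk:betadistinctreg}, with $N=2$ and a single scaling ratio $r=1/2$, the regularity value reduces to the affine expression $\alpha(x)=\log_{1/2}p_1 + x\log_{1/2}(p_2/p_1)$ where $x=k_2/K$; since $p_1=1/3\neq 2/3=p_2$, this is a nonconstant linear function of $x$, so distinct reduced ratios $k_2/K$ (equivalently, distinct coprime $\textnormal{\bfk}$) yield distinct regularities. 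Thus Theorem~\ref{thm:distinctreg} applies and gives, for coprime $\textnormal{\bfk}=(k_1,k_2)$ with $K=k_1+k_2$,
\[
f_{\mathfrak{P}}^{\beta_0}(\alpha(\textnormal{\bfk})) = \frac{(k_1/K)\log(k_1/K)+(k_2/K)\log(k_2/K)}{(k_1/K)\log(1/2)+(k_2/K)\log(1/2)} = -x\log_2 x - (1-x)\log_2(1-x),
\]
where $x=k_2/K$, exactly as in the $\beta$-computation of \S\ref{MMCS} but with base $2$ replacing base $3$ in the logarithms (since here $r=1/2$).

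Next I would reparametrize in terms of $t$ rather than $\alpha$ and produce the function whose concave envelope is to be taken. The key point is that $f_{\mathfrak{P}}^{\beta_0}$ is, by construction, only defined at the countably many rational arguments $\alpha(\textnormal{\bfk})$ coming from coprime $\textnormal{\bfk}$; the underlying shape, however, is governed by the continuous entropy-type function $g(x):=-x\log_2 x-(1-x)\log_2(1-x)$ for $x\in[0,1]$. Passing through the affine change of variables $\alpha=\log_2 3 - x$ (so that $x=\log_2 3-\alpha$, matching the formula $x=(1-\alpha)/\log_3 2$ of \S\ref{MMCS} after the base change), I would express $f_{\mathfrak{P}}^{\beta_0}$ as the restriction to rational $x\in[0,1]$ of $g(x)$, and verify that the endpoints of the regularity interval correspond to $x=0$ and $x=1$, giving $[t_{\min},t_{\max}]=[\log_2 3-1,\log_2 3]$ as claimed. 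Then I would compute the Legendre transform side: with $r_1=r_2=1/2$ and $p_1=1/3,p_2=2/3$, the defining equation $\sum_i p_i^q r_i^{b(q)}=1$ becomes $(1/3)^q 2^{-b(q)}+(2/3)^q 2^{-b(q)}=1$, which solves explicitly to $b(q)=\log_2\!\big((1/3)^q+(2/3)^q\big)$. Computing $b^*(t)=\inf_q(tq+b(q))$ and simplifying via the standard binomial-measure calculation recovers the same entropy function $g$, reparametrized by $t$; this is the content of the connection to \cite{EvMan,Falc,Fed} cited in the statement.

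The final and most delicate step is the identification of the concave envelope. Since $g(x)=-x\log_2 x-(1-x)\log_2(1-x)$ is already strictly concave on $[0,1]$, and $f_{\mathfrak{P}}^{\beta_0}$ is precisely the restriction of (a reparametrized) $g$ to a dense set of rational points in the regularity interval, I expect the concave envelope $\hat{f}_{\mathfrak{P}}^{\beta_0}$ to simply fill in $g$ on the whole interval $[t_{\min},t_{\max}]$. The main obstacle is making this rigorous: one must argue that taking the concave envelope of the restriction of a strictly concave continuous function to a dense subset of its domain returns the full function on the closed interval. This follows because the concave envelope dominates the function on the dense set, strict concavity and continuity force equality in the limit at every point, and density of $\{k_2/K : \gcd(k_1,k_2)=1\}$ in $[0,1]$ (a standard fact, essentially the density of the Farey/Stern--Brocot fractions) guarantees there are no gaps. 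Once $\hat{f}_{\mathfrak{P}}^{\beta_0}=g=b^*$ is established on $[t_{\min},t_{\max}]$, Proposition~\ref{prp:geomsym} gives $f_g(t)=f_s(t)=b^*(t)$ there, completing the chain of equalities in Eq.~\eqref{eq:correcoverbinom}.
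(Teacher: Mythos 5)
Your proposal is correct, and its skeleton matches the paper's proof: apply Theorem \ref{thm:distinctreg} to obtain the entropy formula $-x\log_2 x-(1-x)\log_2(1-x)$ at the dense family of regularity values $\alpha(\textnormal{\bfk})=\log_2 3-k_2/K$, identify $[t_{\min},t_{\max}]=[\log_2 3-1,\log_2 3]$, pass to the concave envelope using density and concavity, and close the chain of equalities with Proposition \ref{prp:geomsym}. The one genuine difference is the anchor against which the entropy function is matched. The paper quotes from \cite{EvMan} the explicit formula for the geometric spectrum $f_g$ and verifies $f_{\mathfrak{P}}^{\beta_0}(t_1)=f_g(t_1)$ at each attained regularity $t_1$; the envelope step then yields $\hat{f}_{\mathfrak{P}}^{\beta_0}=f_g$, and Proposition \ref{prp:geomsym} supplies $f_g=f_s=b^*$ afterwards. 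You instead compute $b^*$ directly as the Legendre transform of $b(q)=\log_2\bigl((1/3)^q+(2/3)^q\bigr)$, identify it with the (reparametrized) entropy function, conclude $\hat{f}_{\mathfrak{P}}^{\beta_0}=b^*$, and only then invoke Proposition \ref{prp:geomsym} to pick up $f_g$ and $f_s$. Your route is somewhat more self-contained: it replaces the imported formula for $f_g$ (a nontrivial result about the geometric Hausdorff spectrum) by an elementary calculus computation on the $b^*$ side, at the cost of leaving that computation as a ``standard calculation'' --- a trade the paper makes in mirror image by citing \cite{EvMan}. You are also more explicit than the paper on two points it leaves implicit: the verification of hypothesis (\textbf{H}) via the nonconstant affine dependence of $\alpha$ on $k_2/K$, and the justification that the concave envelope of a concave continuous function restricted to a dense set of arguments recovers the whole function on the closed interval (where, incidentally, strict concavity is not needed; concavity, continuity, and nonnegativity of the entropy function --- so that it dominates the value $0$ assigned to trivial regularities --- suffice).
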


\begin{proof}
By Theorem \ref{thm:distinctreg}, we have
\begin{align*}
f_{\mathfrak{P}}^{\beta_0}(\alpha(\bfk)) &= \frac{(k_1/K)\log(k_1/K)+(k_2/K)\log(k_2/K)}{(k_1/K)\log{r_1}+(k_2/K)\log{r_2}}\\
&= -\frac{k_1}{K}\log_2\left(\frac{k_1}{K}\right)-\frac{k_2}{K}\log_2\left(\frac{k_2}{K}\right)\\
&= -\left(1-\frac{k_2}{K}\right)\log_2\left(1-\frac{k_2}{K}\right)-\frac{k_2}{K}\log_2\left(\frac{k_2}{K}\right).
\end{align*}
Further, we have
\[
\alpha(\bfk) = \frac{\log(2^{k_2}/3^K)}{\log(1/2^K)} = \log_2{3}-\frac{k_2}{K}.
\]
Since $k_2$ is a nonnegative integer and $K$ is a positive integer where $k_2 \leq K$, the maximum value of $\alpha(\bfk)$ is $\log_2{3}=t_{\max}$ and the minimum value of $\alpha(\bfk)$ is $\log_2{3}-1=t_{\min}$.

\ndnt Now, according to \cite{EvMan}, we have
\[
f_g(t) = -\frac{t_{\max}-t}{t_{\max}-t_{\min}}\log_2\left(\frac{t_{\max}-t}{t_{\max}-t_{\min}}\right)
         -\frac{t-t_{\min}}{t_{\max}-t_{\min}}\log_2\left(\frac{t-t_{\min}}{t_{\max}-t_{\min}}\right),
\]
where $t \in [t_{\min},t_{\max}] = [\log_2{3}-1,\log_2{3}]$.

\ndnt For $t_1=\log_2{3}-k_2/K=\alpha(\bfk)$, we have
$t_{\max}-t_{\min}=1,$ $t_{\max}-t_1=k_2/K$, and $t_1-t_{\min}=1-k_2/K$.
Substitution yields
\[
f_{\mathfrak{P}}^{\beta_0}(t_1) = f_g(t_1).
\]
\ndnt Note that the collection of all regularity values $t_1=\log_2{3}-k_2/K=\alpha(\bfk)$ is a dense subset of $[t_{\min},t_{\max}]=[\log_2{3}-1,\log_2{3}]$; furthermore, recall that $f_g(t)$ is concave (see \S\ref{Props}). Therefore, by Proposition \ref{prp:geomsym} we have
\[
\hat{f}_{\mathfrak{P}}^{\beta_0}(t) = f_g(t) = f_s(t) = b^*(t)
\]
on $[t_{\min},t_{\max}]$.
\end{proof}

\begin{rk}\label{rk:binom}
\textnormal{Corollary \ref{cor:recoverbinom} immediately holds in the slightly more general case where the components of the probability vector $\bfp$ are distinct (that is, $p_1\neq p_2$). However, in this paper we address only the case $\bfp=(1/3,2/3)$, for clarity of exposition.}
\end{rk}

\ndnt The next corollary is a simple consequence of Theorem \ref{thm:distinctreg} and the Moran equation.  

\begin{cor}\label{cor:recoverhausdorff}
Assume that the conditions of Theorem \ref{thm:distinctreg} hold and, additionally, that $\bfr=(\lambda,\ldots,\lambda)$ where $0<\lambda \leq 1/N$. Then 
\begin{equation}\label{eq:correcoverhaus}
f_{\mathfrak{P}}^{\mu}(\alpha(1,\ldots,1)) = \dim_H(\textnormal{supp}(\mu)).
\end{equation}
\end{cor}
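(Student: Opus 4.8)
The plan is to evaluate the closed-form expression for the abscissa of convergence furnished by Theorem~\ref{thm:distinctreg} at the particular multi-index $\bfk=(1,\ldots,1)$ and to match the outcome against the solution of the Moran equation~\eqref{eq:moran}. First I would verify admissibility: the entries of $\bfk=(1,\ldots,1)$ satisfy $\gcd(1,\ldots,1)=1$ and $\bfk\neq\textbf{0}$, so hypothesis~(\textbf{H}) applies and Eq.~\eqref{eq:msdistinctreg} is available for this choice. Since $k_i=1$ for all $i$, we have $K:=\sum_{i=1}^{N}k_i=N$, and hence $k_i/K=1/N$ for every $i$.

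Next I would substitute these values into Eq.~\eqref{eq:msdistinctreg}. The numerator $\sum_{i=1}^N (k_i/K)\log(k_i/K)$ collapses to $N\cdot(1/N)\log(1/N)=-\log N$, while the denominator $\sum_{i=1}^N (k_i/K)\log r_i$, using $r_i=\lambda$ for all $i$, collapses to $N\cdot(1/N)\log\lambda=\log\lambda$. This yields
\[
f_{\mathfrak{P}}^{\mu}(\alpha(1,\ldots,1))=\frac{-\log N}{\log\lambda}=\frac{\log N}{\log(1/\lambda)}=\log_{1/\lambda}N.
\]

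In parallel I would compute $\dim_H(\textnormal{supp}(\mu))$ directly from the Moran equation~\eqref{eq:moran}, namely $\sum_{i=1}^N r_i^s=1$. With $r_i=\lambda$ for all $i$ this reduces to $N\lambda^s=1$, i.e.\ $\lambda^s=1/N$, whose unique positive solution is $s=\log N/\log(1/\lambda)=\log_{1/\lambda}N$. The standing hypothesis $0<\lambda\le 1/N$ ensures $\sum_{i=1}^N r_i=N\lambda\le 1$ (consistent with the assumptions of \S\ref{SSMNSP}) and $\lambda<1$, so $\log\lambda<0$ and the dimension is a well-defined positive number. Comparing the two displays gives $f_{\mathfrak{P}}^{\mu}(\alpha(1,\ldots,1))=\dim_H(\textnormal{supp}(\mu))$, as asserted in Eq.~\eqref{eq:correcoverhaus}.

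I do not expect a genuine obstacle: the statement is a direct specialization of Theorem~\ref{thm:distinctreg} in which both sides reduce to the same explicit value $\log_{1/\lambda}N$. The only points warranting care are confirming admissibility of $\bfk=(1,\ldots,1)$ under~(\textbf{H}) and recognizing that the equal-ratio, uniform-frequency case of the Besicovitch formula~\eqref{eq:besprop}---here $\bfk/K=(1/N,\ldots,1/N)$---is precisely the full-dimension subset of $E$, so that $\dim_H(E(\bfk/K))$ coincides with $\dim_H(E)$ as computed by~\eqref{eq:moran}. Either route, the direct substitution above or this Besicovitch-subset interpretation, closes the argument.
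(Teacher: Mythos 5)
Your proposal is correct and follows essentially the same route as the paper's proof: specialize Eq.~\eqref{eq:msdistinctreg} of Theorem~\ref{thm:distinctreg} at $\bfk=(1,\ldots,1)$ (so $K=N$) to get $-\log_{\lambda}N=\log_{1/\lambda}N$, then observe that the Moran equation~\eqref{eq:moran} reduces to $N\lambda^{s}=1$ with the same unique positive solution. The extra checks you include (admissibility under (\textbf{H}), sign of $\log\lambda$, the Besicovitch-subset interpretation) are harmless additions to the paper's more terse argument.
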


\begin{proof}
In this case, $K=N$ and by Theorem \ref{thm:distinctreg} we have 
\[
f_{\mathfrak{P}}^{\mu}(\alpha(1,\ldots,1)) = -\log_{\lambda}N.
\]
The Moran equation (Eq.~\eqref{eq:moran}) then becomes 
\[
\sum_{i=1}^N \lambda^s = N \lambda^s =1,
\] 
which has the unique solution $s=-\log_{\lambda}N$ when $s>0$.
\end{proof}

\ndnt In light of Theorem \ref{thm:distinctreg}, Theorem \ref{thm:smr} (an analog of Theorem \ref{thm:distinctreg} in \S\ref{SMR} below),  Corollary \ref{cor:recoverbinom}, and Corollary \ref{cor:recoverhausdorff}, we make the following conjecture:

\begin{conj}\label{conj:spectra}
For a self-similar measure $\mu$ on $[0,1]$ and its natural sequence of partitions $\mathfrak{P}$, we have
\begin{equation}\label{eq:conj}
\hat{f}_{\mathfrak{P}}^{\mu}(t) = f_g(t) = f_s(t) = b^*(t),
\end{equation}
for all $t \in [t_{\min},t_{\max}]$, where $\hat{f}_{\mathfrak{P}}^{\mu}(t)$ is the concave envelope of $f_{\mathfrak{P}}^{\mu}(\alpha)$ on $[t_{\min},t_{\max}]$.
\end{conj}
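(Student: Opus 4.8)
The plan is to collapse the four-fold identity to the single assertion $\hat{f}_{\mathfrak{P}}^{\mu} = b^*$ on $[t_{\min},t_{\max}]$, since Proposition \ref{prp:geomsym} already supplies $f_g = f_s = b^*$ there (as $q$ ranges over $\R$, $-b'(q)$ sweeps out the interior of $[t_{\min},t_{\max}]$). The geometric engine of the argument is the interpretation, implicit in Theorem \ref{thm:distinctreg}, of the graph of $f_{\mathfrak{P}}^{\mu}$ in terms of the Besicovitch dimensions attached to frequency vectors. Writing $\Delta = \{\bfq \in \R^N \mid q_i \ge 0,\ \sum_i q_i = 1\}$ for the probability simplex, set
\[
g(\bfq) := \dim_H(E(\bfq)) = \frac{\sum_{i=1}^N q_i \log q_i}{\sum_{i=1}^N q_i \log r_i}, \qquad t(\bfq) := \frac{\sum_{i=1}^N q_i \log p_i}{\sum_{i=1}^N q_i \log r_i},
\]
so that $t(\bfk/K) = \alpha(\bfk)$ and, by Proposition \ref{prp:hausbes}, $f_{\mathfrak{P}}^{\mu}(\alpha(\bfk)) = g(\bfk/K)$ under hypothesis (\textbf{H}). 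First I would record two standard facts of the multifractal formalism: since any point of $E(\bfq)$ has local regularity $t(\bfq)$, one has $E(\bfq) \subseteq E_{t(\bfq)}$ and hence $g(\bfq) \le f_g(t(\bfq)) = b^*(t(\bfq))$; and the equilibrium weight $q_i = p_i^{q} r_i^{b(q)}$ realizes $t(\bfq) = -b'(q)$ and $g(\bfq) = b^*(-b'(q))$, so the supremum $\sup\{g(\bfq) \mid t(\bfq)=t\} = b^*(t)$ is attained. Consequently the upper concave boundary of the planar point set $\{(t(\bfq),g(\bfq)) \mid \bfq \in \Delta\}$ is exactly the graph of the concave function $b^*$.

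Next I would pin down $f_{\mathfrak{P}}^{\mu}$ itself. When hypothesis (\textbf{H}) holds, Theorem \ref{thm:distinctreg} gives $f_{\mathfrak{P}}^{\mu}(\alpha(\bfk)) = g(\bfk/K)$ for every primitive $\bfk$ (i.e. $\gcd(k_1,\dots,k_N)=1$). When several primitive directions share one regularity value $\alpha$, their $\alpha$-lengths are pooled into a single Dirichlet series; for finitely many directions the pooled abscissa is the maximum of the individual abscissae, so in all such cases
\[
f_{\mathfrak{P}}^{\mu}(\alpha) = \max\{\, g(\bfk/K) \mid \gcd(k_1,\dots,k_N)=1,\ \alpha(\bfk) = \alpha \,\} \le b^*(\alpha),
\]
the last inequality by the bound $g \le b^*\circ t$ from the first step. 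Since $b^*$ is concave, this yields the upper estimate $\hat{f}_{\mathfrak{P}}^{\mu} \le b^*$ on $[t_{\min},t_{\max}]$.

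For the reverse inequality I would exploit density, and here the passage to the \emph{concave envelope} is what makes the argument go through: rather than matching $f_{\mathfrak{P}}^{\mu}$ to $b^*$ along each fixed level set $\{t(\bfq)=\alpha\}$ (which would require delicate arithmetic density of rationals on a non-rational hyperplane), I only need the two-dimensional point set to reach up to $b^*$. Fix $t$ in the interior of $[t_{\min},t_{\max}]$ and let $\bfq^* \in \Delta$ be the interior equilibrium maximizer, with $t(\bfq^*) = t$ and $g(\bfq^*) = b^*(t)$. Choose rational $\bfq_n \to \bfq^*$, with perturbations taken so that $t(\bfq_n)$ approaches $t$ from both sides; by continuity of $t(\cdot)$ and $g(\cdot)$ the graph points $\bigl(t(\bfq_n), f_{\mathfrak{P}}^{\mu}(t(\bfq_n))\bigr)$ satisfy $f_{\mathfrak{P}}^{\mu}(t(\bfq_n)) \ge g(\bfq_n) \to b^*(t)$ while $t(\bfq_n)\to t$ (note the lower bound $f_{\mathfrak{P}}^{\mu}(\alpha(\bfk)) \ge g(\bfk/K)$ holds even without (\textbf{H}), since a single direction already contributes a subseries of that abscissa). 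Writing $t$ as a convex combination of two such regularities straddling it and invoking concavity of the envelope gives $\hat{f}_{\mathfrak{P}}^{\mu}(t) \ge b^*(t)$; the endpoints $t_{\min}, t_{\max}$, where $\bfq^*$ degenerates to a vertex, are handled directly as in Corollary \ref{cor:recoverbinom}. Combined with the upper bound, this gives $\hat{f}_{\mathfrak{P}}^{\mu} = b^* = f_g = f_s$.

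The hard part will be the upper estimate $f_{\mathfrak{P}}^{\mu}(\alpha) \le b^*(\alpha)$ in the genuinely degenerate regime: when the $r_i$ are not all equal, the map $\bfk \mapsto \alpha(\bfk)$ is far from injective, and for $N \ge 3$ a single regularity value can be shared by \emph{infinitely many} primitive directions. Then the pooled $\alpha$-length series is an infinite superposition of Dirichlet series, and its abscissa of convergence can a priori exceed the supremum of the individual abscissae (a countable sum of finite values may diverge just past the common threshold). Controlling this — i.e. establishing the coarse multifractal upper bound $f_{\mathfrak{P}}^{\mu} \le b^*$ — is exactly what is needed, since $\hat{f}_{\mathfrak{P}}^{\mu} \ge f_{\mathfrak{P}}^{\mu}$ forces any pointwise overshoot to survive the envelope. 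This is the gap flagged in Remark \ref{rk:distinctreg} and resolved only in the single-ratio, rationally independent case by Theorem \ref{thm:smr} of \S\ref{SMR}; removing hypothesis (\textbf{H}) in full generality, with a uniform bound on the pooled abscissae, is the crux of the conjecture and the reason it is stated for the concave envelope rather than for $f_{\mathfrak{P}}^{\mu}$ directly.
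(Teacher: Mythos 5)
There is no paper proof to compare against here: the statement is left open in the paper as Conjecture \ref{conj:spectra}, and the only guidance the authors give is Remark \ref{rk:conj}, which suggests proving the level-set identity $E_{\alpha(\textnormal{\bfk})} = E(\textnormal{\bfk}/K)$. Your proposal takes a genuinely different route --- squeezing the concave envelope between an upper bound by $b^*$ and a lower bound coming from Besicovitch subsets --- and most of its components are sound: the reduction via Proposition \ref{prp:geomsym}, the subseries estimate $f_{\mathfrak{P}}^{\mu}(\alpha(\textnormal{\bfk})) \geq \dim_H(E(\textnormal{\bfk}/K))$ (valid without hypothesis (\textbf{H}) because all terms of the Dirichlet series are positive), and the density-plus-concavity passage to the envelope at interior points $t$. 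Two points you gloss over do need real work: the inclusion $E(\textnormal{\bfq}) \subseteq E_{t(\textnormal{\bfq})}$ requires the cylinder-to-ball transfer under the OSC (as in \cite{CM}), and the endpoints $t_{\min},t_{\max}$, where the envelope need not be continuous, must be handled by the sub-Moran computation over the extreme index set $I=\{i \mid \log_{r_i}p_i = t_{\max}\}$ (whose pooled zeta function is geometric with abscissa the root of $\sum_{i\in I}r_i^s=1$), not by a citation of Corollary \ref{cor:recoverbinom}.

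The more important issue is that the step you single out as the unfillable crux --- the upper bound $f_{\mathfrak{P}}^{\mu}(\alpha) \leq b^*(\alpha)$ when infinitely many primitive directions pool into one regularity value --- is in fact the easiest step to complete, by the standard Legendre-transform (large deviations) trick. For any $q \in \R$, the definition \eqref{eq:defb} of $b(q)$ and the multinomial theorem give, for every $K \in \N^*$,
\[
1 = \Bigl(\sum_{i=1}^N p_i^q\, r_i^{b(q)}\Bigr)^{K}
= \sum_{k_1+\cdots+k_N=K} \binom{K}{k_1\ldots k_N}\,\bigl(p_1^{k_1}\cdots p_N^{k_N}\bigr)^q \bigl(r_1^{k_1}\cdots r_N^{k_N}\bigr)^{b(q)},
\]
and every $\textnormal{\bfk}$ with $\alpha(\textnormal{\bfk})=\alpha$ satisfies $p_1^{k_1}\cdots p_N^{k_N} = (r_1^{k_1}\cdots r_N^{k_N})^{\alpha}$ exactly, so the subsum of $\binom{K}{k_1\ldots k_N}(r_1^{k_1}\cdots r_N^{k_N})^{q\alpha+b(q)}$ over $\{\textnormal{\bfk} \mid k_1+\cdots+k_N=K,\ \alpha(\textnormal{\bfk})=\alpha\}$ is at most $1$ for every $K$. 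Hence for $s=q\alpha+b(q)+\epsilon$ with $\epsilon>0$, the \emph{entire} pooled partition zeta function $\zeta^{\mu}_{\mathfrak{P}}(\alpha,s)$ is bounded by $\sum_{K\geq 1}(\max_i r_i)^{K\epsilon}<\infty$, so its abscissa of convergence is at most $q\alpha+b(q)$ for every $q$, i.e., at most $\inf_{q}(q\alpha+b(q))=b^*(\alpha)$ --- uniformly, no matter how many directions pool, which is precisely the control whose absence you (correctly, in the abstract) worried about for countable superpositions of Dirichlet series. Inserting this bound closes the gap you flag and turns your outline into an essentially complete argument, modulo the two technical points above; the pooling phenomenon is not where the difficulty of the conjecture lies.
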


\begin{rk}\label{rk:conj}
\textnormal{Conjecture \ref{conj:spectra} would be proven in the case where the regularity values $\alpha(\bfk)$ are distinct if, for instance, one could show that for all
$\bfk=(k_1,\dots,k_N) \in \N^N$ such that $\gcd(k_1,\dots,k_N)=1$,}
\begin{equation*}\label{eq:conjrk}
E_{\alpha(\textnormal{\bfk})} = E(\textnormal{\bfk}/K),
\end{equation*}
\textnormal{where $E_{\alpha(\textnormal{\bfk})}$ is given by Eq.~\eqref{eq:ghmssubset} with $t=\alpha(\textnormal{\bfk})$ and $E(\textnormal{\bfk}/K)$ is given by Eq.~\eqref{eq:bessubset} with $\textnormal{\bfq}=\textnormal{\bfk}/K$. See \cite[Thm. A]{LOW}, \cite[Prop. 5.1]{FengLau}, and \cite{CM} for further evidence for the validity of Conjecture \ref{conj:spectra}.}
\end{rk}

\ndnt One condition of Theorem \ref{thm:distinctreg} that is required throughout the rest of this paper is that the components of $\bfk = (k_1,\dots,k_N)$ satisfy $\gcd(k_1,\dots,k_N)=1$. This condition guarantees that every interval stemming from $\mathfrak{P}$ with regularity $\alpha(\bfk)$ is taken into account in the corresponding partition zeta function. The next section develops and analyzes such partition zeta functions for a specific class of self-similar measures.

\subsection{Second Main Result: Two Distinct Probabilities\\
and a Single Scaling Ratio}\label{SMR}

Consider an IFS with $N$ contracting similarities and a single scaling ratio (Lipschitz constant) $r$ such that $0<r=r_i \leq 1/N$, for all $i \in \{1,\ldots,N\}$. In contrast to what happens for the results in \S\ref{FMR}, the collection of finite regularity values attained by the corresponding measure $\mu$ on the natural sequence of partitions $\mathfrak{P}$ in this setting is determined solely by the probability vector $\bfp$.

\ndnt Suppose there are $w \geq 2$ $(w \in \N^*)$ distinct values among the components of $\bfp=(p_1,\ldots,p_N)$. (The case where $w=1$ is a special case of Example \ref{eg:mono} below.) Denote the distinct probabilities by $\bfp' = (p_1',\ldots,p_w')$, and denote the multiplicity of each of the numbers $p_i'$ by $c_i$ for $i = 1,\ldots,w$. In this setting, \(\sum_{q=1}^{w}c_q=N\) and \(p_1^{k_1}\ldots p_N^{k_N} = (p_1')^{k_1'}\ldots (p_w')^{k_w'},\) where $\sum_{i=1}^N k_i = \sum_{q=1}^w k_q' = K$. We continue to use the convention $0\log0=0$, and therefore, $0^0=1$.

\begin{lem}\label{lem:regssm}
With $\mu$ and $\mathfrak{P}$ as above, if $\gcd(k_1',\ldots,k_w')=1$ and the numbers $\log_r{p_1'},\ldots,\log_r{p_w'}$ are rationally independent, then the distinct regularity values attained by $\mu$ on $\mathfrak{P}$ are given by
\begin{equation}\label{eq:lemreg}
\alpha(k_1',\ldots,k_w') = \frac{1}{K} \log_r \left((p_1')^{k_1'}\ldots (p_w')^{k_w'}\right).
\end{equation}
Moreover, for every $n \in \N^*$, the number of intervals $P$ with regularity value
\begin{equation*}\label{eq:lemregn}
\alpha(nk_1',\ldots,nk_w')=\alpha(k_1',\ldots,k_w')
\end{equation*}
in the partition $\mathcal{P}_{nK}$ is
\begin{equation*}\label{eq:lemmult}
\binom{nK}{nk_1,\ldots,nk_N} = \binom{nK}{nk_1',\ldots,nk_w'}c_1^{nk_1'}\cdots c_w^{nk_w'}.
\end{equation*}
\end{lem}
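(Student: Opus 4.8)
The plan is to treat the two assertions separately: first pin down the regularity values and prove they are genuinely distinct, then carry out the combinatorial count of the intervals realizing each value. The whole argument rests on the fact that, since a single scaling ratio $r$ is used, every interval $P$ in the partition $\mathcal{P}_m$ has length $|P|=r^{m}$ irrespective of the word coding it, so by Definition \ref{def:reg} the regularity of an interval at stage $m$ with ungrouped composition $\textbf{k}=(k_1,\ldots,k_N)$ depends only on its mass. Writing $k_q':=\sum_{i:\,p_i=p_q'}k_i$ for the exponent attached to the distinct probability $p_q'$, and noting $\sum_q k_q'=\sum_i k_i=m$, I would compute
\[
A(P)=\frac{\log\left(p_1^{k_1}\cdots p_N^{k_N}\right)}{\log\left(r^{m}\right)}=\frac{1}{m}\log_r\left((p_1')^{k_1'}\cdots(p_w')^{k_w'}\right),
\]
which is exactly Eq.~\eqref{eq:lemreg} once the grouped composition is primitive (so that $m=K:=\sum_q k_q'$) and shows that the regularity factors through the grouped composition $\textbf{k}'=(k_1',\ldots,k_w')$.

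Next I would show that the attained values are indexed precisely by primitive grouped tuples. Setting $a_q:=\log_r p_q'$, the value above is $\alpha(\textbf{k}')=\big(\sum_q k_q'a_q\big)\big/\big(\sum_q k_q'\big)$, which is invariant under the scaling $\textbf{k}'\mapsto n\textbf{k}'$; hence every attained value equals $\alpha(\textbf{k}'')$ for the unique primitive tuple obtained by dividing $\textbf{k}'$ by $\gcd(k_1',\ldots,k_w')$. The crux, and the step I expect to require the most care, is the distinctness of these values, which is exactly where the rational-independence hypothesis enters. Suppose $\alpha(\textbf{k}')=\alpha(\mathbf{z}')$ for primitive tuples, with $K=\sum_q k_q'$ and $L=\sum_q z_q'$; clearing denominators yields the integer linear relation
\[
\sum_{q=1}^{w}\left(L k_q'-K z_q'\right)a_q=0
\]
among $a_1=\log_r p_1',\ldots,a_w=\log_r p_w'$. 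Rational independence forces every coefficient to vanish, so $L k_q'=K z_q'$ for all $q$, i.e.\ $\textbf{k}'$ and $\mathbf{z}'$ are proportional; since both are primitive nonnegative integer tuples, they must coincide. Combined with the reduction above, this proves that $\textbf{k}'\mapsto\alpha(\textbf{k}')$ is a bijection from primitive tuples onto the set of attained regularity values. This is the analog, in the present setting, of hypothesis (\textbf{H}) in Theorem \ref{thm:distinctreg}: without rational independence the $\alpha(\textbf{k}')$ could collide for non-proportional tuples and the clean count below would fail.

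For the multiplicity statement I would first use the distinctness just established to observe that $\alpha(\textbf{k}')$ is attained at stage $m$ only when $m=nK$, and then only by intervals whose grouped composition is exactly $(nk_1',\ldots,nk_w')$. Counting such intervals reduces to counting length-$nK$ words over $\{1,\ldots,N\}$ in which the letters of the $q$-th probability class occur exactly $nk_q'$ times: choosing which positions belong to each class contributes the multinomial coefficient $\binom{nK}{nk_1',\ldots,nk_w'}$, and filling each class independently---there being $c_q$ symbols sharing the value $p_q'$---contributes $c_q^{nk_q'}$, giving
\[
\binom{nK}{nk_1',\ldots,nk_w'}\,c_1^{nk_1'}\cdots c_w^{nk_w'}.
\]
Equivalently, this is the group-wise application of the elementary identity $\sum\binom{M}{a_1\cdots a_c}=c^{M}$ to the sum of ungrouped multinomial coefficients $\binom{nK}{k_1,\ldots,k_N}$ over all $\textbf{k}$ with grouped composition $(nk_1',\ldots,nk_w')$, which is the content of the displayed equality in the lemma. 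The regularity computation and this counting identity are routine; the genuine obstacle remains the distinctness argument of the preceding paragraph.
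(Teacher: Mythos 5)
Your proof is correct and follows essentially the same route as the paper's: reduce the regularity of an interval to its grouped composition (possible because there is a single scaling ratio $r$), obtain distinctness of the values $\alpha(k_1',\ldots,k_w')$ from the rational independence of $\log_r p_1',\ldots,\log_r p_w'$, and count the realizing intervals by a multinomial argument. The only differences are ones of detail---you spell out the clearing-of-denominators and primitivity argument that the paper merely asserts, and you count words directly where the paper invokes the expansion of $(p_1+\cdots+p_N)^{nK}=(c_1p_1'+\cdots+c_wp_w')^{nK}$---but the substance is identical.
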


\begin{proof}
For every $N$-tuple of nonnegative integers $\bfk$, there exists an $n \in \N$ such that
\[
\alpha(\bfk)=\alpha(nk_1',\ldots,nk_w').
\]
The rational independence of the numbers $\log_r{p_i'}$ and the fact that
\[
\log_{r^K}\left((p_1')^{k_1'}\ldots (p_w')^{k_w'}\right) = \frac{1}{K}\sum_{i=1}^w k_i'\log_r{p_i'}
\]
imply that the regularity values $\alpha(k_1',\ldots,k_w')$ are indeed distinct. Moreover, we have
\[
(p_1+\cdots+p_N)^{nK} = (c_1p_1'+\cdots+c_wp_w')^{nK},
\]
which immediately yields
\[
\binom{nK}{nk_1,\ldots,nk_N} = \binom{nK}{nk_1',\ldots,nk_w'}c_1^{nk_1'}\cdots c_w^{nk_w'}.
\]
\end{proof}

\ndnt Lemma \ref{lem:regssm} allows us to determine the corresponding partition zeta functions and abscissa of convergence function in Proposition \ref{prp:pzfssm}, as we now explain.

\begin{prp}\label{prp:pzfssm}
If the conditions of Lemma \ref{lem:regssm} are satisfied, then
\begin{equation}\label{eq:prppzfssm}
\zeta^{\mu}_{\mathfrak{P}}(\alpha(\textnormal{\bfk}),s) =
\sum_{n=1}^{\infty} \binom{nK}{nk_1',\ldots,nk_w'}c_1^{nk_1'}\cdots c_w^{nk_w'}r^{nKs}
\end{equation}
and
\begin{equation}\label{eq:prpacfssm}
f_{\mathfrak{P}}^{\mu}(\alpha(\textnormal{\bfk}))=f_{\mathfrak{P}}^{\mu}(\alpha(k_1,...,k_N)) =
\log_{r^{K}} \left( \frac{(k_1')^{k_1'}\cdots (k_w')^{k_w'}}{c_1^{k_1'}\cdots c_w^{k_w'}K^K} \right).
\end{equation}
\end{prp}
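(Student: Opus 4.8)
The plan is to derive the two displayed formulas in turn: the first follows almost immediately from Lemma \ref{lem:regssm}, while the second parallels the abscissa-of-convergence computation already carried out in the proof of Theorem \ref{thm:distinctreg}, now organized over the $w$ distinct probability classes rather than the $N$ coordinates.

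First I would establish \eqref{eq:prppzfssm} directly from Definition \ref{def:pzf}. The essential simplification afforded by a single scaling ratio is that every interval of positive mass in the partition $\mathcal{P}_m$ has length exactly $r^m$, independently of its associated $N$-tuple $\bfk$. By Lemma \ref{lem:regssm}, the regularity value $\alpha(\bfk)=\alpha(k_1',\ldots,k_w')$ is attained precisely in the partitions $\mathcal{P}_{nK}$ for $n \in \N^*$ (namely by the intervals whose $N$-tuple is proportional to $\bfk$); each such interval has length $r^{nK}$, and the number of them is $\binom{nK}{nk_1',\ldots,nk_w'}c_1^{nk_1'}\cdots c_w^{nk_w'}$. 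Substituting the common length $|P|^s = r^{nKs}$ and this multiplicity into the inner sum of Eq.~\eqref{eq:pzf}, and summing over $n$, yields \eqref{eq:prppzfssm}.

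Next I would compute the abscissa of convergence $\sigma = f_{\mathfrak{P}}^{\mu}(\alpha(\bfk))$ of the Dirichlet series \eqref{eq:prppzfssm} by the $n$-th root test, exactly as in Theorem \ref{thm:distinctreg}. Writing $a_n := \binom{nK}{nk_1',\ldots,nk_w'}c_1^{nk_1'}\cdots c_w^{nk_w'}$, Stirling's formula applied to the $w$-term multinomial coefficient gives $a_n^{1/n} \to A$, where $A := \frac{K^K}{(k_1')^{k_1'}\cdots(k_w')^{k_w'}}\,c_1^{k_1'}\cdots c_w^{k_w'}$; the polynomial-in-$n$ prefactors produced by Stirling's formula contribute a factor tending to $1$ under the $n$-th root and hence do not affect the abscissa. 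Since $(a_n r^{nKs})^{1/n} \to A\,r^{Ks}$, the series converges for $A\,r^{Ks} < 1$ and diverges for $A\,r^{Ks} > 1$, so $\sigma$ is the unique real solution of $A\,r^{K\sigma}=1$, i.e. $\sigma = \log_{r^K}(1/A)$. Simplifying $1/A = \frac{(k_1')^{k_1'}\cdots(k_w')^{k_w'}}{c_1^{k_1'}\cdots c_w^{k_w'}\,K^K}$ gives precisely \eqref{eq:prpacfssm}. (That $\sigma$ is well defined follows, just as at the end of the proof of Theorem \ref{thm:distinctreg}, from the fact that $s \mapsto A\,r^{Ks}$ is strictly decreasing and takes the value $A \geq 1$ at $s=0$.)

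The computation is essentially routine given Lemma \ref{lem:regssm}; the only point requiring care is the bookkeeping in the root-test step, namely confirming that replacing the $N$-term multinomial coefficient by its $w$-term counterpart carrying the extra weights $c_q^{nk_q'}$ leaves the Stirling asymptotics intact, and that the resulting subexponential factors are harmless for the abscissa. This is the analog of the main estimate in Theorem \ref{thm:distinctreg}, and I expect it to be the only mildly delicate part of the argument.
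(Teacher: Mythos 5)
Your proposal is correct and takes essentially the same approach as the paper: Eq.~\eqref{eq:prppzfssm} is read off directly from Lemma \ref{lem:regssm}, and the abscissa of convergence is obtained via Stirling's formula and the $n$-th root test, arriving at the same defining equation $(r^{K})^{\sigma}\,c_1^{k_1'}\cdots c_w^{k_w'}K^{K}/\bigl((k_1')^{k_1'}\cdots (k_w')^{k_w'}\bigr)=1$ and hence at Eq.~\eqref{eq:prpacfssm}. Your closing remark on why this equation has a unique real root mirrors the monotonicity argument given at the end of the proof of Theorem \ref{thm:distinctreg}, which the paper's own proof of this proposition invokes implicitly.
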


\begin{proof}
Eq.~\eqref{eq:prppzfssm} follows at once from Lemma \ref{lem:regssm}. Next, much as in the proof of Theorem \ref{thm:distinctreg}, it can be seen that an application of Stirling's formula and the $n$-th root test implies that the abscissa of convergence $\gamma:=f_{\mathfrak{P}}^{\mu}(\alpha(\textnormal{\bfk}))$ of $\zeta^{\mu}_{\mathfrak{P}}(\alpha(\textnormal{\bfk}),\cdot)$ satisfies
\[
1= (r^{K})^{\gamma}\frac{c_1^{k_1'}\cdots c_w^{k_w'}  K^{K}}{(k_1')^{k_1'}
    \cdots (k_w')^{k_w'}}.
\]
That is, $\gamma>0$ is the unique real solution of the above equation. Hence,
\[
f_{\mathfrak{P}}^{\mu}(\alpha(\bfk)) =
\gamma = \log_{r^{K}} \left( \frac{(k_1')^{k_1'}\cdots (k_w')^{k_w'}}{c_1^{k_1'}\cdots c_w^{k_w'}K^K} \right),
\]
as desired.
\end{proof}

\ndnt In the special case of the conditions specified in Lemma \ref{lem:regssm} where $r_1= \ldots=r_N=r$ and $w = 2$, the abscissa of convergence function $f_{\mathfrak{P}}^{\mu}(\alpha(\bfk))$ can easily be expressed as a function of the single variable $\alpha = \alpha(k_2',K) = \alpha(\bfk)$. The results then mirror those described in \cite{CM,Ol4}, among others, in the case of ``map specified'' Moran fractals (i.e., generated by an IFS) as popularized by \cite{BarDem}.

\begin{thm}\label{thm:smr}
Assuming the conditions of Lemma \ref{lem:regssm} are satisfied and, specifically, $w=2$ and $r_1=\ldots=r_N=r$, then
we have
\begin{equation}\label{eq:pzfthmsmr}
\zeta^{\mu}_{\mathfrak{P}}(\alpha(\textnormal{\bfk}),s) = \zeta^{\mu}_{\mathfrak{P}}(\alpha(k_2',K),s)
= \sum_{n=1}^{\infty} \binom{nK}{nk_2'}(N-c_2)^{n(K-k_2')} c_2^{nk_2'}r^{nKs}
\end{equation}
and
\begin{equation}\label{eq:acfthmsmr}
f_{\mathfrak{P}}^{\mu}(\alpha(\textnormal{\bfk})) =
f_{\mathfrak{P}}^{\mu}(\alpha(k_2',K)) =
\log_{r^K} \left( \frac{(K-k_2')^{K-k_2'} (k_2')^{k_2'}}{(N-c_2)^{K-k_2'}c_2^{k_2'}K^K} \right).
\end{equation}
\textnormal{(}Recall that $c_1$ and $c_2$ denote the multiplicities, respectively, of the two distinct values of $p_1,\ldots,p_N$.\textnormal{)}

\ndnt Moreover, the concave envelope $\hat{f}_{\mathfrak{P}}^{\mu}$ of $f_{\mathfrak{P}}^{\mu}$ has infinite slope at the extreme values of the attained regularity values.

\ndnt Lastly, we have
\begin{equation}\label{eq:maxthmsmr}
\max_{\alpha(\textnormal{\bfk})}\{f_{\mathfrak{P}}^{\mu}(\alpha(\textnormal{\bfk}))\} = f_{\mathfrak{P}}^{\mu}(\alpha(c_2/N)) = \dim_H(\textnormal{supp}(\mu)).
\end{equation}
\end{thm}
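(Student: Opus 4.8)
The plan is to dispatch the three assertions in turn, obtaining the first two (Eqs.~\eqref{eq:pzfthmsmr} and~\eqref{eq:acfthmsmr}) as immediate specializations of Proposition~\ref{prp:pzfssm}, and reducing the last two to a one-variable calculus problem after passing to the natural parameter $x := k_2'/K$. For the first step I would set $w=2$ in Proposition~\ref{prp:pzfssm}. Since $c_1+c_2=N$ and $k_1'+k_2'=K$, substituting $c_1=N-c_2$ and $k_1'=K-k_2'$ and observing that the multinomial coefficient $\binom{nK}{nk_1',nk_2'}$ collapses to the binomial coefficient $\binom{nK}{nk_2'}$ turns Eq.~\eqref{eq:prppzfssm} into~\eqref{eq:pzfthmsmr} and Eq.~\eqref{eq:prpacfssm} into~\eqref{eq:acfthmsmr} verbatim; no further argument is needed here.

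For the maximum I would introduce $x:=k_2'/K\in[0,1]$ and rewrite~\eqref{eq:acfthmsmr} as a function of $x$. Expanding the logarithm with $K-k_2'=K(1-x)$ and $k_2'=Kx$, the $K\log K$ terms cancel and one obtains the compact form
\[
f_{\mathfrak{P}}^{\mu}(\alpha(x)) = \frac{(1-x)\log\frac{1-x}{N-c_2} + x\log\frac{x}{c_2}}{\log r}.
\]
Differentiating gives $\tfrac{d}{dx}f_{\mathfrak{P}}^{\mu}(\alpha(x)) = \tfrac{1}{\log r}\log\tfrac{x(N-c_2)}{(1-x)c_2}$, whose unique zero is $x=c_2/N$; since $\log r<0$ while the inner logarithm is increasing in $x$, the second derivative is negative, so $f_{\mathfrak{P}}^{\mu}$ is strictly concave in $x$ with a maximum at $x=c_2/N$. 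Evaluating there, where $1-c_2/N=c_1/N$ and the factors $N-c_2=c_1$ and $c_2$ cancel, yields $f_{\mathfrak{P}}^{\mu}(\alpha(c_2/N)) = -\log N/\log r = -\log_r N$. I would then check that $x=c_2/N$ is genuinely attained by taking $k_i'=c_i/\gcd(c_1,c_2)$, so that $\gcd(k_1',k_2')=1$ as required by Lemma~\ref{lem:regssm}, which confirms that the maximum over attained regularity values equals $f_{\mathfrak{P}}^{\mu}(\alpha(c_2/N))$. Comparing with the Moran equation~\eqref{eq:moran}, which here reads $Nr^s=1$ and hence has the solution $s=-\log_r N=\dim_H(\textnormal{supp}(\mu))$, establishes~\eqref{eq:maxthmsmr}.

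For the infinite-slope claim, I would let $x\to 0^+$ and $x\to 1^-$ in the derivative above: the inner logarithm tends to $-\infty$ and $+\infty$ respectively, so (dividing by $\log r<0$) the slope of $f_{\mathfrak{P}}^{\mu}$ in $x$ tends to $+\infty$ and $-\infty$. Since $\alpha(k_2',K)$ is an affine function of $x$ with nonzero slope $\log_r p_2'-\log_r p_1'$ (by Eq.~\eqref{eq:lemreg} and $p_1'\neq p_2'$), dividing by the constant $d\alpha/dx$ shows that the slope with respect to $\alpha$ is infinite at both extreme regularity values $\alpha(0),\alpha(1)$, which are themselves attained (at $K=1$).

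The computations are routine; the step requiring the most care is the passage from $f_{\mathfrak{P}}^{\mu}$ to its concave envelope. The point is to argue that $\hat{f}_{\mathfrak{P}}^{\mu}$ equals $f_{\mathfrak{P}}^{\mu}$ itself: because the affine reparametrization $x\mapsto\alpha$ preserves concavity, $f_{\mathfrak{P}}^{\mu}$ is concave as a function of $\alpha$, and since the attained rational values of $x$ are dense in $[0,1]$ and all lie on this concave curve, the smallest concave function lying above them is precisely the continuous concave extension of $f_{\mathfrak{P}}^{\mu}$. This identification is what lets one read off the slope of $\hat{f}_{\mathfrak{P}}^{\mu}$ from the elementary derivative computation for $f_{\mathfrak{P}}^{\mu}$, and it is also why one must verify that the extreme regularity values and the maximizer $x=c_2/N$ are actually attained rather than merely limiting.
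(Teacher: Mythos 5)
Your proposal is correct and follows essentially the same route as the paper's proof: both specialize Proposition \ref{prp:pzfssm} to $w=2$, pass to the variable $x=k_2'/K$, and establish concavity, the infinite endpoint slopes, and the maximum at $x=c_2/N$ by one-variable calculus, identifying the maximal value $-\log_r N$ with $\dim_H(\textnormal{supp}(\mu))$ via the Moran equation $Nr^s=1$. If anything, you are more careful than the paper on points it glosses over, namely the attainability of $x=c_2/N$ and of the endpoints via the $\gcd$ normalization, and the identification of the concave envelope with the continuous concave extension over the dense set of attained regularity values.
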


\begin{proof} We have
\begin{align*}
\alpha(k_2',K) &= \frac{ \log{ \left( (p_1')^{K-k_2'}(p_2')^{k_2'} \right)}}{ \log{ \left(r^{K}\right) }}\\
&=\log_r{(p_1')} + \frac{k_2'}{K}\log_r{\left(\frac{p_2'}{p_1'}\right)}.
\end{align*}

Letting $x = k_2'/K$, we have
\[
x=\frac{\alpha - \log_r(p_1')}{\log_r(p_2'/p_1')}.
\]
This substitution allows one to express the abscissa of convergence function $g(x):=f_{\mathfrak{P}}^{\mu}(\alpha(k_2',K))$ in the following form:
\begin{align*}
g(x)=f_{\mathfrak{P}}^{\mu}(\alpha(k_2',K))
&= \log_{r^K} \left( \frac{(K-k_2')^{K-k_2'}(k_2')^{k_2'}}{(N-c_2)^{K-k_2'}c_2^{k_2'}K^K} \right)\\
&= x \log_{r}\left(\frac{x}{c_2}\right) + (1-x)\log_{r}\left(\frac{1-x}{N-c_2}\right).
\end{align*}
\ndnt By temporarily allowing $x\in [0,1]$ and using a slight abuse of notation, we deduce that the first two derivatives of $g$ are given by
\[
g'(x) = \log_r\left(\frac{x}{1-x}\right)+\log_r\left(\frac{c_1}{c_2}\right),\ndnt
g''(x) = \frac{-1}{x(1-x)\log{r}}.
\]
It immediately follows, assuming without loss of generality that $p_1'<p_2'$ and $x \in (0,1)$, that
\begin{align*}
&\lim_{x\rightarrow 0^+}g(x) = -\log_r{c_1}, \hs \lim_{x\rightarrow 1^-}g(x)=-\log_r{c_2},\\
&\lim_{x\rightarrow 0^+}g'(x)= \infty, \hs \lim_{x\rightarrow 1^-}g'(x)=-\infty,
\end{align*}
and 
\[
g''(x)<0,
\] 
which implies that $\hat{f}^{\mu}_{\mathfrak{P}}$ is concave. A bit more calculus then shows that
\[
\max_{\alpha(\textnormal{\bfk})}\{f_{\mathfrak{P}}^{\mu}(\alpha(\textnormal{\bfk}))\} = f_{\mathfrak{P}}^{\mu}(\alpha((c_1,c_2))) = -\log_r{N}=\dim_H(\textnormal{supp}(\mu)).
\]
In particular, the last equality holds since $s=-\log_r{N}$ is the unique real-valued solution of the equation $Nr^s=1$. (See Proposition \ref{prp:geomsym} and Eq.~\eqref{eq:defb}.)
\end{proof}

\ndnt The following section discusses the way in which the results of \S\ref{SMR} recover recent as well as classical results on self-similar measures.

\subsection{Recovery of Recent and Classical Results}\label{RRCR}

Theorem \ref{thm:smr} allows for the recovery of recent and classical results from the multifractal analysis of self-similar measures in the context provided by the partition zeta functions and the abscissa of convergence function, as we now discuss.

\begin{figure}
\epsfysize=6.2cm\epsfbox{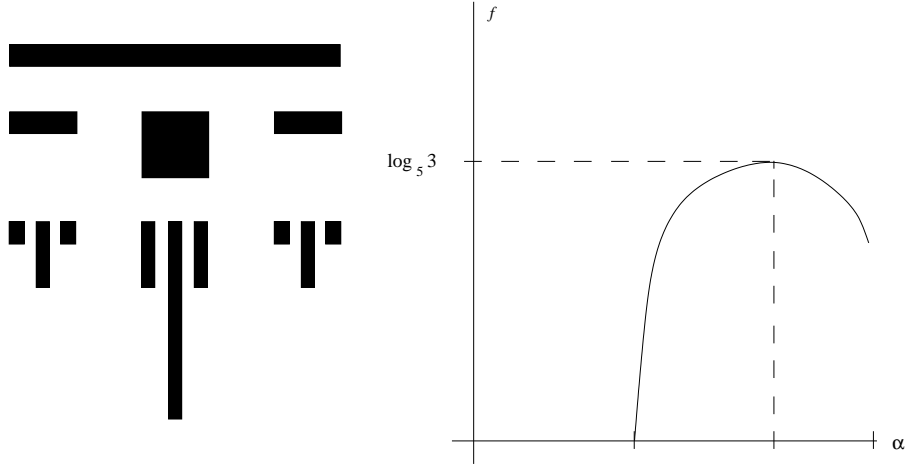}\label{psispec}
    \caption{\textit{The first few stages in the construction of the trident measure $\psi$ and its natural sequence of partitions $\mathfrak{P}$ \textnormal{(}left\textnormal{)}, along with the graph of its abscissa of convergence function $f=f_{\mathfrak{P}}^{\psi}(\alpha)$ as a function of regularity $\alpha$ \textnormal{(}right\textnormal{)}.}}
\end{figure}

\begin{eg}[The Trident Measure]\label{eg:trident}
\textnormal{If $N=3, \bfr=(1/5,1/5,1/5)$, and $\bfp=(1/5,3/5,1/5)$, then $w=2, c_1=2$ and $c_2=1$; the resulting self-similar measure is called the {\it trident measure} $\psi$ (see Fig.~6).
The distinct regularity values attained by $\psi$ on its natural sequence of partitions $\mathfrak{P}$ are}
\begin{equation*}\label{eq:regtrident}
\alpha(\textnormal{\bfk}) =
\frac{\log{(3^{nk_2'}/5^{nK})}}{\log{(1/5^{nK})}} = 1 -
\frac{k_2'}{K}\log_5{3};
\end{equation*}
\textnormal{furthermore, the partition zeta functions are} 	 												
\begin{equation*}\label{eq:pzftrident}
\zeta^{\psi}_{\mathfrak{P}}(\alpha(\textnormal{\bfk}),s) = \sum_{n=1}^{\infty}
\binom{nK}{nk_2'} 2^{n(K-k_2')} 5^{-nKs},
\end{equation*}
\textnormal{and the abscissa of convergence function is}	
\begin{equation*}\label{eq:acftrident}
f^{\psi}_{\mathfrak{P}}(\alpha(\textnormal{\bfk})) = -x\log_5{(x)}-(1-x)\log_5{(1-x)}+(1-x)\log_5{2},
\end{equation*}	
\textnormal{where $x = k_2'/K = \frac{1-\alpha}{\log_5{3}}$ and $\alpha(\bfk)=\alpha$. Thus,}
\begin{align*}
f^{\psi}_{\mathfrak{P}}(\alpha(\textnormal{\bfk})) =
f^{\psi}_{\mathfrak{P}}(\alpha) =& -\left(\frac{1-\alpha}{\log_5{3}}\right)\log_5{\left(\frac{1-\alpha}{\log_5{3}}\right)}\\
& - \left(1-\frac{1-\alpha}{\log_5{3}}\right)\log_5{\left(1-\frac{1-\alpha}{\log_5{3}}\right)}\\
& + \left(1-\frac{1-\alpha}{\log_5{3}}\right)\log_5{2}.
\end{align*}
\textnormal{See Fig.~6 for the graph of the concave envelope of $f^{\psi}_{\mathfrak{P}}$. Note that $f^{\psi}_{\mathfrak{P}}(\alpha)$ behaves as described in \S\ref{Props}, as expected.}
\end{eg}

\begin{eg}[Recovery of Classical Results]\label{eg:recovery}
\textnormal{Some of the results obtained by R.~Cawley and R.~D.~Mauldin in \cite{CM}, partially reproduced in \S\ref{Props}, along with the classical results obtained by A.~S.~Besicovitch in \cite{Bes1} and H.~G.~Eggleston in \cite{Egg} and described in Remark \ref{rk:besegg}, can be recovered in the setting of this section. For instance, see Example 1.6 on page 205 of \cite{CM}. Setting $N=4, \bfr=(r,r,r,r),$ and $\bfp=(p_1',p_1',p_2',p_2')$ (hence $w=2$) yields the desired recovery of this example. Another classical example which can be recovered in our setting is the binomial measure $\beta_0$ supported on the unit interval, as noted in Corollary \ref{cor:recoverbinom}. Indeed, setting $N=2, \bfr=(1/2,1/2)$ and $\bfp=(1/3,2/3)$ (hence $w=2$) yields the desired recovery in this instance.}
\end{eg}

\begin{eg}[Monofractal Measures]\label{eg:mono}
\textnormal{If there exists $D>0$ such that $r_i^D = p_i$ for all $i \in \{1,\ldots,N\}$, then the resulting measure $\rho$ is the natural Hausdorff measure of the underlying generalized Cantor set as discussed on page 201 of \cite{CM} and the end of \S\ref{Props}. More specifically, this Cantor set, which coincides with the support of $\rho$, is the self-similar set defined by the IFS $\{S_i\}_{i=1}^N$. The primitive of $\rho$ is defined by
\(\int_0^x(d\rho) = \rho([0,x])\) and its graph, at least in the usual case of the ternary Cantor set when $N=2$, $\bfr=(1/3,1/3)$, and $\bfp=(1/2,1/2)$, is the well-known Devil's staircase (see, for instance, \cite[Ch. 6]{Fed}). (In other words, the primitive of $\rho$ is the Cantor--Lebesgue function.) Recall that this function is nondecreasing and continuous on $[0,1]$ with zero derivative almost everywhere (on the Cantor string, in fact), yet its range is the full interval $[0,1]$.
See \cite[\S 12.2]{LapvF4} for an investigation of the Devil's staircase and a discussion of a new notion of fractality based on the distribution of complex dimensions of fractal strings. Fittingly, the only regularity values attained by $\rho$ on the intervals from its natural sequence of partitions are $\infty$ and $D$. Also, $f_{\mathfrak{P}}^{\rho}(D) = D$ and according to Definition \ref{dfn:abscon}, this abscissa of convergence function is trivially equal to zero for all other finite regularity values since no length stemming from $\mathfrak{P}$ has finite regularity value different from $D$. (Note that in the special case where $\mu=\rho$, as above, we have $D=\log_{3}2$.)}
\end{eg}

\ndnt The following example is the result of work done by Scott Roby in the Multifractal Analysis research project at California State University, Stanislaus in December 2010 and January 2011. The weighted IFS associated with this example satisfies the OSC, but the resulting regularity values are not distinct in the sense of hypothesis (\textbf{H}) from Theorem \ref{thm:distinctreg}, nor are the conditions of Lemma \ref{lem:regssm} satisfied. Nonetheless, the attained regularity values and the family of partition zeta functions are fully determined. As a special case, the geometric zeta function of the Fibonacci string $\zeta_{\textnormal{Fib}}(s)$ is recovered. (See \S\ref{CS} above for the development of the Fibonacci string and for a more thorough analysis, see \S 2.3.2 of \cite{LapvF4}.)

\begin{eg}[Recovery of the Fibonacci String]\label{eg:scottroby}
\textnormal{Consider a weighted IFS that satisfies the OSC where $\bfr=(1/2,1/4,1/10)$ and $\bfp=(1/2,1/4,1/4)$. Then the regularity values attained by the resulting measure $\mu$ with respect to its natural sequence of partitions $\mathfrak{P}$ are given by} 
\begin{equation*}\label{eq:alphascott}
\alpha{(\textnormal{\bfk})}=\alpha(k_1,k_2,k_3) = \frac{\log{2^{-(k_1+2k_2+2k_3)}}}{\log{(2^{-(k_1+2k_2)}10^{-k_3})}},
\end{equation*}
\textnormal{where $k_1,k_2,k_3 \in \N$ and are not all zero. Note that the vectors $\textnormal{\bfk}_1 = (1,0,0)$ and $\textnormal{\bfk}_2 = (0,1,0)$ both yield regularity value 1. Hence, the collection of attained regularity values are not distinct in the sense of hypothesis (\textbf{H}) from Theorem \ref{thm:distinctreg}. In order to distinguish the regularity values in this case, set  
$M:=k_1+2k_2$ and suppose $k_3 \neq 0$. Define $\alpha(M,k_3)$ by}

\begin{equation*}\label{eq:alphascottnew}
\alpha(M,k_3):=\alpha(\textnormal{\bfk}) = 1 + \frac{\log{2}-\log{5}}{\frac{M}{k_3}\log{2}+\log{10}}.
\end{equation*} 

\textnormal{The values of $\alpha(M,k_3)$ are distinct when $\gcd(M,k_3)=1$, which we assume to be the case for the remainder of this example. For a given regularity value $\alpha(M,k_3)$ and positive integer $n$, the corresponding $\alpha$-lengths are}
\begin{equation*}\label{eq:lscott}
l_n(\alpha(M,k_3)):= (2^{-M}10^{-k_3})^n,
\end{equation*}
\textnormal{with multiplicities given by}
\begin{equation*}\label{eq:mscott}
m_n(\alpha(M,k_3)):= \sum_{i=0}^{\left\lfloor\frac{nM}{2}\right\rfloor} \binom{\left\lfloor\frac{nM+2nk_3+1}{2}+i\right\rfloor}{nM-2\left\lfloor\frac{nM}{2}\right\rfloor+2i, \left\lfloor\frac{nM}{2}\right\rfloor-i,nk_3},
\end{equation*}
\textnormal{where $\left\lfloor \cdot \right\rfloor$ is the floor (i.e., integer part) function. That is, for $x \in \R$, $\left\lfloor x \right\rfloor$ is the greatest integer such that $\left\lfloor x \right\rfloor \leq x$. The partition zeta functions are then given by}
\begin{equation*}\label{eq:pzfscott}
\zeta^{\mu}_{\mathfrak{P}}(\alpha(M,k_3),s)=\sum_{n=1}^{\infty}m_n(\alpha(M,k_3))(l_n(\alpha(M,k_3)))^s.
\end{equation*}
\textnormal{In the case where $M=1$ and $k_3=0$, so that $\alpha(1,0)=1$, we recover the geometric zeta function of the Fibonacci string up to an additive constant (or multiplication by a nowhere vanishing entire function in terms of the closed forms of these functions). Indeed, in this case the lengths are given by $l_n(1)=2^{-n}$ and the multiplicities are given by}
\begin{equation*}\label{eq:recoverfibnumbers}
\ndnt m_n(1)= \sum_{i=0}^{\left\lfloor\frac{n}{2}\right\rfloor} \binom{\left\lfloor\frac{n+1}{2}+i\right\rfloor}{n-2\left\lfloor\frac{n}{2}\right\rfloor+2i}=F_{n+1},
\end{equation*}
\textnormal{where $F_{n+1}$ is the $(n+1)$th Fibonacci number. We refer the reader to \S\ref{CS} above and to \cite[\S2.3.2]{LapvF4} for a discussion of the Fibonacci string and its geometric zeta function $\zeta_{\textnormal{Fib}}$. The partition zeta function is therefore given by}
\begin{equation*}\label{eq:recovergzffib}
\zeta^{\mu}_{\mathfrak{P}}(1,s)=\sum_{n=1}^{\infty}F_{n+1}2^{-ns}=\zeta_{\textnormal{Fib}}(s)-1=\frac{2^{-s}+4^{-s}}{1-2^{-s}-4^{-s}},
\end{equation*}
\textnormal{where $\zeta_{\textnormal{Fib}}$ is given by Eq.~\eqref{eq:gzffib}. Thus, the corresponding complex dimensions, in both the classic sense and with respect to the parameter $1$, are given by (cf. Eq.~\eqref{eq:cdfib})}
\begin{equation*}\label{eq:recovercdfib}
\mathcal{D}_{\textnormal{Fib}}=\mathcal{D}^{\mu}_{\mathfrak{P}}(1)
=\left\{D+jzp \mid z \in \Z \right\} \cup \left\{-D+j(z+1/2)p \mid z \in \Z \right\},
\end{equation*}
\textnormal{where $\phi=(1+\sqrt{5})/2$ is the Golden Ratio, $D=\log_2{\phi}$, and $p = 2\pi/\log{2}$.}

\ndnt \textnormal{The only other case in this example for which the partition zeta function and the complex dimensions are explicitly known is when $M=0$ and $k_3=1$. In this case, the partition zeta function is}
\begin{equation*}\label{eq:pzfscotteasy}
\zeta^{\mu}_{\mathfrak{P}}(\alpha(0,1),s)= \sum_{n=1}^{\infty}10^{-ns}.
\end{equation*}
\textnormal{The complex dimensions with respect to the regularity value $\alpha(0,1)$ are therefore given by}
\begin{equation*}\label{eq:cdscotteasy}
\mathcal{D}^{\mu}_{\mathfrak{P}}(\alpha(0,1))=\left\{jzp \mid z \in \Z \right\}, 
\end{equation*}
\textnormal{where $p=2\pi/\log{10}$.}

\ndnt \textnormal{The remaining complex dimensions with respect to an arbitrary regularity value $\alpha$, and even the corresponding abscissae of convergence, have yet to be determined.}
\end{eg}

\begin{rk}\label{rk:developtheory}
\textnormal{The next step in the development of the theory of complex dimensions for self-similar measures is to determine $\mathcal{D}^{\mu}_{\mathfrak{P}}(\alpha, W_{\alpha})$ and $\mathcal{T}_{\mathfrak{P}}^{\mu}$, in the general situation considered in \S\ref{pzfssm} (or at least for interesting classes of examples, such as the multinomial multifractal measures). The work of D. Essouabri and the second author in \cite{EssLap} should provide a solid foundation for such a pursuit. It suggests, in particular, that the theory of complex fractal dimensions developed in \cite{LapvF1} and \cite{LapvF4} (or \cite{LapvF6}) should eventually be extended to apply to zeta functions that are viewed as analytic functions on Riemann surfaces (rather than just on suitable domains of the complex plane $\C$ or of the Riemann sphere $\C^*=\C\cup\{\infty\}$). In the present situation, the classic Riemann surface associated with the logarithm (or the square root) would be required; see \cite{EssLap}, which is motivated in part by the earlier, less general, results obtained in \cite{LapRo1,Rock} and described in
\cite[\S 13.3.6]{LapvF6}.}
\end{rk}

\section{Partition Zeta Functions of Atomic Measures}\label{pzfam}

In this section, we investigate the properties of certain atomic measures which are not self-similar in the sense of \S\ref{pzfssm}. Let $\sigma_1$ be given by
\[
\sigma_1 = \sum_{i=1}^{\infty} 3^{-i} \delta_{3^{-i}}
\]
and let $\Omega_1 = (0,1) \backslash \{3^{-i}\}_{i=1}^{\infty}$ be the fractal string determined by the complement in $[0,1]$ of the support of $\sigma_1$ (less the point 1). Let $\Omega_2$ be the open subset of $[0,1]$ obtained by placing disjoint open intervals with the lengths of the Cantor string $\mathcal{L}_{CS}$ end-to-end in nonincreasing order from right to left, with the single interval of length $1/3$ placed so that its right-endpoint is at 1 (see Fig.~7). Then, let $\sigma_2$ be the atomic measure supported on the left-endpoints of the fractal string $\Omega_2$, where the left-endpoint of each distinct open interval has weight given by the length $l_n = 3^{-n}$ of said interval (see Fig.~7).

\ndnt The sequence of distinct lengths $\mathcal{L}_1$ of the fractal string $\Omega_1$ is given by $\mathcal{L}_1 = \{2\cdot3^{-n}\}_{n=1}^{\infty}$, where each length has multiplicity 1. Also, the sequence of distinct lengths $\mathcal{L}_2$ of the fractal string $\Omega_2$ is exactly the same as the sequence of distinct lengths of the Cantor string. More specifically, $\mathcal{L}_2=\mathcal{L}_{CS}=\{3^{-n}\}_{n=1}^{\infty}$, but where each length $3^{-n}$ has multiplicity $2^{n-1}$ (instead of $1$). See Fig.~7 and \S\ref{CS}.

\begin{figure}
    \epsfysize=2.8cm\epsfbox{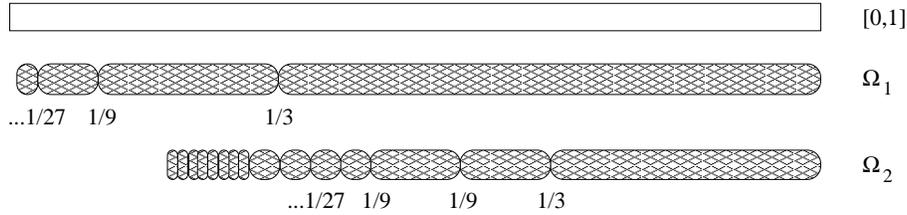}\label{atomicmeasures}
    \caption{\textit{Approximation of the fractal strings $\Omega_1$ and $\Omega_2$.}}
\end{figure}

\ndnt In order to determine the corresponding partition zeta functions for $\sigma_1$ and $\sigma_2$, we must choose a suitable sequence of partitions. In the absence of naturally defined sequences of partitions for $\sigma_1$ and $\sigma_2$, throughout this section we take $\mathfrak{P}$ to be the sequence of partitions $\mathcal{P}_n$ of left-closed, right-open ternary intervals $P_n^k$ of length $3^{-n}$ for $k \in \{1,\ldots,3^{n}-1\}$ and $P_n^{3^n}=[(3^{n}-1)/3^n,1]$, ordered from left to right by $k$. That is, for each $n \in \N^*$, the partition $\mathcal{P}_n$ is given by
\begin{equation}\label{eq:partitions}
\{[0,1/3^{n}),[1/3^{n},2/3^{n}),\ldots,[(3^{n}-2)/3^n,(3^{n}-1)/3^n),[(3^{n}-1)/3^n,1]\}.
\end{equation}

\subsection{A Full Family of Multifractal Complex Dimensions}\label{FFMCD}

As with the determination of the other partition zeta functions in this paper, the most delicate part of the process in the case of the measures $\sigma_1$ and $\sigma_2$ and the sequence of partitions $\mathfrak{P}$ is to find and distinguish the nontrivial regularity values.

\begin{lem}\label{lem:sigma1reg}
For the measure $\sigma_1$ and the sequence of partitions $\mathfrak{P}$ as given in Eq.~\eqref{eq:partitions}, the distinct nontrivial regularity values have the following forms: $1+\log_{3^n}2, k_1/K,$ and $\infty$, where $n, k_1, K \in \N^*, k_1 \leq K,$ and $\gcd(k_1,K)=1$.
\end{lem}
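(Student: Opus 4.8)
The plan is to compute, directly and level by level, the regularity $A(P_n^k)=\log\sigma_1(P_n^k)/\log|P_n^k|$ of every interval $P_n^k\in\mathcal{P}_n$, to read off the resulting values, and then to check that they are exactly the three asserted forms and are pairwise distinct. The key simplification is that every interval in $\mathcal{P}_n$ has length $3^{-n}$, so $\log|P_n^k|=-n\log 3$ is the same for all intervals at level $n$; hence the regularity is governed entirely by the mass $\sigma_1(P_n^k)$, and the whole problem reduces to locating the atoms $3^{-i}$ (of weight $3^{-i}$) inside the partition.

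First I would handle the atoms $3^{-i}$ with $i\le n$. For such $i$ one has $3^{-i}=3^{n-i}/3^n$, which is exactly the left endpoint of $P_n^{3^{n-i}+1}$; since the intervals are left-closed, this atom lies in $P_n^{3^{n-i}+1}$. A short spacing estimate then shows this is the \emph{only} atom in that interval: consecutive atoms $3^{-i}$ and $3^{-(i+1)}$ are separated by $2\cdot 3^{-(i+1)}$, which exceeds the mesh $3^{-n}$ whenever $i+1\le n$, while the boundary atom $3^{-n}$ occupies the left endpoint of $P_n^2$ alone. Consequently $\sigma_1(P_n^{3^{n-i}+1})=3^{-i}$ and
\[
A(P_n^{3^{n-i}+1})=\frac{\log 3^{-i}}{\log 3^{-n}}=\frac{i}{n},\qquad 1\le i\le n.
\]

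Next I would treat the accumulation of atoms at $0$. For $i>n$ one has $3^{-i}<3^{-n}$, so the entire tail $\{3^{-i}:i>n\}$ falls into the single leftmost interval $P_n^1=[0,1/3^n)$, whence, summing a geometric series,
\[
\sigma_1(P_n^1)=\sum_{i=n+1}^{\infty}3^{-i}=\tfrac{1}{2}\,3^{-n},
\qquad
A(P_n^1)=\frac{\log(\tfrac12 3^{-n})}{\log 3^{-n}}=1+\frac{\log 2}{n\log 3}=1+\log_{3^n}2.
\]
Every other interval of $\mathcal{P}_n$ contains no atom, has mass $0$, and thus regularity $\infty$. This exhausts all intervals at all levels and produces precisely the forms $k_1/K$ (after reducing each $i/n\in(0,1]$ to lowest terms, so $1\le k_1\le K$ with $\gcd(k_1,K)=1$), $1+\log_{3^n}2$, and $\infty$.

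Finally I would verify distinctness, which is what makes this a clean classification. The rationals $i/n$ lie in $(0,1]$; the numbers $1+\log_{3^n}2$ lie in $(1,\,1+\log_3 2]$ and are irrational because $\log_3 2=\log 2/\log 3$ is irrational; and $\infty$ is separate. Hence the three families occupy disjoint ranges and no value of one form equals a value of another. Within the second family, $n\mapsto 1+\log_{3^n}2$ is strictly decreasing, so distinct levels give distinct values, and within the first family distinctness is built into the reduced representation $k_1/K$. The main obstacle, exactly as the authors flag, is the bookkeeping at the accumulation point: one must recognize that $P_n^1$ absorbs the whole tail of atoms and evaluate its mass exactly, since this lone interval is the sole source of the irrational family $1+\log_{3^n}2$, and one must separate that family from the rationals $k_1/K$, which rests on the irrationality of $\log_3 2$. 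Care is also needed in the spacing estimate at the boundary case $i=n$, where the atom $3^{-n}$ sits at the left endpoint of $P_n^2$ rather than inside $P_n^1$.
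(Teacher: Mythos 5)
Your proof is correct and follows essentially the same route as the paper's: classify the intervals of $\mathcal{P}_n$ into the leftmost interval $P_n^1$ absorbing the tail of atoms (mass $\tfrac{1}{2}3^{-n}$, regularity $1+\log_{3^n}2$), the intervals whose left endpoint is an atom $3^{-i}$ with $i\le n$ (single atom, regularity $i/n=k_1/K$ in lowest terms), and the massless intervals (regularity $\infty$). The paper's proof merely asserts these three facts with a reference to its Fig.~8, so your exact tail sum, spacing estimate, and distinctness check simply supply the details it leaves implicit.
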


\begin{proof}
For each $n\in\N^*$, the leftmost interval $P^1_n$ of each partition $\mathcal{P}_n$ has regularity given by $A(P^1_n)=1+\log_{3^n}2$. For $n, k_1, K \in \N^*, k_1 \leq K,$ and $\gcd(k_1,K)=1$, the intervals $P^{3^{k_1n}}_{Kn}$ have regularity given by $A(P^{3^{k_1n}}_{Kn})=k_1/K$. No other interval $P^k_n$ stemming from $\mathfrak{P}$ has mass, thus $A(P^k_n)=\infty$ for each of these intervals. See Fig.~8.
\end{proof}

\begin{figure}
\epsfysize=5.1cm\epsfbox{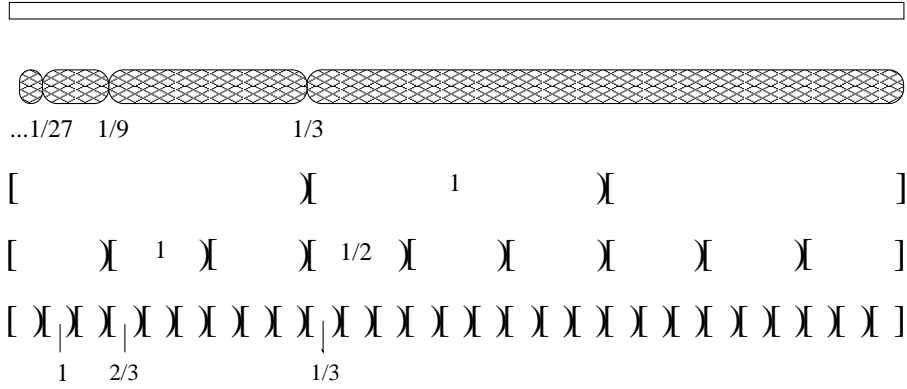}
    \caption{\textit{A breakdown of the regularity values attained by $\sigma_1$ with respect to the sequence of partitions $\mathfrak{P}$. The value inside each interval is its regularity. The leftmost blank intervals have regularity $\alpha=1+\log_{3^n}2$ at each stage $n \in \N^*$. The remaining blank intervals have no mass and therefore have regularity $\alpha=\infty$.}}
\end{figure}

\begin{rk}\label{rk:sigma1lengths}
\textnormal{An immediate consequence of Lemma \ref{lem:sigma1reg} is that the $\alpha$-lengths of $\sigma_1$ for $\alpha = k_1/K$ are given by}
\begin{equation}\label{eq:lsigma1}
\mathcal{L}_{\mathfrak{P}}^{\sigma_1}(k_1/K) = \{ 3^{-Kn} \mid 3^{-Kn} \textnormal{ has multiplicity } 1, n \in \N^*\};
\end{equation}
\textnormal{see Fig.~8. The sequences $\mathcal{L}_{\mathfrak{P}}^{\sigma_1}(k_1/K)$ are strongly languid and trivially self-similar\footnote{Strictly speaking, they are not self-similar since a single scaling ratio is involved and so the multiplicities are trivial.} (see Remark \ref{rk:stronglylanguid}). This fact effects the forms of the explicit formulas for the counting functions of the $\alpha$-lengths to be presented in Theorem \ref{thm:countingsigma1}.}
\end{rk}

\ndnt Before stating and proving Theorem \ref{thm:countingsigma1}, we give the forms of the partition zeta functions, abscissa of convergence function, complex dimensions with respect to $\alpha$, and tapestry of complex dimensions corresponding to the nontrivial and finite regularity values $\alpha$ obtained by $\sigma_1$ with respect to $\mathfrak{P}$. At this point, the reader may wish to briefly review \S\ref{Definitions}, specifically the definition of the complex dimensions with parameter $\alpha$ (Definition \ref{dfn:poles}) and the definition of the tapestry of complex dimensions (Definition \ref{dfn:tapestry}).

\begin{thm}\label{thm:sigma1}
The partition zeta functions for the nontrivial and finite regularity values $\alpha$ obtained by $\sigma_1$ with respect to $\mathfrak{P}$ as in Lemma \ref{lem:sigma1reg} are respectively given by
\begin{equation}\label{eq:pzfsigma11}
\zeta^{\sigma_1}_{\mathfrak{P}}(1+\log_{3^n}2,s) = 3^{-ns}
\end{equation}
and
\begin{equation}\label{eq:pzfsigma1k}
\zeta^{\sigma_1}_{\mathfrak{P}}(k_1/K,s) = \frac{3^{-Ks}}{1-3^{-Ks}},
\end{equation}
where  $s \in \C, n, k_1, K \in \N^*, k_1 \leq K,$ and $\gcd(k_1,K)=1$.

\ndnt Furthermore, the abscissa of convergence function is given by
\begin{equation}\label{eq:absconsigma1}
f^{\sigma_1}_{\mathfrak{P}}(\alpha) = 0, \hs \textnormal{for all} \hs \alpha \in (-\infty,\infty).
\end{equation}

\ndnt Moreover, for regularity $\alpha = k_1/K$, the complex dimensions with respect to $\alpha$ are given by
\begin{equation}\label{eq:cdssigma1}
\mathcal{D}_{\mathfrak{P}}^{\sigma_1}(k_1/K,\C) = \left\{ \frac{2 \pi j z}{K\log{3}} \mid z \in \Z \right\}.
\end{equation}

\ndnt Lastly, the tapestry of complex dimensions $\mathcal{T}_{\mathfrak{P}}^{\sigma_1}$ is given by
\begin{equation}\label{eq:tapsigma1}
\mathcal{T}_{\mathfrak{P}}^{\sigma_1} = \left\{ (\alpha,\omega) \mid \alpha = \frac{k_1}{K}, \omega \in 
\mathcal{D}_{\mathfrak{P}}^{\sigma_1}(k_1/K,\C), \textnormal{ where } k_1 \leq K, k_1,K \in \mathbb{N}^* \right\},
\end{equation}
as portrayed in Fig.~9.
\end{thm}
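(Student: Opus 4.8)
The plan is to treat the four assertions of Theorem \ref{thm:sigma1} in turn, reducing everything to the explicit description of the $\alpha$-lengths already supplied by Lemma \ref{lem:sigma1reg} and Remark \ref{rk:sigma1lengths}. Once the sequence $\mathcal{L}_{\mathfrak{P}}^{\sigma_1}(\alpha)$ is known for each attained finite regularity value $\alpha$, the partition zeta function is computed directly from Definition \ref{def:pzf} as a Dirichlet series over those lengths with their multiplicities; in both relevant cases this series is elementary (a single monomial, or a geometric series), so all four conclusions follow from routine manipulations and the pole structure of a geometric series.

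First I would establish \eqref{eq:pzfsigma11} and \eqref{eq:pzfsigma1k}. For $\alpha = 1+\log_{3^n}2$, the value is attained by the single leftmost interval $P^1_n$ of mesh $3^{-n}$, so $\mathcal{L}_{\mathfrak{P}}^{\sigma_1}(1+\log_{3^n}2) = \{3^{-n}\}$ with multiplicity one, and Eq.~\eqref{eq:pzf} collapses to the monomial $3^{-ns}$. For $\alpha = k_1/K$ with $\gcd(k_1,K)=1$, the interval of $\mathcal{P}_N$ carrying the atom $3^{-i}$ (for $i \leq N$) has mass $3^{-i}$ and length $3^{-N}$, hence regularity $i/N$; thus regularity $k_1/K$ occurs precisely when $K \mid N$ and $i = (k_1/K)N$, i.e.\ at stage $N = Kn$ via the single atom $3^{-k_1 n}$, for each $n \in \N^*$. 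This reproduces the description $\mathcal{L}_{\mathfrak{P}}^{\sigma_1}(k_1/K) = \{3^{-Kn} \mid n \in \N^*\}$ of Remark \ref{rk:sigma1lengths}, each length with multiplicity one. Summing the resulting geometric series gives $\sum_{n=1}^{\infty} 3^{-Kns} = 3^{-Ks}/(1-3^{-Ks})$, valid for $\textnormal{Re}(s)>0$, which is \eqref{eq:pzfsigma1k}; the closed form persists on all of $\C$ by meromorphic continuation.

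Next, for \eqref{eq:absconsigma1} I would read off the abscissa of convergence from these closed forms using Definition \ref{dfn:abscon}. Each monomial $3^{-ns}$ is entire, so its abscissa is $-\infty$ and $f_{\mathfrak{P}}^{\sigma_1}=\max\{0,-\infty\}=0$ at every value $1+\log_{3^n}2$; for $\alpha = k_1/K$ the Dirichlet series has infinitely many lengths and diverges at $s=0$ (where it equals $\sum_n 1$) but converges for $\textnormal{Re}(s)>0$, so its abscissa is exactly $0$ and $f_{\mathfrak{P}}^{\sigma_1}(k_1/K)=\max\{0,0\}=0$. Every remaining finite $\alpha$ is not attained, whence $\zeta_{\mathfrak{P}}^{\sigma_1}(\alpha,\cdot)\equiv 0$ is trivial and $f_{\mathfrak{P}}^{\sigma_1}(\alpha)=0$ by convention, giving $f_{\mathfrak{P}}^{\sigma_1}\equiv 0$ on $(-\infty,\infty)$. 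For \eqref{eq:cdssigma1} I would locate the poles of the meromorphic continuation in \eqref{eq:pzfsigma1k} via Definition \ref{dfn:poles}: the numerator $3^{-Ks}$ never vanishes, so the poles are exactly the simple zeros of $1-3^{-Ks}$, i.e.\ the solutions of $Ks\log 3 \in 2\pi j \Z$, namely $s = 2\pi j z/(K\log 3)$, $z \in \Z$. The monomial zeta functions $3^{-ns}$ have no poles, so the values $1+\log_{3^n}2$ contribute nothing. Collecting the sets $\mathcal{D}_{\mathfrak{P}}^{\sigma_1}(k_1/K,\C)$ over all admissible $k_1 \leq K$ according to Definition \ref{dfn:tapestry} then yields the tapestry \eqref{eq:tapsigma1}.

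The computations above are short; the genuinely delicate step is the multiplicity bookkeeping --- verifying that at each stage $N=Kn$ exactly one interval attains regularity $k_1/K$ and that no interval attains it when $K \nmid N$ --- together with the observation that the irrationality of $\log 2/\log 3$ keeps the values $1+\log_{3^n}2$ distinct from one another and from every rational $k_1/K$. That separation of regularity values is precisely the content of Lemma \ref{lem:sigma1reg}, so once that lemma is invoked the remainder of the proof is the elementary summation of a geometric series and the identification of its periodically spaced simple poles along the imaginary axis, consistent with $f_{\mathfrak{P}}^{\sigma_1}\equiv 0$.
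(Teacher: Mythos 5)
Your proposal is correct and follows essentially the same route as the paper's own proof: it reduces everything to the $\alpha$-length description in Lemma \ref{lem:sigma1reg} and Remark \ref{rk:sigma1lengths}, sums the geometric series for $\textnormal{Re}(s)>0$ and meromorphically continues, reads off $f_{\mathfrak{P}}^{\sigma_1}\equiv 0$ from the convention $\max\{0,-\infty\}=0$ for the entire (monomial) cases and from the pole/divergence at $s=0$ for $\alpha=k_1/K$, and identifies the complex dimensions as the zeros of $1-3^{-Ks}$ (the numerator never vanishing), from which the tapestry follows by Definition \ref{dfn:tapestry}. The only cosmetic difference is that you re-derive the regularity bookkeeping inline before acknowledging it is exactly the content of Lemma \ref{lem:sigma1reg}, whereas the paper simply cites the lemma.
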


\begin{figure}
\epsfysize=7cm\epsfbox{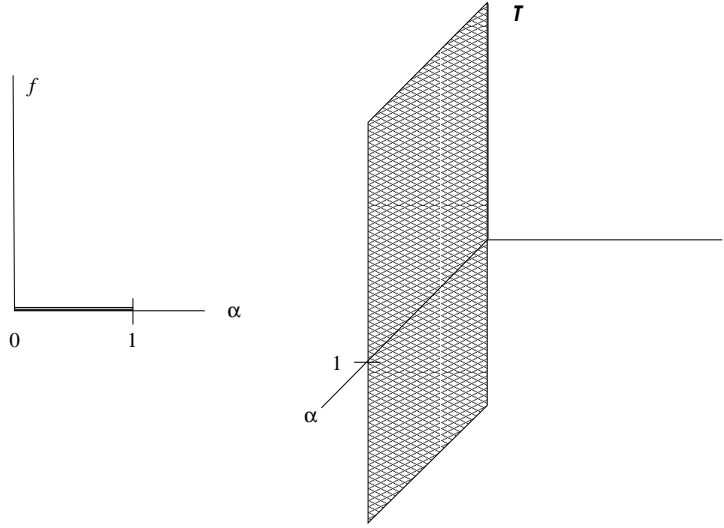}\label{sigma1tapestry}
    \caption{\textit{The abscissa of convergence function $f = f^{\sigma_1}_{\mathfrak{P}}(\alpha)=0$ on $(0,1]$ \textnormal{(}left\textnormal{)} and tapestry of complex dimensions $\mathcal{T} = \mathcal{T}_{\mathfrak{P}}^{\sigma_1}$ of $\sigma_1$ with respect to $\mathfrak{P}$ \textnormal{(}right\textnormal{)}.}}
\end{figure}

\begin{proof} Eqs.~\eqref{eq:pzfsigma11} and \eqref{eq:pzfsigma1k}
follow immediately from Lemma \ref{lem:sigma1reg} and Remark \ref{rk:sigma1lengths}. Note that Eq.~\eqref{eq:pzfsigma1k}, explicitly
\[
\zeta^{\sigma_1}_{\mathfrak{P}}(k_1/K,s) = \sum_{n=1}^{\infty} 3^{-Kns} = \frac{3^{-Ks}}{1-3^{-Ks}},
\]
is first obtained by assuming that $\textnormal{Re}(s)>0$, so that the geometric series involved converges. However, the end result clearly holds for all $s \in \C$, as can be seen upon meromorphic continuation. Hence, Eq.~\eqref{eq:pzfsigma1k} holds for all $s \in \C$.

\ndnt Observe that the partition zeta functions with regularity $\alpha=1+\log_{3^n}2$ for all $n \in \N^*$ have no poles and hence no complex dimensions, so their abscissae of convergence are trivially equal to $-\infty$; and hence, according to Definition \ref{dfn:abscon}, we have $f_{\mathfrak{P}}^{\sigma_1}(\alpha):=\max\{0,-\infty\}=0$ for all these regularity values $\alpha$. For those partition zeta functions with regularity $k_1/K$, the abscissa of convergence is also zero since $s=0$ is the unique real-valued solution to the equation $1-3^{-Ks}=0$. Therefore, $f^{\sigma_1}_{\mathfrak{P}}(\alpha) = 0$ for all $\alpha \in (-\infty,\infty)$; see Fig.~9.

\ndnt The expressions for the partitions zeta functions with regularity $\alpha=k_1/K$ have numerators which never vanish. Therefore, we deduce that the complex dimensions with parameter $\alpha$ are given by Eq.~\eqref{eq:cdssigma1}.
In turn, in light of Definition \ref{dfn:tapestry}, the complex dimensions with respect to $\alpha$ immediately yield the tapestry of complex dimensions given in Eq.~\eqref{eq:tapsigma1}.
\end{proof}

\begin{rk}\label{rk:reginfty} \textnormal{The only other regularity value attained by $\sigma_1$ with respect to $\mathfrak{P}$ is $\alpha=\infty$; see Fig.~8 along with \S\ref{CHR}. However, the intervals with such regularity are so numerous that the resulting partition zeta function $\zeta^{\sigma_1}_{\mathfrak{P}}(\infty,s)$ is divergent everywhere. A similar remark holds for the measure $\sigma_2$ discussed below in \S\ref{NLMS}.}
\end{rk}

\begin{rk}\label{rk:cdtap} \textnormal{The complex dimensions $\mathcal{D}_{\mathfrak{P}}^{\sigma_1}(k_1/K,\C)$ and the tapestry $\mathcal{T}_{\mathfrak{P}}^{\sigma_1}$ are exactly the same as those obtained for the measure $\sigma=\sigma_1$ in Corollary 13.55 and Remark 13.56 of \cite[\S 13.3.5]{LapvF6} (which was written in conjunction with the fourth author of this paper and describes joint work of the second and fourth authors). However, those results are obtained in the context provided by {\it multifractal zeta functions}, which we discuss briefly in \S\ref{conclusion}. The multifractal zeta functions are examined in \cite{LLVR,LapRo1,LapvF6,Rock} and are defined therein by a measure, a regularity value, and a sequence of scales (instead of a sequence of partitions). The multifractal structure of atomic measures similar to $\sigma_1$ and $\sigma_2$ are considered in \cite{LVT}, but not in the context of partition or multifractal zeta functions.}
\end{rk}

\ndnt Next, we give an explicit formula, expressed in terms of the underlying complex dimensions $\mathcal{D}_{\mathfrak{P}}^{\sigma_1}(\alpha,\C)$, for the counting functions of the $\alpha$-lengths of $\sigma_1$ with respect to $\mathfrak{P}$, as given by $\mathcal{L}_{\mathfrak{P}}^{\sigma_1}(\alpha)$ in Remark \ref{rk:sigma1lengths} above. 

\begin{thm}[Exact pointwise formula for the $\alpha$-lengths of $\sigma_1$]\label{thm:countingsigma1}
For each regularity value $\alpha=k_1/K$, with $k_1,K \in \N^*, k_1 \leq K,$ and $\gcd(k_1,K)=1$ \textnormal{(}as in Lemma \ref{lem:sigma1reg}\textnormal{)}, the counting function of the $\alpha$-lengths of $\sigma_1$ with respect to $\mathfrak{P}$ satisfies
\begin{equation}\label{eq:csigma1}
N_{\mathfrak{P}}^{\sigma_1}(\alpha, x) = N_{\mathfrak{P}}^{\sigma_1}(k_1/K, x) = M
= \frac{1}{K\log{3}} \sum_{\omega \in \mathcal{D}_{\alpha}}\frac{x^{\omega}}{\omega},
\end{equation}
where the formula holds pointwise for every $x>1$, with $M := \left\lfloor\log_{3^K}x\right\rfloor$ and  $\mathcal{D}_{\alpha}:=\mathcal{D}_{\mathfrak{P}}^{\sigma_1}(k_1/K,\C)$ as given in Eq.~\eqref{eq:cdssigma1} of Theorem \ref{thm:sigma1}. (Here, as before, $\left\lfloor y \right\rfloor$ denotes the integer part of $y$.)
\end{thm}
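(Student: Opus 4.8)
The plan is to prove the two asserted equalities separately and then combine them by transitivity. First I would establish $N = M$ by a direct count. By Remark \ref{rk:sigma1lengths} the $\alpha$-lengths for $\alpha = k_1/K$ are exactly $3^{-Kn}$ $(n \in \N^*)$, each with multiplicity one. Hence, recalling that $N_{\mathcal{L}}(x)$ counts the reciprocal lengths $\ell^{-1} = 3^{Kn}$ not exceeding $x$, one has $N_{\mathfrak{P}}^{\sigma_1}(k_1/K, x) = \#\{n \in \N^* \mid 3^{Kn}\leq x\} = \lfloor\log_{3^K}x\rfloor = M$. This is elementary and yields the first equality.

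For the second equality I would invoke the explicit pointwise formula for the geometric counting function. As noted in Remark \ref{rk:sigma1lengths}, the sequence $\mathcal{L}_{\mathfrak{P}}^{\sigma_1}(k_1/K)$ is trivially self-similar (single scaling ratio $3^{-K}$) and strongly languid, so Proposition \ref{prp:countingfunction} applies with vanishing error term $R(x)\equiv 0$ via Theorem 5.14 of \cite{LapvF4} (see Remark \ref{rk:stronglylanguid}). That is, $N_{\mathcal{L}}(x) = \sum_{\omega}\textnormal{res}(\zeta_{\mathcal{L}}(s)x^s/s;\omega)$, the sum running over $\mathcal{D}_{\alpha} = \{2\pi j z/(K\log 3)\mid z\in\Z\}$ from Theorem \ref{thm:sigma1}. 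Each nonzero pole $\omega_z$ is a simple pole of $\zeta_{\mathcal{L}}(s) = 3^{-Ks}/(1-3^{-Ks})$, and a one-line computation (numerator over the derivative of the denominator at $3^{-K\omega_z}=1$) gives the common residue $\textnormal{res}(\zeta_{\mathcal{L}};\omega_z) = 1/(K\log 3)$, so $\omega_z$ contributes $\tfrac{1}{K\log 3}\,x^{\omega_z}/\omega_z$. Factoring out this constant residue produces exactly the displayed shorthand $\tfrac{1}{K\log 3}\sum_{\omega}x^{\omega}/\omega$.

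The hard part will be the pole at $\omega = 0$ (the $z=0$ term), for which the summand $x^{\omega}/\omega$ is formally $1/0$. Here $s=0$ is a simple pole of $\zeta_{\mathcal{L}}$ but a \emph{double} pole of $\zeta_{\mathcal{L}}(s)x^s/s$. I would compute the Laurent expansion $\zeta_{\mathcal{L}}(s) = \tfrac{1}{Ks\log 3} - \tfrac12 + O(s)$ near $s=0$, multiply by $x^s/s = s^{-1} + \log x + O(s)$, and read off the residue $\log_{3^K}x - \tfrac12$; this is the value carried by the $\omega=0$ summand. The remaining symmetric sum over $z\neq 0$ is, upon substituting $\omega_z = 2\pi j z/(K\log 3)$ and $y := \log_{3^K}x$, the classical sawtooth series $\sum_{z\neq 0} e^{2\pi j z y}/(2\pi j z) = \tfrac12 - \{y\}$ (for $y\notin\Z$). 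Adding the two contributions gives $\log_{3^K}x - \tfrac12 + \tfrac12 - \{\log_{3^K}x\} = \lfloor\log_{3^K}x\rfloor = M$, consistent with the direct count; indeed, once both $N=M$ and the explicit formula are in hand, the equality $M = \tfrac{1}{K\log 3}\sum_{\omega}x^{\omega}/\omega$ follows at once, the Fourier computation serving mainly to interpret the $\omega=0$ summand.

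The genuine obstacles are analytic bookkeeping rather than conceptual. I must justify reading the conditionally convergent complex-dimension sum as the symmetric limit over $|z|\leq n$ (precisely the truncation produced by the explicit formula), and I must address the jump points $x = 3^{Km}$: there the sawtooth series converges to the midpoint $m - \tfrac12$, so the residue sum equals the normalized counting function $\tfrac12(N(x^-)+N(x^+)) = m-\tfrac12$ rather than $M=m$. Consequently the pointwise identity holds for all $x>1$ with $x\neq 3^{Km}$, and at those exceptional points it recovers the average of the one-sided limits. I would state the result with this standard normalization understood, exactly as in the Cantor-string formula recalled in \S\ref{CS}.
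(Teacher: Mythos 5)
Your proposal follows the same route as the paper's own proof: the direct count giving $N_{\mathfrak{P}}^{\sigma_1}(k_1/K,x)=\lfloor\log_{3^K}x\rfloor$, then an application of the exact pointwise formula of \cite[Thm.~5.14]{LapvF4} at level $k=1$ (legitimate because $\zeta^{\sigma_1}_{\mathfrak{P}}(k_1/K,s)=3^{-Ks}/(1-3^{-Ks})$ is strongly languid with $\kappa=0$ and $A=1$), and the same residue computation $\textnormal{res}\left(\zeta^{\sigma_1}_{\mathfrak{P}}(\alpha,s);\omega\right)=1/(K\log 3)$ at the poles $\omega=2\pi jz/(K\log 3)$. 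Where you genuinely diverge from the paper---and improve on it---is at $\omega=0$ and at the jump points. The paper's proof factors every term of the explicit formula as $\frac{x^{\omega}}{\omega}\,\textnormal{res}\left(\zeta^{\sigma_1}_{\mathfrak{P}}(\alpha,s);\omega\right)$, which is valid only at the simple poles $\omega\neq 0$ of the integrand $x^s\zeta^{\sigma_1}_{\mathfrak{P}}(\alpha,s)/s$; at $\omega=0$ that function has a \emph{double} pole, and the sum displayed in Eq.~\eqref{eq:csigma1} literally contains the undefined term $x^0/0$. Your Laurent expansion $\zeta^{\sigma_1}_{\mathfrak{P}}(\alpha,s)=\frac{1}{Ks\log 3}-\frac{1}{2}+O(s)$, yielding the residue $\log_{3^K}x-\frac{1}{2}$ for the $z=0$ term, is the correct interpretation, and your sawtooth computation $\sum_{z\neq 0}e^{2\pi jzy}/(2\pi jz)=\frac{1}{2}-\{y\}$ (symmetric partial sums, $y=\log_{3^K}x\notin\Z$) confirms that the full residue sum equals $\lfloor\log_{3^K}x\rfloor$, which the paper never verifies. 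Your second caveat is likewise a genuine correction to the statement itself: at $x=3^{Km}$ the conditionally convergent series converges to the midpoint value $m-\frac{1}{2}$, whereas $N_{\mathfrak{P}}^{\sigma_1}(\alpha,x)=m$ under the convention $\ell_i^{-1}\leq x$ of \S\ref{CDCF}, so the claim ``pointwise for every $x>1$'' requires either excluding the points $x=3^{Km}$ or adopting the standard midpoint normalization of counting functions at their jumps (the convention under which the explicit formulas of \cite{LapvF4} are actually proved). The paper's proof, and Remark \ref{rk:counting1} after it, pass over both issues silently; your version closes them.
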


\begin{proof}
In this proof, we must assume that the reader has some familiarity with the theory developed in \cite{LapvF4}. Fix a regularity value $\alpha=k_1/K$ as given by Lemma \ref{lem:sigma1reg}. In light of Eq.~\eqref{eq:lsigma1}, the fact that \(N_{\mathfrak{P}}^{\sigma_1}(k_1/K,x) = M = \left\lfloor\log_{3^K}x\right\rfloor\) is immediate.

\ndnt Next, we justify the explicit formula \eqref{eq:csigma1} for $N_{\mathfrak{P}}^{\sigma_1}(k_1/K,x)$. This result follows from \cite[Thm.~5.14]{LapvF4}, the pointwise explicit formula without error term, applied at level $k=1$ (in the terminology of \cite{LapvF4}) to the zeta function
\[
\zeta^{\sigma_1}_{\mathfrak{P}}(\alpha,s)=\zeta^{\sigma_1}_{\mathfrak{P}}(k_1/K,s)=\frac{3^{-Ks}}{1-3^{-Ks}}, \hs
s \in \C,
\]
viewed as the ``geometric'' (or rather ``scaling'') zeta function of the generalized fractal string associated with the $\alpha$-lengths of $\sigma_1$ (see Definition \ref{dfn:alphalengths}). More specifically, since 0 is a pole of $
\zeta^{\sigma_1}_{\mathfrak{P}}(\alpha,\cdot)$ and with our present notation, \cite[Thm.~5.14]{LapvF4} yields for all $x>A$ (with $A:=1$, as explained below):
\begin{align*}
N_{\mathfrak{P}}^{\sigma_1}(\alpha, x) &= \sum_{\omega \in \mathcal{D}_{\alpha}} \textnormal{res}\left(\frac{x^s}{s}\zeta^{\sigma_1}_{\mathfrak{P}}(\alpha,s);s=\omega \right)  \\
&= \sum_{\omega \in \mathcal{D}_{\alpha}} \frac{x^{\omega}}{\omega}\textnormal{res}\left(\zeta^{\sigma_1}_{\mathfrak{P}}(\alpha,s);s=\omega \right).
\end{align*}
Hence \eqref{eq:csigma1} follows since $\zeta^{\sigma_1}_{\mathfrak{P}}(\alpha,s)$ and $\mathcal{D}_{\alpha}:=\mathcal{D}_{\mathfrak{P}}^{\sigma_1}(\alpha,\C)$
 are given by Eqs.~\eqref{eq:pzfsigma1k} and \eqref{eq:cdssigma1}, respectively, and consequently, for $\omega=2\pi j z/(K\log3)$ (with $z\in\Z$), we have
\[
\textnormal{res}\left(\zeta^{\sigma_1}_{\mathfrak{P}}(\alpha,s);s=\omega\right)= \textnormal{res}\left(\frac{3^{-Ks}}{1-3^{-Ks}};s=\omega\right)=
\frac{3^{-K\omega}}{(K\log3)3^{-K\omega}}=\frac{1}{K\log3},
\]
independently of $z \in \Z$.

\ndnt Note that the aforementioned explicit formula of \cite{LapvF4} can be applied here because an elementary computation (entirely analogous to the one performed on page 189 of \cite[\S 6.4]{LapvF4}) shows that $\zeta^{\sigma_1}_{\mathfrak{P}}(\alpha,\cdot)$ is strongly languid (in the sense of \cite[Def.~5.3]{LapvF4}) of order $\kappa = 0 <1$ and with constant $A=1$. (Here, we use the notation $\kappa$ and $A$ employed in \cite{LapvF4}; see especially \cite[\S 5.3]{LapvF4}.) More specifically, with $A:=1$, we clearly have
\[
\left|\frac{3^{-Ks}}{1-3^{-Ks}}\right|=\left|\frac{1}{1-3^{-Ks}}\right| \leq 1 = (A^{-1})^{-|\textnormal{Re}(s)|},
\]
as $\textnormal{Re}(s) \rightarrow -\infty$.\footnote{Strictly speaking, in the above inequality, 1 should be replaced by $\eta$, for any given $\eta>1$.} Also, we have $W=\C$ in this case. This concludes the proof of Theorem \ref{thm:countingsigma1}.
\end{proof}

\begin{rk}\label{rk:counting1}
\textnormal{We leave it as an exercise for the interested reader to verify that given the simple form of the sequence of $\alpha$-lengths obtained in Remark \ref{rk:sigma1lengths}, it is possible to recover Eq.~\eqref{eq:csigma1} by a direct computation (also involving a conditionally convergent Fourier series, but no longer using \cite[Thm.~5.14]{LapvF4}). (In more complicated situations, however, we would have to use the exact explicit formula in \cite[Thm~5.14]{LapvF4}, or its counterpart with error term given in \cite[Thm.~5.10]{LapvF4}, or even more generally, their distributional analogues obtained in \cite[\S 5.4]{LapvF4}.) We note that the computation would then resemble the one carried out in a related context for the Cantor string in \cite[\S 1.1.2]{LapvF4}. It is also useful to observe that Eq.~\eqref{eq:csigma1} can be equivalently rewritten as follows:}
\begin{equation}
N_{\mathfrak{P}}^{\sigma_1}(\alpha, x) = g(u), \hs \textnormal{with } u:=\left\lfloor\log_{3^K}x\right\rfloor,
\end{equation}

\textnormal{where $g$ is the 1-periodic function given by the (conditionally) convergent Fourier series}
\begin{equation}
g(u):=\frac{1}{2\pi j}\sum_{z \in \Z}\frac{e^{2\pi jzu}}{z}, \hs u \in \R.
\end{equation}
\end{rk}

\ndnt Observe that the lack of positive real part in the complex dimensions $\mathcal{D}_{\mathfrak{P}}^{\sigma_1}(k_1/K,\C)$ stems from the unit multiplicity of each corresponding distinct $\alpha$-length in $\mathcal{L}_{\mathfrak{P}}^{\sigma_1}(k_1/K)$. In the case of $\sigma_2$, however, the multiplicities of the distinct $\alpha$-lengths are integer powers of 2. This results in a nonconstant linear multifractal spectrum for $\sigma_2$, as described in the next section.

\subsection{A Nonconstant Linear Multifractal Spectrum}\label{NLMS}

\begin{figure}
\epsfysize=5.1cm\epsfbox{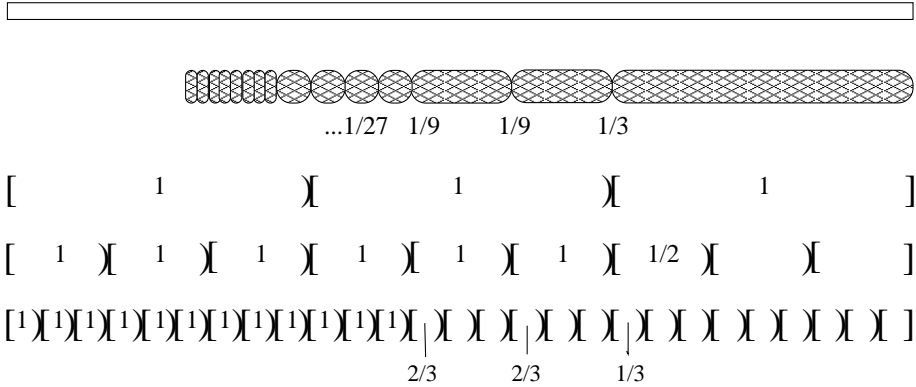}\label{sigma2reg}
    \caption{\textit{A breakdown of the regularity values attained by $\sigma_2$ with respect to the sequence of partitions $\mathfrak{P}$. The blank intervals have no mass and therefore have regularity $\alpha=\infty$.}}
\end{figure}

The determination of the distinct nontrivial regularity values in the case of $\sigma_2$ with respect to $\mathfrak{P}$ is actually easier than that of $\sigma_1$.

\begin{lem}\label{lem:sigma2reg}
For the measure $\sigma_2$ and the sequence of partitions $\mathfrak{P}$ given at the beginning of \S\ref{pzfam}, the distinct nontrivial regularity values have the following forms: $\frac{\log(3^{-k_1})}{\log(3^{-K})} = \frac{k_1}{K},$ and $\infty$, where $k_1, K \in \N^*, k_1 \leq K,$ and $\gcd(k_1,K)=1$. \textnormal{(}See Fig.~10.\textnormal{)}
\end{lem}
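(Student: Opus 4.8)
The plan is to pin down the atoms of $\sigma_2$ exactly, group them by mass, and then read off the regularity $A(P_n^k)$ of each interval of $\mathcal{P}_n$ from its position relative to these groups. Listing the Cantor-string lengths $\{3^{-n}\}$ in nonincreasing order (with multiplicity $2^{n-1}$) and forming the partial sums of the end-to-end placement, one finds that the $2^{m-1}$ atoms of mass $3^{-m}$ are the equally spaced points $c/3^m$ with $c \in \{2^m,\dots,3\cdot 2^{m-1}-1\}$, and that together with the gap immediately to the right of each (also of length $3^{-m}$) they exactly fill the block $B_m := [(2/3)^m,(2/3)^{m-1})$. The blocks $\{B_m\}_{m\ge1}$ tile $(0,1)$, each $B_m$ contains only atoms of mass $3^{-m}$, and---crucially---each half-open atom cell $[c/3^m,(c+1)/3^m)$ carries $\sigma_2$-mass equal to its own length $3^{-m}$.

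Next I would separate the partition at the point $(2/3)^n = 2^n/3^n$. The key elementary fact is a grid-alignment statement: the inequalities defining $k/3^n \in B_m$ are exactly $2^m \le k\,3^{m-n} < 3\cdot 2^{m-1}$, so every triadic point $k/3^n$ with $k/3^n \le (2/3)^n$ (equivalently $k \le 2^n$) lands on the level-$m$ atom grid of its block and is therefore itself an atom. Consequently, for $k \le 2^n$ the interval $P_n^k$ is exactly tiled by atom cells (its endpoints being atoms, or $0$), and since mass equals length on each cell we get $\sigma_2(P_n^k) = |P_n^k| = 3^{-n}$, i.e. $A(P_n^k)=1$.

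In the complementary range $k > 2^n$, the interval $P_n^k$ lies in $[(2/3)^n,1]=\bigcup_{m=1}^n B_m \cup \{1\}$. Here the block boundaries $(2/3)^m$ with $m \le n$ are level-$n$ grid points, so $P_n^k$ sits inside a single atom cell $[c/3^m,(c+1)/3^m)$ of some level $m \le n$; since that cell contains exactly one atom (of mass $3^{-m}$, at its left endpoint), $P_n^k$ has mass $3^{-m}$ when $(k-1)/3^n=c/3^m$ and mass $0$ otherwise. Thus $A(P_n^k)=m/n$ in the first case and $A(P_n^k)=\infty$ in the second. Letting $m$ and $n$ range (every level-$m$ cell with $m\le n$ occurs), the finite regularity values attained are exactly $\{m/n : 1\le m\le n\} = \Q\cap(0,1]$; writing each in lowest terms yields precisely the values $k_1/K$ with $\gcd(k_1,K)=1$ and $k_1\le K$ asserted in the statement, alongside $\infty$.

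I expect the main obstacle to be the dense region $k\le 2^n$: a priori $\sigma_2(P_n^k)$ there is an infinite sum of distinct powers of $3$, and the whole point is the grid-alignment fact that forces both endpoints to be atoms and collapses this sum to the single value $3^{-n}$. Establishing that fact, together with the block/cell description of the atoms, is the only genuine computation; once they are in hand, the sparse region $k>2^n$ is immediate.
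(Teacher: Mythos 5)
Your proposal is correct and follows essentially the same route as the paper's own proof, which runs on the same trichotomy: intervals meeting two or more point-masses have $\sigma_2$-mass equal to their length (regularity $1$), intervals of length $3^{-nK}$ containing a single point-mass $3^{-nk_1}$ have regularity $\frac{nk_1}{nK}=\frac{k_1}{K}$, and massless intervals have regularity $\infty$. The only difference is one of rigor: your block/cell decomposition of the atoms and the grid-alignment argument supply a full justification of the mass-equals-length claim in the dense region, which the paper simply asserts (referring the reader to Fig.~10).
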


\begin{proof}
Each interval $P^k_{Kn}$ which contains two or more point-masses has regularity 1, since $\sigma_2(P^k_{Kn}) = |P^k_{Kn}|$ for all such intervals. Furthermore, each interval of length $3^{-nK}$ which contains a single point-mass of the form $3^{-nk_1}$, with $n \in \N^*$, has regularity
\[
\alpha = \frac{\log(3^{-k_1})}{\log(3^{-K})} = \frac{k_1}{K}.
\]
Finally, observe that no other interval stemming from $\mathfrak{P}$ has mass and hence, $\alpha=\infty$ for each of these intervals; see Fig.~10.
\end{proof}

\begin{rk}\label{rk:sigma2lengths}
\textnormal{The sequence of $\alpha$-lengths $\mathcal{L}_{\mathfrak{P}}^{\sigma_2}(1)$ of $\sigma_2$ for $\alpha=k_1=K=1$ is given by}
\begin{equation}\label{eq:lengths1sigma2}
\mathcal{L}_{\mathfrak{P}}^{\sigma_2}(1) = \{ 3^{-n} \mid 3^{-n} \textnormal{ has multiplicity } 3\cdot2^{n-1}, n \in \N^*\}.
\end{equation}
\textnormal{Moreover, the sequence of $\alpha$-lengths $\mathcal{L}_{\mathfrak{P}}^{\sigma_2}(k_1/K)$ of $\sigma_2$ for $\alpha = k_1/K$ with $k_1<K$ is given by}
\begin{equation}\label{eq:lengthsksigma2}
\mathcal{L}_{\mathfrak{P}}^{\sigma_2}(k_1/K) = \{ 3^{-Kn} \mid 3^{-Kn} \textnormal{ has multiplicity } 2^{k_1n-1}, n \in \N^*\}.
\end{equation}
\textnormal{As with the case of $\sigma_1$ above, the sequences $\mathcal{L}_{\mathfrak{P}}^{\sigma_2}(k_1/K)$ are also self-similar and strongly languid; see Remark \ref{rk:stronglylanguid}. This effects the form of the counting function of the $\alpha$-lengths presented in Theorem \ref{thm:countingsigma2} below.}
\end{rk}

\ndnt Before stating and proving Theorem \ref{thm:countingsigma2}, we give the forms of the partition zeta functions, abscissa of convergence function, complex dimensions with respect to $\alpha$, and tapestry of complex dimensions corresponding to the nontrivial and finite regularity values $\alpha$ obtained by $\sigma_2$ with respect to $\mathfrak{P}$.

\begin{figure}
\epsfysize=7cm\epsfbox{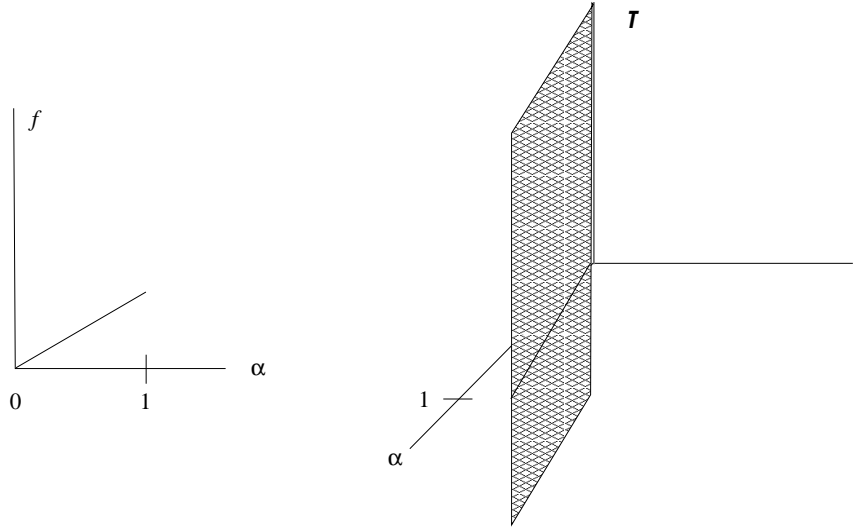}\label{sigma2spectap}
    \caption{\textit{The abscissa of convergence function $f = f^{\sigma_2}_{\mathfrak{P}}(\alpha)$ \textnormal{(}left\textnormal{)} and tapestry $\mathcal{T} = \mathcal{T}_{\mathfrak{P}}^{\sigma_2}$ of complex dimensions of $\sigma_2$ with respect to $\mathfrak{P}$ \textnormal{(}right\textnormal{)}.}}
\end{figure}

\begin{thm}\label{thm:sigma2}
The partition zeta functions for the nontrivial and finite regularity values $\alpha$ obtained by $\sigma_2$ with respect to $\mathfrak{P}$ as in Lemma \ref{lem:sigma2reg} are respectively given by
\begin{equation}\label{eq:pzf1sigma2}
\zeta^{\sigma_2}_{\mathfrak{P}}(1,s) = \frac{3 \cdot 3^{-s}}{1-2 \cdot 3^{-s}}
\end{equation} 
and
\begin{equation}\label{eq:pzfksigma2}
\zeta^{\sigma_2}_{\mathfrak{P}}(k_1/K,s) = \frac{2^{k_1-1}\cdot3^{-Ks}}{1-2^{k_1}\cdot 3^{-Ks}},
\end{equation}
where  $s \in \C, k_1, K \in \N^*, k_1 < K,$ and $\gcd(k_1,K)=1$. \textnormal{(}Here and henceforth, when $k_1<K$, we assume that $\gcd(k_1,K)=1$.\textnormal{)}

\ndnt Furthermore, the abscissa of convergence function is given by
\begin{equation}\label{eq:absconsigma2}
f^{\sigma_2}_{\mathfrak{P}}(\alpha) =
f^{\sigma_2}_{\mathfrak{P}}(k_1/K) = \frac{k_1}{K}\log_3{2},
\end{equation}
where $\alpha = k_1/K$ such that $k_1 \leq K$ with $k_1,K \in \mathbb{N}^*$; see Fig.~11. \textnormal{(}Note that this equation also holds when $\alpha=k_1=K=1$.\textnormal{)}

\ndnt Moreover, the set of complex dimensions $\mathcal{D}_{\mathfrak{P}}^{\sigma_2}(\alpha,\C)$ with respect to $\alpha=k_1/K$, where $k_1 \leq K$ with $k_1,K \in \mathbb{N}^*$, is
\begin{equation}\label{eq:cdssigma2}
\mathcal{D}_{\mathfrak{P}}^{\sigma_2}(k_1/K,\C)=\left\{ \frac{k_1}{K}\log_3{2} + \frac{2 \pi j z}{K\log{3}} \mid z \in \Z \right\}.
\end{equation}
\textnormal{(}Note that this equation still holds when $\alpha=k_1=K=1$.\textnormal{)}

\ndnt Lastly, the tapestry of complex dimensions $\mathcal{T}_{\mathfrak{P}}^{\sigma_2}$  is given by
\begin{equation}\label{eq:tapsigma2}
\mathcal{T}_{\mathfrak{P}}^{\sigma_2} = \left\{ (\alpha,\omega) \mid \alpha = \frac{k_1}{K}, \omega \in 
\mathcal{D}_{\mathfrak{P}}^{\sigma_2}(k_1/K,\C), \textnormal{ where } k_1 \leq K, k_1,K \in \mathbb{N}^* \right\},
\end{equation}
as portrayed in Fig.~11.
\end{thm}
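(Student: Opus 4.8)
The proof will closely parallel that of Theorem~\ref{thm:sigma1}, reducing everything to the summation of geometric series followed by meromorphic continuation. The starting point is the explicit description of the $\alpha$-lengths in Remark~\ref{rk:sigma2lengths}, which expresses each sequence $\mathcal{L}^{\sigma_2}_{\mathfrak{P}}(k_1/K)$ as a single geometric scale $3^{-Kn}$ ($n\in\N^*$) carrying a multiplicity that is an integer power of $2$.

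First I would establish the two zeta-function formulas \eqref{eq:pzf1sigma2} and \eqref{eq:pzfksigma2}. Substituting the lengths and multiplicities from Remark~\ref{rk:sigma2lengths} into Definition~\ref{def:pzf} gives, for $\alpha=1$, the series $\sum_{n\geq1}3\cdot2^{n-1}3^{-ns}$, and for $\alpha=k_1/K$ with $k_1<K$, the series $\sum_{n\geq1}2^{k_1n-1}3^{-Kns}$. Each of these is, up to a constant factor, a geometric series in the variable $2\cdot3^{-s}$ (respectively $2^{k_1}3^{-Ks}$); summing it on its half-plane of convergence and then invoking uniqueness of meromorphic continuation (exactly as in the $\sigma_1$ case) yields the stated closed forms $\tfrac{3\cdot3^{-s}}{1-2\cdot3^{-s}}$ and $\tfrac{2^{k_1-1}\cdot3^{-Ks}}{1-2^{k_1}\cdot3^{-Ks}}$, valid for all $s\in\C$.

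Next I would read off the abscissa of convergence. The geometric series for $\alpha=k_1/K$ converges absolutely precisely when $|2^{k_1}3^{-Ks}|<1$, i.e.\ when $\mathrm{Re}(s)>\tfrac{k_1}{K}\log_3 2$; since this threshold is nonnegative, Definition~\ref{dfn:abscon} gives $f^{\sigma_2}_{\mathfrak{P}}(k_1/K)=\tfrac{k_1}{K}\log_3 2$, a formula that also covers the specialization $k_1=K=1$. To obtain the complex dimensions, I would locate the zeros of the denominator $1-2^{k_1}3^{-Ks}$: writing $2^{k_1}3^{-Ks}=\exp(k_1\log2 - Ks\log3)$, this equals $1$ exactly when $k_1\log2-Ks\log3$ is an integer multiple of $2\pi j$, which solves to $s=\tfrac{k_1}{K}\log_3 2+\tfrac{2\pi j z}{K\log3}$ with $z\in\Z$, giving \eqref{eq:cdssigma2}. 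I would then note that at each such $s$ the numerator $2^{k_1-1}3^{-Ks}$ is nonzero, so these are genuine poles (indeed simple, since the derivative of the denominator equals $K\log3\neq0$ there), confirming that \eqref{eq:cdssigma2} is the complete set of complex dimensions with $W_\alpha=\C$. Finally, the tapestry formula \eqref{eq:tapsigma2} is immediate from Definition~\ref{dfn:tapestry} by collecting these sets over all admissible pairs $(k_1,K)$.

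The computation is essentially routine, so there is no deep obstacle; the only point requiring a little care is the bookkeeping separating the exceptional value $\alpha=1$ (multiplicity $3\cdot2^{n-1}$) from the generic values $\alpha=k_1/K$ with $k_1<K$ (multiplicity $2^{k_1n-1}$), and the verification that the constant prefactors combine correctly to produce the stated numerators. I would also double-check that the $\alpha=1$ formulas are exactly the $k_1=K=1$ specializations of \eqref{eq:pzfksigma2} and \eqref{eq:cdssigma2}, so that the two cases are genuinely unified in the final statements for $f^{\sigma_2}_{\mathfrak{P}}$, $\mathcal{D}^{\sigma_2}_{\mathfrak{P}}$, and $\mathcal{T}^{\sigma_2}_{\mathfrak{P}}$.
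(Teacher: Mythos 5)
Your proposal is correct and follows essentially the same route as the paper's proof: substitute the $\alpha$-lengths of Remark~\ref{rk:sigma2lengths} into Definition~\ref{def:pzf}, sum the resulting geometric series for $\textnormal{Re}(s)>\frac{k_1}{K}\log_3 2$, meromorphically continue to all of $\C$, and then read off the abscissae of convergence, the poles, and the tapestry (the paper leaves the pole computation as ``immediate,'' whereas you also verify simplicity and nonvanishing of the numerator, which is a harmless refinement). One caveat on your final ``double-check'': the $\alpha=1$ zeta function \eqref{eq:pzf1sigma2} is \emph{not} the $k_1=K=1$ specialization of \eqref{eq:pzfksigma2} --- the numerators differ by a factor of $3$, reflecting the multiplicity $3\cdot 2^{n-1}$ versus $2^{k_1 n-1}$ --- so that check would fail for the zeta functions themselves, although it does succeed for \eqref{eq:absconsigma2} and \eqref{eq:cdssigma2}, which is all that the theorem's unified statements for $f^{\sigma_2}_{\mathfrak{P}}$, $\mathcal{D}^{\sigma_2}_{\mathfrak{P}}$, and $\mathcal{T}^{\sigma_2}_{\mathfrak{P}}$ require.
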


\begin{proof}
This proof is very similar to that of Theorem \ref{thm:sigma1}. Indeed, in light of Lemma \ref{lem:sigma2reg}, Eqs.~\eqref{eq:pzf1sigma2} and \eqref{eq:pzfksigma2} follow from Eqs.~\eqref{eq:lengths1sigma2} and \eqref{eq:lengthsksigma2}, respectively, first by summing a geometric series (for $\textnormal{Re}(s)> \frac{k_1}{K}\log_3{2}$) and then by meromorphically continuing the resulting expressions to all of $\C$. We deduce at once that the abscissa of convergence function is given by Eq.~\eqref{eq:absconsigma2}.

\ndnt Finally, Eq.~\eqref{eq:cdssigma2} (and then Eq.~\eqref{eq:tapsigma2}) follows immediately from Eqs.~\eqref{eq:pzf1sigma2} and \eqref{eq:pzfksigma2}. 
\end{proof}

\begin{rk}\label{rk:sigma2envelope}
\textnormal{The concave envelope $\hat{f}^{\sigma_2}_{\mathfrak{P}}$ of $f^{\sigma_2}_{\mathfrak{P}}$ on $[0,1]$ is a nonconstant linear multifractal spectrum for the measure $\sigma_2$. Indeed, for every $t \in [0,1]$, we have}
\begin{equation}
\hat{f}^{\sigma_2}_{\mathfrak{P}}(t) = t \log_3{2}.
\end{equation}
\textnormal{Note that, unlike for the results of Theorem \ref{thm:distinctreg}, each nonzero value $\hat{f}^{\sigma_2}_{\mathfrak{P}}(t)$ cannot be equal to the Hausdorff dimension of some subset of the support of $\sigma_2$. Indeed, the support of $\sigma_2$ is the countable set $\partial\Omega_2$, which has Hausdorff dimension equal to zero.}
\end{rk}

\ndnt Next, we close this section by giving an explicit formula, expressed in terms of the underlying complex dimensions $\mathcal{D}_{\mathfrak{P}}^{\sigma_2}(\alpha,\C)$, for the counting functions of the $\alpha$-lengths of $\sigma_2$ with respect to $\mathfrak{P}$, as given by $\mathcal{L}_{\mathfrak{P}}^{\sigma_2}(\alpha)$ in Remark \ref{rk:sigma2lengths}.

\begin{thm}[Exact pointwise formula for the $\alpha$-lengths of $\sigma_2$]\label{thm:countingsigma2}
For each nontrivial regularity value of $\sigma_2$ with respect to $\mathfrak{P}$ given by Lemma \ref{lem:sigma2reg}, the counting function of the $\alpha$-lengths is as follows:
\begin{itemize}
\item[\textnormal{\textbf{1}}] \textnormal{($\alpha=k_1=K=1$).} For the regularity value $\alpha=1$, we have
\begin{equation}
N_{\mathfrak{P}}^{\sigma_2}(1, x) = 3\cdot(2^{M}-1)
= \frac{3}{2\log{3}} \sum_{\omega \in \mathcal{D}_{1}}\frac{x^{\omega}}{\omega}-3,
\end{equation}
where this formula holds pointwise for every $x>1$, with $M = \left\lfloor\log_{3}x\right\rfloor$ and $\mathcal{D}_{1}:=\mathcal{D}_{\mathfrak{P}}^{\sigma_2}(1,\C)$ as given in Eq.~\eqref{eq:cdssigma2}.

\item[\textnormal{\textbf{2}}] \textnormal{($\alpha=k_1/K$).} For the regularity value $\alpha=k_1/K$ such that $k_1<K$ with $k_1,K \in \N^*$ and $\gcd(k_1,K)=1$ \textnormal{(}as in Theorem \ref{thm:sigma2}\textnormal{)}, we have
\begin{align}
N_{\mathfrak{P}}^{\sigma_2}(\alpha, x) = N_{\mathfrak{P}}^{\sigma_2}(k_1/K, x) &= \frac{2^{k_1-1}(2^{k_1M}-1)}{2^{k_1}-1} \notag \\
&= \frac{1}{2\log3^K}\sum_{\omega \in \mathcal{D}_{\alpha}}
\frac{x^{\omega}}{\omega} + \frac{2^{k_1-1}}{1-2^{k_1}},
\end{align}
where this formula holds pointwise for every $x>1$, with $M := \left\lfloor\log_{3^K}x\right\rfloor$ \textnormal{(}as in Theorem \ref{thm:countingsigma1}\textnormal{)} and $\mathcal{D}_{\alpha}:=\mathcal{D}_{\mathfrak{P}}^{\sigma_2}(\alpha,\C)$ as given in Eq.~\eqref{eq:cdssigma2}.
\end{itemize}
\end{thm}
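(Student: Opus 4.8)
The plan is to reproduce, for $\sigma_2$, the two-step structure of the proof of Theorem~\ref{thm:countingsigma1}. First I would obtain the elementary closed forms for $N_{\mathfrak{P}}^{\sigma_2}(\alpha,x)$ by directly summing the multiplicities of the $\alpha$-lengths recorded in Remark~\ref{rk:sigma2lengths}; then I would recover the very same quantity as a sum of residues by applying the pointwise explicit formula without error term, \cite[Thm.~5.14]{LapvF4} at level $k=1$, to the partition zeta functions given in Theorem~\ref{thm:sigma2}.

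For the direct count, consider first $\alpha=1$: the $\alpha$-lengths are $3^{-n}$ with multiplicity $3\cdot 2^{n-1}$, and $3^{n}\le x$ exactly when $n\le M=\lfloor\log_3 x\rfloor$, so $N_{\mathfrak{P}}^{\sigma_2}(1,x)=\sum_{n=1}^{M}3\cdot 2^{n-1}=3(2^{M}-1)$. For $\alpha=k_1/K$ with $k_1<K$, the lengths are $3^{-Kn}$ with multiplicity $2^{k_1 n-1}$, and $3^{Kn}\le x$ exactly when $n\le M=\lfloor\log_{3^K}x\rfloor$, so the geometric sum $\sum_{n=1}^{M}2^{k_1 n-1}=\frac{2^{k_1-1}(2^{k_1 M}-1)}{2^{k_1}-1}$ delivers the stated closed form. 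These are the ``$M$-forms'' on the left of each displayed equality.

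For the residue expansion, I would first verify strong languidity exactly as for $\sigma_1$: as $\textnormal{Re}(s)\to-\infty$ the quotients in \eqref{eq:pzf1sigma2} and \eqref{eq:pzfksigma2} tend to the finite constants $-3/2$ and $-1/2$ respectively, hence stay bounded, so each $\zeta^{\sigma_2}_{\mathfrak{P}}(\alpha,\cdot)$ is strongly languid of order $\kappa=0$ with $A=1$ and \cite[Thm.~5.14]{LapvF4} applies with $W=\C$ and no error term. At a pole $\omega$ of \eqref{eq:pzf1sigma2} one has $3^{-\omega}=1/2$, whence $\textnormal{res}(\zeta^{\sigma_2}_{\mathfrak{P}}(1,\cdot);\omega)=\frac{3\cdot 3^{-\omega}}{2(\log 3)\,3^{-\omega}}=\frac{3}{2\log 3}$, independent of $z$; at a pole of \eqref{eq:pzfksigma2} one has $3^{-K\omega}=2^{-k_1}$, giving residue $\frac{1}{2K\log 3}=\frac{1}{2\log 3^K}$. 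The one genuinely new feature compared with Theorem~\ref{thm:countingsigma1} is that here $0\notin\mathcal{D}_\alpha$ (every pole has real part $\frac{k_1}{K}\log_3 2>0$) while $0\in W$, so the explicit formula contributes the additive constant $\zeta^{\sigma_2}_{\mathfrak{P}}(\alpha,0)$; evaluating \eqref{eq:pzf1sigma2} and \eqref{eq:pzfksigma2} at $s=0$ gives $-3$ and $\frac{2^{k_1-1}}{1-2^{k_1}}$, precisely the additive constants appearing in the two residue formulas. Matching the residue sums plus these constants against the $M$-forms of the first step closes the argument.

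The hard part will be the bookkeeping of this constant term, which had no analogue in the $\sigma_1$ case (where $0$ was itself a complex dimension and contributed through its own residue). I must confirm carefully that $0$ lies in the window but is not a pole, so that the $\zeta^{\sigma_2}_{\mathfrak{P}}(\alpha,0)$ term is genuinely present and carries the correct sign; once this is settled, the remainder is a routine parallel of the proof of Theorem~\ref{thm:countingsigma1}, involving only geometric-series summation and the elementary residue evaluations above.
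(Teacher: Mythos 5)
Your proposal is correct and follows essentially the same route as the paper's proof: verify that each $\zeta^{\sigma_2}_{\mathfrak{P}}(\alpha,\cdot)$ is strongly languid of order $\kappa=0$ with constant $A=1$, apply \cite[Thm.~5.14]{LapvF4} at level $k=1$ with $W=\C$, and evaluate the residues $\frac{3}{2\log 3}$, $\frac{1}{2K\log 3}$ and the constant terms $\zeta^{\sigma_2}_{\mathfrak{P}}(\alpha,0)=-3$, $\frac{2^{k_1-1}}{1-2^{k_1}}$, which arise precisely because $0\in W\setminus\mathcal{D}_\alpha$ here (unlike the $\sigma_1$ case, where $0$ was itself a pole). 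Your explicit geometric-series derivation of the $M$-forms is the only step the paper leaves implicit, and your identification of the constant-term bookkeeping as the genuinely new feature matches the paper's own emphasis exactly.
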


\begin{proof}
The proof parallels that of Theorem \ref{thm:countingsigma1} and therefore follows from \cite[Thm.~5.14]{LapvF4} by showing (as on page 189 of \cite[\S 6.4]{LapvF4}) that for each given regularity $\alpha$, $\zeta_{\mathfrak{P}}^{\sigma_2}(\alpha,\cdot)$ is strongly languid of order $\kappa=0$ and with constant $A=A_{\alpha}=1$. This last conclusion follows from the estimate
\[
\left|\zeta_{\mathfrak{P}}^{\sigma_2}(\alpha,s)\right| \ll (A^{-1})^{-|\textnormal{Re}(s)|},
\]
as $\textnormal{Re}(s) \rightarrow -\infty$, where $A=A_{\alpha}$ is given respectively by $A:=3^{-1}(3^{-1})^{-1}=1$
when $\alpha=1$ (as in Case {\bf 1}) and by $A:=3^{-K}(3^{-K})^{-1}=1$ with $\alpha=k_1/K$ (as in Case {\bf 2}).

\ndnt Moreover, since in either Case {\bf 1} or {\bf 2} of the theorem, 0 does not belong to $\mathcal{D}_{\alpha}$, the exact explicit formula of \cite[Thm.~5.14]{LapvF4} (applied at level $k=1$) yields for all $x>A_{\alpha}=1$:
\begin{align*}
N_{\mathfrak{P}}^{\sigma_2}(\alpha,x) &= \sum_{\omega \in \mathcal{D}_{\alpha}} \textnormal{res}\left(\frac{x^s}{s}\zeta_{\mathfrak{P}}^{\sigma_2}(\alpha,s); s=\omega \right) +
\zeta_{\mathfrak{P}}^{\sigma_2}(\alpha,0)\\
&= \sum_{\omega \in \mathcal{D}_{\alpha}} \frac{x^{\omega}}{\omega} \textnormal{res}\left(\zeta_{\mathfrak{P}}^{\sigma_2}(\alpha,s); s=\omega \right) +
\zeta_{\mathfrak{P}}^{\sigma_2}(\alpha,0).
\end{align*}
Consequently, the result follows since (in light of Theorem \ref{thm:sigma2}) an elementary computation shows that in Case {\bf 1} or {\bf 2}, respectively, we have for every $\omega \in \mathcal{D}_{\alpha}$:
\[
\zeta_{\mathfrak{P}}^{\sigma_2}(\alpha,0) =-3, \hs \textnormal{res}\left(\zeta_{\mathfrak{P}}^{\sigma_2}(\alpha,s); s=\omega \right) = \frac{3}{2\log3},
\]
while
\[
\zeta_{\mathfrak{P}}^{\sigma_2}(\alpha,0) = \frac{2^{k_1-1}}{1-2^{k_1}}, \hs \textnormal{res}\left(\zeta_{\mathfrak{P}}^{\sigma_2}(\alpha,s); s=\omega \right) = \frac{1}{2K\log3}.
\]
This concludes the proof of Theorem \ref{thm:countingsigma2}.
\end{proof}

\begin{rk}\label{rk:counting2}
\textnormal{A comment completely analogous to the one made in Remark \ref{rk:counting1} (for the measure $\sigma_1$) applies to the measure $\sigma_2$.}
\end{rk}

\ndnt In light of Theorem \ref{thm:sigma2}, we deduce at once the following result from Theorem \ref{thm:countingsigma2}:

\begin{cor}\label{cor:sigma2}
The expression for the counting functions for the $\alpha$-lengths can be rewritten as follows, in Case {\bf 1} $(\alpha=1)$ or Case {\bf 2} \textnormal{(}$\alpha=k_1/K$, with $k_1<K$ and $\gcd(k_1,K)=1$\textnormal{)} of Theorem \ref{thm:countingsigma2}:
\begin{align}
N_{\mathfrak{P}}^{\sigma_2}(1,x)+3 &= \frac{3x^{\log_3{2}}}{2\log3} \sum_{z\in\Z}\frac{x^{\frac{2\pi jz}{\log3}}}{\log_3{2}+ \frac{2\pi jz}{\log3}} \notag \\
&= x^{\log_3{2}} G_1(\log_3{x}) \notag \\
&= x^{f^{\sigma_2}_{\mathfrak{P}}(1)} G_1(\log_3{x}),
\end{align}
and \textnormal{(}for $\alpha=k_1/K$, as in Case {\bf 2}\textnormal{)}
\begin{align}
N_{\mathfrak{P}}^{\sigma_2}(\alpha,x)+\frac{2^{k_1-1}}{2^{k_1}-1}&=\frac{x^{k_1\log_{3^K}{2}}}{2\log3^K} \sum_{z\in\Z}\frac{x^{\frac{2\pi jz}{K\log3}}}{k_1\log_{3^K}{2}+ \frac{2\pi jz}{K\log3}} \notag \\
&= x^{k_1\log_{3^K}{2}} G_{\alpha}(\log_{3^K}{x}) \notag \\
&= x^{f^{\sigma_2}_{\mathfrak{P}}(\alpha)} G_{\alpha}(\log_{3^K}{x}),
\end{align}
where $G_1$ \textnormal{(}in Case {\bf 1}\textnormal{)} or $G_{\alpha}$ \textnormal{(}in Case {\bf 2}\textnormal{)} is the 1-periodic function given by the conditionally convergent Fourier series
\begin{equation}
G_1(u):= \frac{3}{2\log3}\sum_{z\in\Z}\frac{e^{2\pi jzu}}{\log_3{2}+ \frac{2\pi jz}{\log3}}, \hs u \in \R,
\end{equation}
and \textnormal{(}for $\alpha=k_1/K$, as in Case {\bf 2}\textnormal{)}
\begin{equation}
G_{\alpha}(u):= \frac{1}{2\log3^K}\sum_{z\in\Z}\frac{e^{2\pi jzu}}{\log_{3^K}{2}+ \frac{2\pi jz}{\log{3^K}}}, \hs u \in \R.
\end{equation}
\end{cor}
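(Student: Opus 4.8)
The plan is to derive Corollary \ref{cor:sigma2} as a purely algebraic rewriting of the two exact formulas already established in Theorem \ref{thm:countingsigma2}, recasting the finite residue sum over the complex dimensions $\mathcal{D}_\alpha := \mathcal{D}_{\mathfrak{P}}^{\sigma_2}(\alpha,\C)$ as a multiplicatively periodic (``log-periodic'') expression, namely $x^{f^{\sigma_2}_{\mathfrak{P}}(\alpha)}$ times a genuinely $1$-periodic function. No new analytic input is needed beyond what Theorem \ref{thm:countingsigma2} supplies.

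First, in each case I would move the additive constant appearing on the right-hand side of the explicit formula to the left, so that the resulting left-hand sides read $N_{\mathfrak{P}}^{\sigma_2}(1,x)+3$ in Case {\bf 1} and $N_{\mathfrak{P}}^{\sigma_2}(k_1/K,x)+\frac{2^{k_1-1}}{2^{k_1}-1}$ in Case {\bf 2} (using $\frac{2^{k_1-1}}{1-2^{k_1}}=-\frac{2^{k_1-1}}{2^{k_1}-1}$). This isolates precisely the residue sum $\frac{1}{2\log 3^K}\sum_{\omega\in\mathcal{D}_\alpha}x^\omega/\omega$ in Case {\bf 2}, and $\frac{3}{2\log3}\sum_{\omega}x^\omega/\omega$ when $\alpha=1$.

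Next I would insert the explicit description of the complex dimensions from Eq.~\eqref{eq:cdssigma2}, writing each $\omega\in\mathcal{D}_\alpha$ as $\omega = \frac{k_1}{K}\log_3 2 + \frac{2\pi jz}{K\log3}$ with $z\in\Z$, and factor the common term $x^{(k_1/K)\log_3 2}$ out of $x^\omega = x^{(k_1/K)\log_3 2}\,x^{2\pi jz/(K\log3)}$. Since this factor is independent of $z$, the factoring is legitimate even though the series converges only conditionally (as the symmetric limit over $|z|\le m$ coming from Theorem \ref{thm:countingsigma2}). I would then identify $\frac{k_1}{K}\log_3 2 = k_1\log_{3^K}2 = f^{\sigma_2}_{\mathfrak{P}}(\alpha)$ by Eq.~\eqref{eq:absconsigma2}, which produces the claimed prefactor $x^{f^{\sigma_2}_{\mathfrak{P}}(\alpha)}$.

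Finally, setting $u:=\log_{3^K}x$ (or $u:=\log_3 x$ in Case {\bf 1}) and using $x^{2\pi jz/(K\log3)} = e^{2\pi jz\log_{3^K}x}=e^{2\pi jzu}$, the remaining sum becomes exactly the Fourier series defining $G_\alpha$ (resp. $G_1$). Its $1$-periodicity is immediate since $e^{2\pi jz(u+1)}=e^{2\pi jzu}$ for all $z\in\Z$, and its (conditional) convergence is inherited from that of the residue sum already guaranteed by Theorem \ref{thm:countingsigma2}. The only point requiring care is to carry the constant factors $\frac{3}{2\log3}$ and $\frac{1}{2\log 3^K}$ through correctly and to keep track of the residue values $\mathrm{res}(\zeta^{\sigma_2}_{\mathfrak{P}}(\alpha,s);\omega)$ recorded in that proof; since these residues are constant in $z$, no genuine obstacle arises and the corollary follows at once.
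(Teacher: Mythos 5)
Your proposal is correct and is essentially the paper's own argument: the paper deduces the corollary ``at once'' from Theorem \ref{thm:sigma2} together with Theorem \ref{thm:countingsigma2}, which is precisely your computation of moving the constant terms $\zeta^{\sigma_2}_{\mathfrak{P}}(\alpha,0)$ to the left, inserting the explicit complex dimensions of Eq.~\eqref{eq:cdssigma2}, factoring the $z$-independent term $x^{k_1\log_{3^K}2}=x^{f^{\sigma_2}_{\mathfrak{P}}(\alpha)}$ out of the (conditionally convergent, symmetric) residue sum, and recognizing the remainder as the $1$-periodic Fourier series $G_{\alpha}(\log_{3^K}x)$ (resp.\ $G_1(\log_3 x)$). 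There is no difference in approach, and your care about factoring a constant out of a conditionally convergent symmetric sum is exactly the one subtlety worth noting.
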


\ndnt Most of the key results of \S\ref{NLMS}, namely, Lemma \ref{lem:sigma2reg}, Theorem \ref{thm:sigma2} and Theorem \ref{thm:countingsigma2}, readily extend to cases involving slightly more general forms of the Cantor string. Such generalized results are stated here with less formality than in the previous sections and without proof since they follow those presented above, mutatis mutandis.

\begin{eg}[Generalization of the measure $\sigma_2$]\label{GenAtomic}
\textnormal{Let $m \in \N$ such that $m\geq 2$ and let $\lambda=(2m-1)^{-1}$. Using this $m$ and $\lambda$, construct the measure $\sigma$ in the way $\sigma_2$ is constructed. The multiplicity of the initial intervals is $m-1$ and each of these intervals has length $\lambda$. (We note that $m=2$ and hence $\lambda=1/3$ in the following development would allow us to recover the results from \S\ref{NLMS}.)}

\ndnt \textnormal{The sequence of partitions $\mathfrak{P}$ in this setting are partitions $\mathcal{P}_n$ which split the unit interval into disjoint subintervals of length $\lambda^n$. The nontrivial and finite regularity values $\alpha$ attained by $\sigma$ with respect to $\mathfrak{P}$ are exactly the same as those attained in the case of $\sigma_2$. That is,}
\begin{equation*}
\alpha(k_1,K) = \frac{\log(\lambda^{k_1})}{\log(\lambda^{K})} = \frac{k_1}{K},
\end{equation*}
\textnormal{and again we take $\alpha = k_1/K$ such that $k_1 \leq K$ with $k_1,K \in \mathbb{N}^*$ (and $\gcd(k_1,K) = 1$ when $k_1<K$). The sequence of $\alpha$-lengths $\mathcal{L}_{\mathfrak{P}}^{\sigma}(1)$ for $\alpha=k_1=K=1$ is given by}
\begin{equation*}
\mathcal{L}_{\mathfrak{P}}^{\sigma}(1) = \{ \lambda^{n} \mid \lambda^{n} \textnormal{ has multiplicity } \lambda^{-1}(m-1) m^{n-1}, n \in \N^*\}.
\end{equation*}
\textnormal{The sequence of $\alpha$-lengths $\mathcal{L}_{\mathfrak{P}}^{\sigma}(k_1/K)$ for $\alpha = k_1/K$ with $k_1<K$ is given by}
\begin{equation*}
\mathcal{L}_{\mathfrak{P}}^{\sigma}(k_1/K) = \{ \lambda^{Kn} \mid \lambda^{Kn} \textnormal{ has multiplicity } (m-1)m^{k_1n-1}, n \in \N^*\}.
\end{equation*}
\textnormal{Hence, the partition zeta functions are given by}
\begin{equation*}
\zeta^{\sigma}_{\mathfrak{P}}(1,s) = \frac{(m-1)\lambda^{-1}}{m}\sum_{n=1}^{\infty}(m \cdot \lambda^s)^n =
\frac{(m-1)\lambda^{s-1}}{1-m\lambda^s}
\end{equation*}
\textnormal{and}
\begin{equation*}
\zeta^{\sigma}_{\mathfrak{P}}(k_1/K,s) = \frac{(m-1)}{m} \sum_{n=1}^{\infty}(m^{k_1} \cdot \lambda^{Ks})^n =
\frac{(m-1)m^{k_1-1}\lambda^{Ks}}{1-m^{k_1}\lambda^{Ks}},
\end{equation*}
\textnormal{where  $s \in \C, k_1, K \in \N^*, k_1 < K,$ and $\gcd(k_1,K)=1$.}

\ndnt \textnormal{As a result, the abscissa of convergence function is given by}
\begin{equation*}
f^{\sigma}_{\mathfrak{P}}(\alpha) =
f^{\sigma}_{\mathfrak{P}}(k_1/K) = \frac{k_1}{K}\log_{\lambda^{-1}}{m},
\end{equation*}
\textnormal{where $\alpha = k_1/K$ such that $k_1 \leq K$ with $k_1,K \in \mathbb{N}^*$. Note that this equation also holds when $\alpha=k_1=K=1$. Furthermore, observe that $\log_{\lambda^{-1}}m>0$ since $\lambda^{-1}=2m-1$ and $m \geq 2$.}

\ndnt \textnormal{Moreover, the set of complex dimensions  $\mathcal{D}_{\mathfrak{P}}^{\sigma}(\alpha,\C)$ with respect to $\alpha=k_1/K$, where $k_1 \leq K$ with $k_1,K \in \mathbb{N}^*$, is}
\begin{equation*}
\mathcal{D}_{\mathfrak{P}}^{\sigma}(k_1/K,\C)=\left\{ \frac{k_1}{K}\log_{\lambda^{-1}}{m} + \frac{2 \pi j z}{K\log{\lambda^{-1}}} \mid z \in \Z \right\}.
\end{equation*}
\textnormal{Note that this equation also holds when $\alpha=k_1=K=1$.}

\ndnt \textnormal{It follows that the tapestry of complex dimensions $\mathcal{T}_{\mathfrak{P}}^{\sigma}$  is given by}
\begin{equation*}
\mathcal{T}_{\mathfrak{P}}^{\sigma} = \left\{ (\alpha,\omega) \mid \alpha = \frac{k_1}{K}, \omega \in 
\mathcal{D}_{\mathfrak{P}}^{\sigma}(k_1/K,\C), \textnormal{ where } k_1 \leq K, k_1,K \in \mathbb{N}^* \right\}.
\end{equation*}

\ndnt \textnormal{Finally, we leave it as an exercise for the interested reader to obtain the counterpart of Theorem \ref{thm:countingsigma2} (the explicit formula for the counting function of the $\alpha$-lengths), using the same line of reasoning as in the proof of that theorem.}
\end{eg}

\ndnt The following section concludes the paper with a brief description of natural questions and avenues of research provided by the approach to multifractal analysis via zeta functions adopted in this work.


\section{Conclusion}\label{conclusion}

The determination of the meromorphic continuation (or some other appropriate extension) of the partition
zeta functions will be addressed in the near future. Work in this direction has already begun by the second author and  D.~Essouabri in \cite{EssLap}. Upon a suitable change of variable, the results of such an investigation will provide the poles, and hence complex dimensions, for this family of self-similar multifractal measures. This leads naturally to the search for an understanding of multifractal objects in more general settings, specifically those with non-multiplicative construction and properties. In the long term, motivated by \cite{LLVR,LapRo1,Rock} and the theory of complex dimensions in \cite{LapvF1,LapvF4} (also with consideration of the work done by J.~L{\'e}vy~V{\'e}hel and F.~Mendivil in \cite{LVM}), one may wish to investigate the physical or geometric oscillations of multiplicative and non-multiplicative multifractal objects in geometric, spectral and dynamical settings, as was done with fractal strings by way of their complex dimensions. (See, for example, \cite{LapvF7,LapvF1,LapvF3,LapvF4,LapvF6}, along with the relevant references mentioned in the introduction.) This work, \cite{EssLap,LVM}, as well as the exposition of some of the aspects of \cite{LapRo1} given in \cite[\S 13.3]{LapvF6}, should provide a nice foundation for such a theory of complex dimensions for multifractals.

\ndnt A recent predecessor of this work is \cite{LLVR}, by M.~L.~Lapidus, J.~L{\'e}vy-V{\'e}hel and J.~A.~Rock, where certain Dirichlet series were introduced and used in order to study some geometric properties of fractal strings which are not accounted for in the
theory developed in \cite{LapvF1,LapvF4}. The intent of the
definition of the {\it multifractal zeta functions} from
\cite{LLVR,Rock}, however, was to extend the techniques used in the
theory of complex dimensions of fractal strings to multifractal
analysis in some way. An elaboration on the difficulties of using
these multifractal zeta functions to this end is provided in
\cite{Rock}, where the primary object study of this work, the
{\it partition zeta function}, is first introduced.

\begin{figure}
    \epsfysize=3.2cm\epsfbox{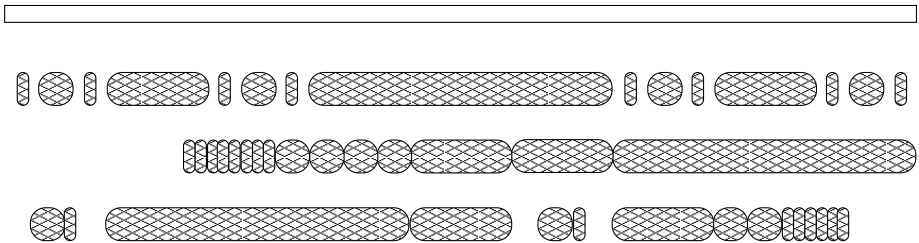}\label{threestrings}
    \caption{\textit{Approximation of three fractal strings with the same sequence of lengths $\mathcal{L}$, specifically the lengths of the Cantor string $\Omega_{CS}$. The boundary of each fractal string has the same Minkowski dimension, but the Hausdorff dimensions differ.}}
\end{figure}

\ndnt In \cite{Gibson,LLVR,LapRo1,Rock}, connections between the Hausdorff dimension of related fractal sets and the {\it topological zeta function} are established and examined. Specifically, in \cite{Gibson}, building upon some examples in \cite{LLVR}, certain collections of fractal strings $\Omega$ with identical sequence of lengths $\mathcal{L}$ are shown to have identical Minkowski dimension but varying Hausdorff dimension (see Fig.~12). The values for the Hausdorff dimension are computed, respectively, as the abscissa of convergence of the topological zeta function.

\ndnt Another interesting extension of our results could lie within the investigation of self-similar measures constructed with weighted IFSs which do not satisfy the OSC. Such an investigation began with an example\footnote{This example was presented by Scott Roby in the poster titled {\it Multifractal analysis of a measure when the open set condition is not satisfied} in the MAA Undergraduate Poster Session at the 2011 Joint Mathematics Meetings in New Orleans.} developed by Scott Roby which was based in no small part upon the results pertaining to second-order identities from \cite{STZ}. However, a full determination of the partition zeta functions in this setting has yet to be discovered.


\addcontentsline{toc}{section}{References}

\as

\textsc{Kate E. Ellis}\\
{\tiny \textsc{Department of Mathematics, California State
University, Stanislaus,\\ Turlock, CA} 95382 \textsc{USA} \par}

\textit{E-mail address:} \textbf{kellis1@csustan.edu}

\s

\textsc{Michel L. Lapidus}\\
{\tiny \textsc{Department of Mathematics, University of
California,\\
Riverside, CA} 92521-0135 \textsc{USA} \par}

\textit{E-mail address:} \textbf{lapidus@math.ucr.edu}

\s

\textsc{Michael C. Mackenzie}\\
{\tiny \textsc{Department of Mathematics, California State
University, Stanislaus,\\ Turlock, CA} 95382 \textsc{USA} \par}

\textit{E-mail address:} \textbf{michael.mackenzie@uconn.edu}

\s

\textsc{John A. Rock}\\
{\tiny \textsc{Department of Mathematics, California State
University, Stanislaus,\\ Turlock, CA} 95382 \textsc{USA} \par}

\textit{E-mail address:} \textbf{jrock@csustan.edu} 

\end{document}